\documentclass[reqno,12pt]{amsart}

\usepackage[margin=1in]{geometry}
\usepackage{amsmath,amssymb,amsthm,mathtools}
\usepackage[dvipsnames]{xcolor}
\usepackage{enumitem}
\usepackage{tikz}
\usepackage{graphicx}
\colorlet{LinkBlue}{MidnightBlue!99!black}
\usepackage[colorlinks=true,linkcolor=LinkBlue,citecolor=LinkBlue,urlcolor=LinkBlue,hypertexnames=false]{hyperref}
\usepackage{aliascnt}
\usepackage{environ}
\usepackage[nameinlink,noabbrev]{cleveref}
\usetikzlibrary{arrows.meta,calc,patterns}
\usepackage{autonum}
\usepackage{etoolbox}

\makeatletter
\patchcmd{\@maketitle}{\global\topskip42\p@}{\global\topskip14\p@}{}{}
\patchcmd{\@maketitle}{\dimen@34\p@}{\dimen@16\p@}{}{}
\makeatother

\newcommand{\R}{\mathbb{R}}
\newcommand{\N}{\mathbb{N}}

\newcommand{\supp}{\operatorname{supp}}
\newcommand{\C}{\mathbb{C}}
\newcommand{\epsi}{\varepsilon}

\newcommand{\lr}[1]{\left\langle #1 \right\rangle}
\newcommand{\norm}[1]{\lVert #1 \rVert}
\renewcommand{\Re}{\operatorname{Re}}
\renewcommand{\Im}{\operatorname{Im}}
\newcommand{\HH}{\mathcal{H}}

\newcommand{\tOmega}{\widetilde{\Omega}}

\newcommand{\EE}{\mathcal{E}}

\newcommand{\defeq}{\stackrel{\mathrm{def}}{=}}
\newcommand{\1}{\mathbf{1}}

\newcommand{\Di}{\mathcal{D}}
\newcommand{\te}{{\theta}}
\newcommand{\spec}{\operatorname{spec}}

\newcommand{\ran}{\operatorname{ran}}
\newcommand{\loc}{{\operatorname{loc}}}
\newcommand{\comp}{{\operatorname{comp}}}
\newcommand{\Id}{\operatorname{Id}}
\newcommand{\ove}[1]{\overline{#1}}

\usepackage[normalem]{ulem}

\newcommand{\add}[1]{{\color{blue}#1}}

\NewEnviron{equations}{%
\begin{equation}\begin{gathered}
  \BODY
\end{gathered}\end{equation}
}

\theoremstyle{plain}
\newtheorem{theorem}{Theorem}[section]
\newaliascnt{proposition}{theorem}
\newtheorem{proposition}[proposition]{Proposition}
\aliascntresetthe{proposition}
\newaliascnt{lemma}{theorem}
\newtheorem{lemma}[lemma]{Lemma}
\aliascntresetthe{lemma}
\newaliascnt{corollary}{theorem}
\newtheorem{corollary}[corollary]{Corollary}
\aliascntresetthe{corollary}
\theoremstyle{definition}
\newaliascnt{definition}{theorem}
\newtheorem{definition}[definition]{Definition}
\aliascntresetthe{definition}
\newaliascnt{example}{theorem}

\aliascntresetthe{example}
\newaliascnt{assumption}{theorem}

\aliascntresetthe{assumption}
\newaliascnt{assignment}{theorem}

\aliascntresetthe{assignment}
\theoremstyle{remark}
\newaliascnt{remark}{theorem}
\newtheorem{remark}[remark]{Remark}
\aliascntresetthe{remark}

\crefname{theorem}{theorem}{theorems}
\Crefname{theorem}{Theorem}{Theorems}
\crefname{proposition}{proposition}{propositions}
\Crefname{proposition}{Proposition}{Propositions}
\crefname{lemma}{lemma}{lemmas}
\Crefname{lemma}{Lemma}{Lemmas}
\crefname{corollary}{corollary}{corollaries}
\Crefname{corollary}{Corollary}{Corollaries}
\crefname{definition}{definition}{definitions}
\Crefname{definition}{Definition}{Definitions}
\crefname{example}{example}{examples}
\Crefname{example}{Example}{Examples}
\crefname{remark}{remark}{remarks}
\Crefname{remark}{Remark}{Remarks}
\crefname{assumption}{assumption}{assumptions}
\Crefname{assumption}{Assumption}{Assumptions}
\crefname{assignment}{assignment}{assignments}
\Crefname{assignment}{Assignment}{Assignments}
\crefname{equation}{equation}{equations}
\Crefname{equation}{Equation}{Equations}

\numberwithin{equation}{section}

\makeatletter
\def\l@subsection{\@tocline{2}{0pt}{3pc}{5pc}{}}
\makeatother

\title{Reflectionless edge states in Dirac models of topological insulators}
\author{Alexis Drouot}
\address[Alexis Drouot]{University of Washington, Seattle, USA.}
\email{adrouot@uw.edu}

\begin{document}

\begin{abstract}
We study an effective model of topological insulators joined along a bent interface. The Hamiltonian is a Dirac operator  $\Di = D_1\sigma_1 + D_2\sigma_2 + \kappa(x) \sigma_3$ on $L^2(\R^2,\C^2)$, where $\kappa(x)$ represents a domain wall with a corner: while initially straight, it undergoes a bend when passing from one side of $\R^2$ to the other. For most energies in a spectral window centered at $0$, we construct a reflectionless edge state of $\Di$: a distorted plane wave with transmission coefficient of modulus $1$. This implies that $\Di$ admits modes confined near $\kappa^{-1}(\{0\})$, that travel through the bend with no loss neither to the bulk nor to reflection. Our results explain experimental observations from the physics literature -- see e.g. the photonics experiments in \cite{RechtsmanEtAl13,SusstrunkHuber15}. We confirm them with various numerical simulations.
\end{abstract}

\maketitle

\section{Introduction}\label{sec:intro}

Topological insulators are striking materials that do not conduct in their interior, but support currents along edges or interfaces separating distinct topological phases. These currents are remarkable in two respects: they are unidirectional, travelling along the interface in one direction only; and they are robust, in that they are not obstructed by imperfections of the medium. Realizations range from quantum Hall systems and gapped honeycomb structures such as graphene, to their photonic, acoustic and mechanical analogues, where interface transport is now routinely observed.

\begin{figure}[b]
\centering
\includegraphics[width=\textwidth]{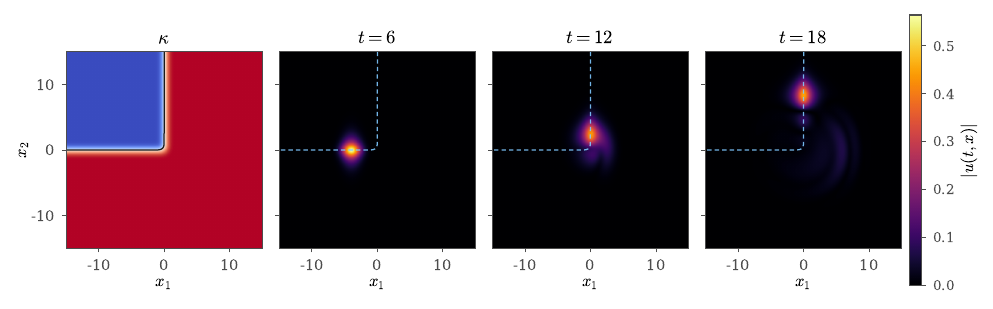}
\caption{The Dirac evolution $\left(D_t + \Di\right)u = 0$ with a corner-like domain wall. There is no visible mass reflected by the corner. See \Cref{sec:numerics2} for further results, and \url{https://sites.google.com/view/alexis-drouot/numerics} for movies of the simulations. }
\label{fig:corner}
\end{figure}

Graphene and its honeycomb analogues provide key backgrounds for the study of topological phases of matter. Near an interface, and after a suitable rescaling, an effective Hamiltonian is a Dirac operator with a domain wall:
\begin{equation}\label{eq:intro-dirac}
    \Di = D_1\sigma_1 + D_2\sigma_2 + \kappa(x)\,\sigma_3, \qquad D_j \defeq -i\partial_{x_j}.
\end{equation}
The emergence of $\Di$ from photonic graphene-based structures was first observed by Haldane and Raghu \cite{HaldaneRaghu08,RaghuHaldane08}, and investigated mathematically in \cite{FLW16a,FLW16b,Drouot19a,Drouot19b,DrouotWeinstein20,HHZ20,AW26}. In \eqref{eq:intro-dirac}, $\sigma_1,\sigma_2,\sigma_3$ are the Pauli matrices and $\kappa$ is a real-valued function whose sign records the topological phase of the medium at $x$; see \cite{Bal19}. The interface is the set where $\kappa$ changes sign. When it is the horizontal line $\R e_1$ -- that is, when $\kappa(x) = k(x_2)$ with $k(\mp\infty) = \pm 1$ -- the operator \eqref{eq:intro-dirac} admits an explicit family of edge states,
\begin{equation}\label{eq:intro-edge}
    \psi_\lambda(x) = e^{i\lambda x_1}\,\phi(x_2), \qquad \Di\,\psi_\lambda = \lambda\,\psi_\lambda, \qquad \lambda \in \R,
\end{equation}
where $\phi$ decays exponentially with $|x_2|$. Such functions are plane waves along the interface, confined transversally to it. Their dispersion relation is linear, so they all move in the same direction at constant speed; superposing them produces wavepackets $a(x_1-t)\,\phi(x_2)$ that glide along the interface without reflection.

The main goal of this paper is to prove that this reflectionless transport subsists even when the domain wall undergoes a bend. This provides a mathematical explanation for the reflectionless transport observed along interfaces between topological structures in photonics, acoustics, electromagnetics and meta-materials experiments, see \Cref{fig:experiments} and e.g. \cite{WangEtAl09,RechtsmanEtAl13,SusstrunkHuber15,LuEtAl17,OrazbayevFleury19,TorresEtAl24}. Namely, we construct reflectionless edge states: distorted plane waves for \eqref{eq:intro-bent-dirac} which remain confined near $\{\kappa = 0\}$, and whose reflection coefficient vanishes. At the dynamical level, these waves are associated to solutions of the Dirac equation that fully transmit through the bend. We support our results with numerical simulations. The main technical tools in our analysis are weighted resolvent estimates, current identities, and scattering theory.

\begin{figure}[t]
\centering
\begin{minipage}[c]{0.47\textwidth}
    \centering
    \includegraphics[width=\linewidth]{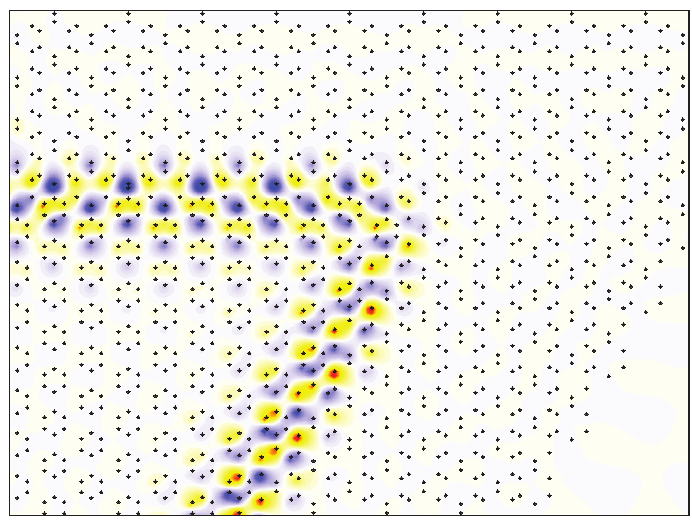}
\end{minipage}\hfill
\begin{minipage}[c]{0.37\textwidth}
    \centering
    \includegraphics[width=\linewidth]{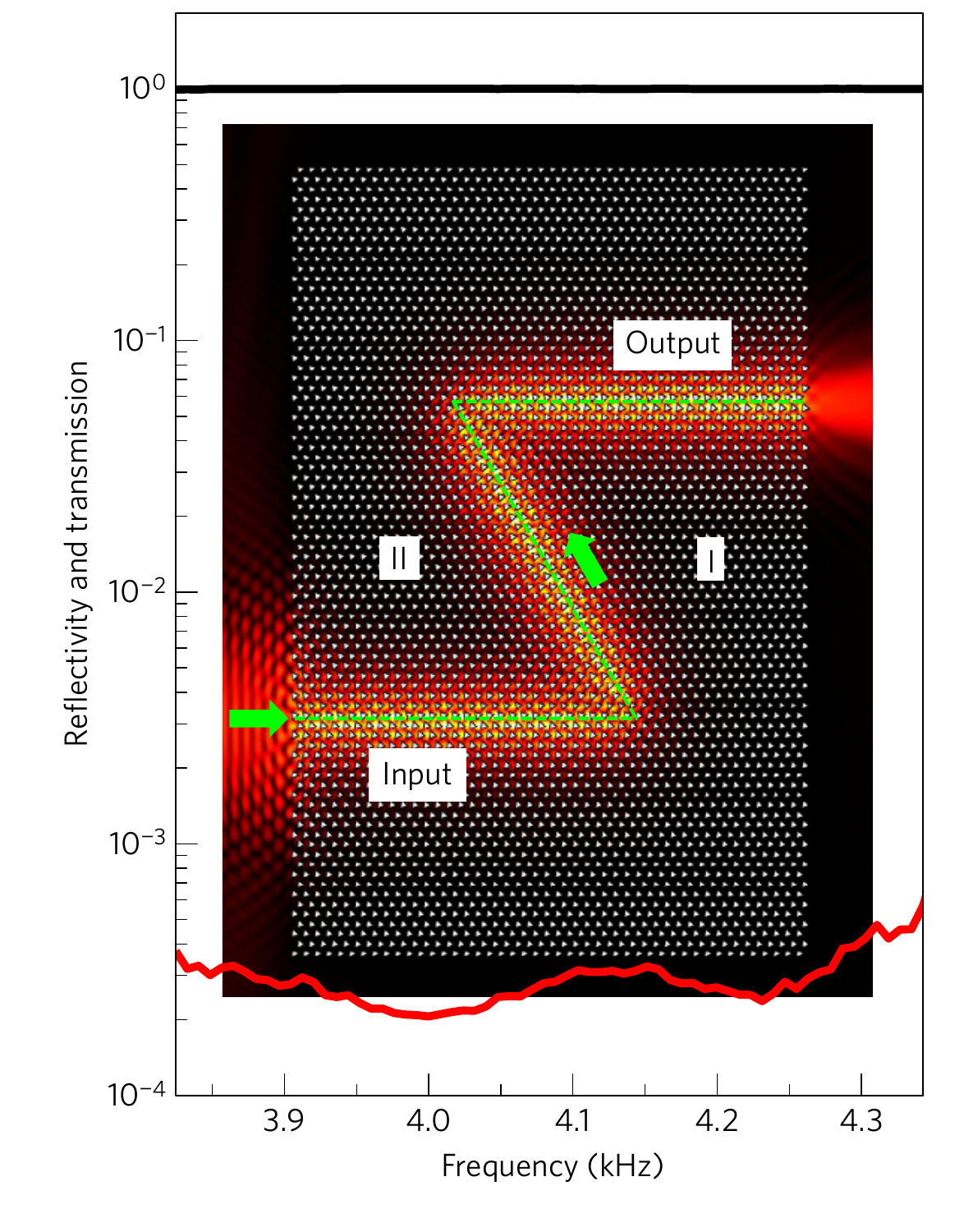}
\end{minipage}
\caption{Edge states experimentally observed in two metamaterials. Left: profile of an edge mode in a graphene-based metamaterial with a $60^\circ$ bend, reproduced from \cite[Figure 3C]{OrazbayevFleury19}. Right: edge mode in a sonic crystal whose interface makes two sharp turns, reproduced from \cite[Figure 4a]{LuEtAl17}, with its transmission (black curve on the top, $\sim 1$) and reflection (red curve on the bottom, $< 10^{-3}$) coefficients.}
\label{fig:experiments}
\end{figure}

\subsection{The model}\label{subsec:intro-model} In this work, we study propagation phenomena for the bent Dirac operator
\begin{equation}\label{eq:intro-bent-dirac}
    \Di \defeq D_1\sigma_1 + D_2\sigma_2 + \kappa(x)\,\sigma_3
\end{equation}
acting on $L^2 = L^2(\R^2,\C^2)$ with domain $H^1 = H^1(\R^2,\C^2)$. In \eqref{eq:intro-bent-dirac}, the matrices $\sigma_1, \sigma_2, \sigma_3$ are the three Pauli matrices
\begin{equation}\label{eq:intro-pauli}
    \sigma_1 \defeq \begin{bmatrix} 0 & 1 \\ 1 & 0 \end{bmatrix}, \qquad
    \sigma_2 \defeq \begin{bmatrix} 0 & -i \\ i & 0 \end{bmatrix}, \qquad
    \sigma_3 \defeq \begin{bmatrix} 1 & 0 \\ 0 & -1 \end{bmatrix}.
\end{equation}
The function $\kappa \in L^\infty(\R^2,\R)$ represents a bent domain wall, see \Cref{fig:intro-bent}. Specifically, for some $\theta_+ \in (-\pi,\pi)$ and some $R > 0$, it satisfies outside of the ball of radius $R$:
\begin{equations}
    \label{eq:intro-bent-minus}
    \kappa(x) = k_-(x \cdot e_-^\perp), \qquad x \in X_- \defeq \left\{ \arg (x) - \dfrac{\theta_+}{2} \in \left[\dfrac{\pi}{2}, \dfrac{3\pi}{2} \right] \right\}, \qquad e_- \defeq \begin{bmatrix} 1 \\ 0\end{bmatrix}
\\
    \kappa(x) = k_+(x \cdot e_+^\perp), \qquad x \in X_+ \defeq \left\{ \arg (x) - \dfrac{\theta_+}{2} \in \left(\dfrac{-\pi}{2}, \dfrac{\pi}{2}\right) \right\}, \qquad e_+ \defeq \begin{bmatrix} \cos \theta_+ \\ \sin \theta_+\end{bmatrix},
\end{equations}
where $k_\pm \in L^\infty(\R,\R)$ are two functions such that:
\begin{equation}\label{eq:intro-profiles}
    k_\pm(s) = 1 \ \text{ for } s \le -1, \qquad k_\pm(s) = -1 \ \text{ for } s \ge 1 .
\end{equation}

\begin{figure}[ht]
\centering
\begin{minipage}[c]{0.54\textwidth}
    \centering
    \includegraphics[width=\linewidth]{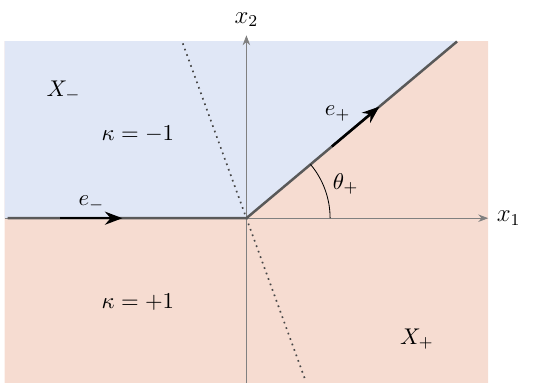}
\end{minipage}\hfill
\begin{minipage}[c]{0.43\textwidth}
\caption{An example of a bent domain wall $\kappa$: it is $-1$ in the blue sector and $+1$ in the red sector. The dotted line separates the two conical regions $X_-$ and $X_+$.}
\label{fig:intro-bent}
\end{minipage}
\end{figure}

Because of the asymptotics \eqref{eq:intro-bent-minus}-\eqref{eq:intro-profiles}, the operator $\Di$ models the junction of two distinct topological phases, represented by
\begin{equation}\label{eq:intro-phases}
    D_1\sigma_1 + D_2\sigma_2 + \sigma_3 \qquad\text{and}\qquad D_1\sigma_1 + D_2\sigma_2 - \sigma_3 ,
\end{equation}
along the (potentially blurred) interface $\R_- e_1 \cup \R_+ e_+$. The two operators \eqref{eq:intro-phases} have a spectral gap $(-1,1)$, and opposite topological indices below the gap \cite{Bal19,Bal22}.\footnote{This also explains why we must require $\theta_+ \neq \pm \pi$: in this case, $\Di$ would be a relatively compact perturbation of $\Di_\pm$. In particular, it would have the essential spectral gap $(-1,1)$, so it cannot have distorted plane waves with energy in $(-1,1)$.} Because of the bulk-edge correspondence, one expects unidirectional currents to emerge along this interface. These are the main objects of study in this work.

\subsection{Main result} 

In the half planes $X_\pm$ of \eqref{eq:intro-bent-minus}, the spatial asymptotics of $\Di$ are described by the operators
\begin{equation}\label{eq:intro-models}
    \Di_\pm \defeq D_1\sigma_1 + D_2\sigma_2 + \kappa_\pm(x)\,\sigma_3, \qquad \kappa_\pm(x) \defeq k_\pm\left(x \cdot e_\pm^\perp\right).
\end{equation}
The operators $\Di_\pm$ admit exact edge states: solutions to $\Di_\pm \psi_{\lambda,\pm} = \lambda\,\psi_{\lambda,\pm}$ propagating along their respective interfaces $\R e_\pm$ and localized transversally to them:
\begin{equation}\label{eq:intro-edge-pm}
    \psi_{\lambda,\pm}(x) \defeq e^{i\lambda\,x\cdot e_\pm}\,\phi_\pm\left(x\cdot e_\pm^\perp\right),
    \qquad
    \phi_\pm(s) \defeq c_\pm \begin{pmatrix} e^{-i\theta_\pm/2} \\ e^{i\theta_\pm/2}\end{pmatrix} e^{\int_0^s k_\pm},
    \qquad \lambda \in \R .
\end{equation}
Here $\theta_- \defeq 0$, and $c_\pm > 0$ normalizes $\phi_\pm$ in $L^2(\R,\C^2)$; the profiles $\phi_\pm$ decay exponentially because of \eqref{eq:intro-profiles}. These represent reflectionless plane waves propagating in the direction of $e_\pm$, confined near $\R e_\pm$.

In this work, we investigate whether Dirac operators with bent domain walls admit generalized eigenstates with similar properties. Formally, our main result states that for most energies in a window $(-E,E)$ (where $E$ has some spectral interpretation), $\Di$ admits \textit{reflectionless} distorted plane waves. It explains the robust, unidirectional transport that has been observed in various hybrid topological structures: see for instance the photonics, electromagnetics and meta-materials experiments realized in \cite{WangEtAl09,RechtsmanEtAl13,SusstrunkHuber15,LuEtAl17,OrazbayevFleury19,TorresEtAl24}.

To state our main result, we introduce:
\begin{itemize}
    \item The space $\EE_\nu \defeq e^{-\nu|x|}L^2$;
    \item A cutoff function $\chi$ such that $\chi(s) = 0$ for $s \leq -1$ and $\chi(s) = 1$ for $s \geq 1$;
    \item The quantities:
\begin{equations}\label{eq:intro-E}
    E \defeq \min\left(E_-,E_+\right), \qquad E_\pm \defeq \min\Big(\spec\big(\Di_\perp^\pm\big)\cap (0,+\infty) \Big),
    \qquad
    \Di_\perp^\pm \defeq D_2\sigma_2 + k_\pm(x_2)\,\sigma_3
\end{equations}
\end{itemize}

\begin{theorem}\label{thm:intro} There exists a set $S \subset (-E,E)$, locally finite in $(-E,E)$, such that for $\lambda \in (-E,E) \setminus S$, the equation $(\Di-\lambda) u = 0$ has a solution $u \in H^1_\loc$ of the form
\begin{equation}\label{eq:intro-main}
    u(x) \; = \; \chi(-x \cdot e_-)\psi_{\lambda,-}(x) \; + \; t(\lambda)\cdot \chi(x \cdot e_+)\psi_{\lambda,+}(x) \; + \; \EE_\nu,
\end{equation}
where $\nu > 0$ and $t(\lambda) \in \C$ has modulus one. 
\end{theorem}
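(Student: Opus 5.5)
The plan is to write $u = u_0 + w$, where $u_0 \defeq \chi(-x\cdot e_-)\psi_{\lambda,-} + t\,\chi(x\cdot e_+)\psi_{\lambda,+}$ is an ansatz, and to solve for the correction $w$ with a weighted resolvent. We first compute $f \defeq (\Di-\lambda)u_0$. Two facts make this tractable. In $X_\pm$ outside $B(0,R)$ we have $\Di = \Di_\pm$, and $\psi_{\lambda,\pm}$ solves $\Di_\pm\psi_{\lambda,\pm} = \lambda\psi_{\lambda,\pm}$. Moreover, $\psi_{\lambda,\pm}$ is exponentially localized near $\R e_\pm$, which lies deep inside $X_\pm$ away from the origin. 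Hence $f$ consists of commutator terms $[\Di,\chi]\psi$, supported near the origin, plus exponentially small contributions from the region where $\psi_{\lambda,\pm}$ leaks into the wrong sector. So $f \in \EE_{-\nu'}$, i.e. $f$ is exponentially decaying, for some $\nu'>0$.

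The correction is then $w = -R(\lambda+i0)f$. The core of the argument is a limiting absorption principle for $\Di$ on $(-E,E)$, set in exponentially weighted spaces: the map $\lambda\mapsto (\Di-\lambda)^{-1}$, from $e^{-\nu|x|}L^2$ to $e^{\nu|x|}H^1$, continues meromorphically from the upper half plane across $(-E,E)$, and its poles form the locally finite set $S$. I would build it by gluing models. In each sector the operator $\Di_\pm$ separates in the coordinates $(x\cdot e_\pm, x\cdot e_\pm^\perp)$. For $|\lambda|<E$, the only propagating transverse channel of $\Di_\perp^\pm$ is the zero mode $\phi_\pm$, carrying dispersion $\xi=\lambda$; all other transverse modes are evanescent. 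This yields explicit model resolvents: their kernels are the evanescent parts plus a one-dimensional outgoing channel. Away from the interface the bulk $\pm\sigma_3$ has a gap $(-1,1)\supset(-E,E)$, which gives exponential decay there. Pasting these parametrices with a partition of unity produces $\Di-\lambda = (\text{model}) + K(\lambda)$, with $K(\lambda)$ compact on the weighted spaces and analytic in $\lambda$. Analytic Fredholm theory then gives meromorphy with poles in a discrete set $S$, provided the family is invertible somewhere, for instance at $\lambda+i$.

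Next comes the structure of the outgoing solution. By the model analysis, $w = R(\lambda+i0)f$ behaves at infinity like $r\,\chi(-x\cdot e_-)\psi_{\lambda,-}\cdot(\text{propagating in the } -e_- \text{ direction}) + \tau\,\chi(x\cdot e_+)\psi_{\lambda,+}$ plus $\EE_\nu$. The key structural point is that the edge channel in $X_-$ is unidirectional. The dispersion of the zero mode is $\xi=\lambda$ with group velocity $+1$, so no edge channel propagates toward $-\infty$ along $e_-$. An outgoing solution in $X_-$ therefore has no propagating component: it decays there. Consequently $u = u_0 - R(\lambda+i0)f$ has exactly the form \eqref{eq:intro-main}, with $t(\lambda)$ equal to the coefficient produced in $X_+$ (taking $t=0$ in the ansatz).

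It remains to show $|t|=1$, which follows from a current identity. Apply Green's formula for $\Di$ to $u$ on a large domain, for instance $B(0,\rho)$. Because $\lambda$ is real and $\Di$ is symmetric, the boundary flux $\int_{\partial B}\lr{\sigma\cdot n\,u,u}$ vanishes. The flux of an edge state across a transverse segment equals its group velocity times $\norm{\phi_\pm}^2=1$. The incoming wave contributes $-1$, the outgoing one contributes $|t|^2$, and the $\EE_\nu$ remainders contribute $o(1)$ as $\rho\to\infty$. Hence $|t|=1$.

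I expect the main obstacle to be the limiting absorption step, specifically the control of the corner region and of the sector boundaries where the two model geometries meet. There one must verify that gluing errors are genuinely compact in the exponentially weighted spaces and that no resonant bulk channel appears. This works because $(-E,E)$ lies strictly inside the bulk gap, and the weights $\nu$ must be chosen small relative to $E-|\lambda|$ and to $1-E$.
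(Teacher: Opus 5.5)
Your architecture is the paper's. You glue the tilted model resolvents as $Q(\lambda)=\chi_+R_+(\lambda)+\chi_-R_-(\lambda)$ and continue $R(\lambda)=Q(\lambda)(\Id+K(\lambda))^{-1}$ by analytic Fredholm theory on $\EE_\gamma$. You read off the absence of reflection from the one-sided integral $\int_{-\infty}^{x\cdot e_-}$ in the edge part of $R_-(\lambda)$, and you get $|t|=1$ from $\operatorname{div}J(u,u)=0$.

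The gap is the step you defer as the ``main obstacle''. Compactness of the gluing error does not follow from the bulk gap, and it is the substantive part of the proof. The error contains $K_2(\lambda)=W\big(R_+(\lambda)-R_-(\lambda)\big)$ with $W=\sigma\cdot D_x\chi_+$, supported on the non-compact strip $|x\cdot e_0|\le 1$.
\begin{itemize}
\item The edge parts $WA_\pm$ are handled by transversality: $\phi_\pm$ gives decay in $x\cdot e_\pm^\perp$, $W$ gives decay in $x\cdot e_0$, and $|x\cdot e_\pm^\perp|+|x\cdot e_0|\ge(1-|\sin\theta_0|)|x|$.
\item $WB_+$ and $WB_-$ are individually not compact on $\EE_\gamma$. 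Far out along the strip both models sit in the same bulk phase, so bumps of unit $\EE_\gamma$-norm sliding out there are not compressed. The Agmon bound even loses weight, $\EE_\gamma\to\EE_{\gamma/2}$.
\item One therefore needs a gain $K_2:\EE_\gamma\to\EE_\rho$ with $\rho>\gamma$, coming from the difference. You cannot just write $R_+-R_-=R_+ZR_-$ with $Z=(\kappa_--\kappa_+)\sigma_3$. In general $Z\neq0$ near $\R_+e_-$, where the edge part of $R_-(\lambda)f$ propagates without decay, so $ZR_-(\lambda)f\notin\EE_\gamma$.
\end{itemize}
The paper first peels off the edge projections, $B_+-B_-=B_+P_--P_+B_-+B_+ZB_-$. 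It then uses that $\supp Z$ lies near the double cone around $\pm e_0$ and so meets the strip in a bounded set. Finally it applies anisotropic Agmon estimates on $e^{-\epsi|x\cdot\zeta|}L^2$, which you would need to prove, to get the gain. This is also where the bend angle enters: the weight must satisfy $\gamma<(1-|\sin\theta_0|)\gamma(\lambda)$, and $\gamma<\cos\theta_0$ for the source. By contrast, $1-E$ is not a relevant constraint, and it can vanish.

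Three smaller points.
\begin{itemize}
\item Analytic Fredholm theory needs invertibility at a point of the connected domain where $K(\lambda)$ is compact on $\EE_\gamma$, i.e.\ in the thin strip $|\Im\lambda|<\gamma$. The point $\lambda+i$ is not there without further Combes--Thomas type estimates.
\item Injectivity at a point with $0<\Im\lambda<\gamma$ is not automatic: $(\Id+K(\lambda))f=0$ only yields $Q(\lambda)f=0$. The paper sets $h=R_+(\lambda)f-R_-(\lambda)f$ and notes $\chi_-h=R_+(\lambda)f$. This gives $\big(\Di_-+\chi_-(\kappa_+-\kappa_-)\sigma_3-\lambda\big)h=0$, a self-adjoint problem at non-real $\lambda$, so $h=0$ and then $f=0$.
\item Your ``by the model analysis'' step needs the right-parametrix factorization $R(\lambda)f=Q(\lambda)g$ with $g=(\Id+K(\lambda))^{-1}f\in\EE_\gamma$; the far field is then read from $Q(\lambda)g$. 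Accordingly, $S$ should be the set of real characteristic values of $\Id+K(\lambda)$, which can be larger than the set of poles of $R(\lambda)$.
\end{itemize}
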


It turns out that the function $u$ constructed in the proof of \Cref{thm:intro} is the unique \textit{distorted plane wave} of $\Di$ at frequency $\lambda$ -- see \Cref{sec:scattering}. We interpret $u$ as a reflectionless edge state of $\Di$. When $x \cdot e_- \ll -1$ -- i.e. $x_1 \ll -1$ -- $u$ looks like $\psi_{\lambda,-}$, i.e. it is a plane wave propagating in the direction of $e_-$, confined near $\R_- e_-$. After it passes the corner, i.e. when $x \cdot e_+ \gg 1$, $u$ becomes a plane wave propagating in the direction of $e_+$, confined near $\R_+ e_+$, \textit{perfectly transmitted:} $|t(\lambda)| = 1$. This is striking because one expects Dirac equations to behave much like wave equations, and have in general non-zero transmission \textit{and} reflection coefficients -- see e.g. \cite{K09} for one-dimensional Dirac systems. It also contrasts with waveguides, where both reflection and transmission occur. For waveguides, a transmission coefficient $|t(\lambda)| = 1$ has been dubbed invisibility \cite{BN13,BCN18,CN18,CP18}.

At the dynamical level, \Cref{thm:intro} means that wavepackets spectrally concentrated (with respect to $\Di$) near a generic energy $\lambda \in (-E,E)$ propagate through the corner without reflection. We refer to \Cref{sec:numerics2} for an informal explanation; \Cref{fig:corner} for preliminary numerical evidence and \Cref{subsec:numerics-results2} for further simulations.

Since $\left|t(\lambda)\right| = 1$, all the transmission information sits in the phase of $t(\lambda)$. The dynamical implication of this phase is a delay. Write $t(\lambda) = e^{i\vartheta(\lambda)}$ with $\vartheta$ real, and let $u_t$ solve $\left(D_t+\Di\right)u_t = 0$, with $u_0$ a wavepacket built from the incoming states $\psi_{\lambda,-}$ with spectral weight concentrated near some $\lambda_0 \in (-E,E)$. Before the corner, $u_t$ is centered on $x\cdot e_- = t$; a stationary phase computation in \eqref{eq:intro-main} places the transmitted packet on $x\cdot e_+ = t - \vartheta'(\lambda_0)$. While not reflecting, the corner is thus not totally invisible to an incoming wavepacket spectrally supported near energy $\lambda_0$: it induces a time shift by
\begin{equation}\label{eq:intro-delay}
    \vartheta'(\lambda_0) = -i\,\ove{t(\lambda_0)}\,\partial_\lambda t(\lambda_0) .
\end{equation}
From a general point of view, this time-shift is large when $\Di$ has resonances near $\lambda_0$, as can be seen from the Breit--Wigner formula \cite[Theorem 2.20]{DZ}. Other characteristics of the propagation at energy $\lambda_0$ can be read off $\vartheta(\lambda)$: for instance the dispersion scale along the interface $\vartheta''(\lambda_0)$.

\subsection{Strategy of proof}\label{subsec:intro-strategy} Our strategy is based on a key object from scattering theory: the meromorphic continuation of the resolvent $(\Di-\lambda)^{-1}$. 

In \Cref{sec:unperturbed} and \Cref{sec:resolvent},  we analyze the resolvent of the reference operator $\Di_*$ attached to a horizontal domain wall $\kappa_*(x) = k_*(x_2)$. The key object is the edge projection $P_*$ onto the space
\begin{equation}\label{eq:intro-edge-sector}
    \HH_* \defeq \left\{ a(x_1)\,\phi_*(x_2) \ : \ a \in L^2(\R) \right\}.
\end{equation}
This projection splits $L^2$ as $\HH_* \oplus \HH_*^\perp$. The operator $\Di_*$ on $\HH_*$ is unitarily equivalent to $D_1$ on $L^2(\R)$. On the other hand, its resolvent $(\Di_*-\lambda)^{-1}$ on $\HH_*^\perp$ is well-defined for $\lambda \in (-E_*,E_*)$, where
\begin{equation}
    E_* \defeq \min\Big(\spec\left(\Di_\perp\right)\cap(0,+\infty)\Big), \qquad \Di_\perp \defeq D_2\sigma_2 + k_*(x_2)\,\sigma_3.
\end{equation}
These observations produce an expansion of the full resolvent of $\Di_*$ on $L^2$ for energies in $(-E_*,E_*)$: for $f \in \EE_\gamma$, we have
\begin{equations}\label{eq:intro-lap}
    \lim_{\epsi \rightarrow 0^+} (\Di_*-\lambda-i\epsi)^{-1} f \; = \; i\left(\int_{-\infty}^{x_1} e^{i\lambda\left(x_1-s\right)}\,a_*(s)\,ds\right)\phi_*(x_2) \; + \; \EE_{\gamma/2} ,
    \\
    P_* f(x) = a_*(x_1)\,\phi_*(x_2) .
\end{equations}
The edge term carries all the long-range behavior: it is a plane wave in $x_1$, confined near the interface, and built from the values of $f$ to the \emph{left} of $x_1$ only. This is where unidirectionality enters the analysis. See \Cref{thm:resolvent-structure} for a more precise statement. In \Cref{sec:tilted} we convert the results of \Cref{sec:unperturbed}--\Cref{sec:resolvent} to tilted analogues $\Di_e$ of $\Di_*$ in the direction $e$.

In order to study the resolvent of $\Di$ itself, in \Cref{sec:bent}, we must face the following difficulty. The operator $\Di$ agrees with $\Di_-$ on $X_-$ and with $\Di_+$ on $X_+$, but it is a compact perturbation of neither: $\kappa - \kappa_\mp$ does not decay on $X_\pm$. We assemble a parametrix from both,
\begin{equation}\label{eq:intro-parametrix}
    Q(\lambda) \defeq \chi_-\,R_-(\lambda) \; + \; \chi_+\,R_+(\lambda), 
\end{equation}
where $\chi_\pm$ is a partition of unity subordinate to $X_\pm$. We show that $\left(\Di-\lambda\right)Q(\lambda) = \Id + K(\lambda)$ for a compact operator $K(\lambda)$. Analytic Fredholm theory \cite[Appendix C]{DZ} then continues
\begin{equation}\label{eq:intro-resolvent-factorization}
    R(\lambda) = Q(\lambda)\left(\Id + K(\lambda)\right)^{-1}
\end{equation}
meromorphically across $(-E,E)$; see \Cref{thm:mero}. Besides continuing the resolvent, \eqref{eq:intro-resolvent-factorization} yields spatial asymptotics of $R(\lambda) f$ from those of $R_\pm(\lambda)f$: see \Cref{lem:outgoing-far-field}. 

We then use the resolvent to construct solutions of
\begin{equation}\label{eq:intro-scattering}
    \left(\Di-\lambda\right)u = 0, \qquad u(x) \sim \psi_{\lambda,-}(x) \ \text{ as } \ x_1 \rightarrow -\infty .
\end{equation}
These take the form
\begin{equation}\label{eq:intro-scattering-state}
    u \; = \; \chi\left(-x\cdot e_-\right)\psi_{\lambda,-} \; + \; R(\lambda)f,
    \qquad
    f \; \defeq \; -\left(\Di-\lambda\right)\left(\chi\left(-x\cdot e_-\right)\psi_{\lambda,-}\right) ,
\end{equation}
where $f$ decays exponentially because $\psi_{\lambda,-}$ is confined near $\R e_-$; see \Cref{lem:source}.
The spatial asymptotics for $R(\lambda)$ imply that $u$ has no reflected component. To justify that all the mass of $u$ is transmitted through the corner or defect, we use a current conservation argument. This will complete the proof of \Cref{thm:intro}. 

\subsection{Relation to existing work}\label{subsec:intro-related} We drew our inspiration and motivation from a recent work from Ammari and Qiu \cite{AmmariQiu26}, who studied edge modes for a graphene-based hybrid structure with an interface bent by the angle $2\pi/3$. They prove that the interface modes persist at generic frequency in the bulk gap. Because the present work directly works with the effective Hamiltonian of such systems, we have obtained more precise results, valid for more general situations. We handle a bend angle that can take an arbitrary value in $(-\pi, \pi)$; and the domain wall may undergo an arbitrary deformation within a bounded region. Our main result, \Cref{thm:intro}, goes beyond producing a mode that survives the bend; it shows that it is fully transmitted across the defect. It would be interesting to investigate whether these conclusions subsist in the Ammari--Qiu setting.

A second body of work studies transport by a semiclassical analogue of the operator $\Di$ \cite{BalEtAl23,BBD24,Bal24,Drouot22,HXZ}. In \cite{BalEtAl23}, wavepackets are propagated along curved interfaces; \cite{BBD24} adds a magnetic field, under which they slow down and disperse; \cite{Bal24} treats dispersive alongside relativistic wavepackets, and analyzes turning points; and \cite{Drouot22} gives a phase-space reduction producing the speed and the profile of the traveling mode for a general class of such operators. Each of these constructions is valid for long yet finite times. The present work is complementary. Using instead a scattering approach and exploiting the asymptotically straight structure of the domain wall, it allows us to bypass the semiclassical result and obtain spectral -- global --  results. Because our results are not asymptotic (there is no small parameter), they are less precise than those of \cite{BalEtAl23,BBD24,Bal24,Drouot22}: for instance, they do not come with the propagation delay induced by the corner.  

The work closest to the present paper is due to Chen and Bal \cite{ChenBal25}. They establish a limiting absorption principle and a generalized eigenfunction expansion for the straight domain wall Dirac operator $\Di_*$, perturbed by a decaying potential. They identify the asymmetry of the resulting transport as a difference of transmission coefficients. The work \cite{ChenBal25} also features weighted resolvent estimates and current conservation arguments. In comparison, the present work also handles bends in the interface, for which anisotropic weighted estimates (see e.g. \Cref{prop:agmon}) and a gluing resolvent construction are necessary. 

\subsection{Further investigations}\label{subsec:intro-further} We propose potential projects on topics related to the present paper, for which the approach to \Cref{thm:intro} could provide some insights. 

One may consider junctions of more than two topological phases. The simplest example is a crossroad
\begin{equation}\label{eq:intro-crossroad}
    \kappa(x) = \operatorname{sgn}\left(x_1\right)\operatorname{sgn}\left(x_2\right).
\end{equation}
The interface is the union of the $x$ and $y$ axes; see the first panel of \Cref{fig:crossroad}. Heuristically, the $x$-axis carries incoming edge states and the $y$-axis carries outgoing edge states. Therefore, we expect that for generic $\lambda$ near $0$,  $\left(\Di-\lambda\right)u=0$ has solutions of the form
\begin{equation}\label{eq:intro-crossroad-state}
    u \; = \; \chi\left(-x_1\right)\psi_{\lambda,-} \; + \; t_\uparrow(\lambda)\cdot\chi\left(x_2\right)\psi_{\lambda,e_2} \; + \; t_\downarrow(\lambda)\cdot\chi\left(-x_2\right)\psi_{\lambda,-e_2} \; + \; \EE_\nu ,
\end{equation}
where $\psi_{\lambda,\pm e_2}$ are the edge states \eqref{eq:intro-edge-pm} attached to the branches $\R_\pm e_2$. Conservation of current should yield $\left|t_\uparrow(\lambda)\right|^2 + \left|t_\downarrow(\lambda)\right|^2 = 1$ in place of $\left|t(\lambda)\right| = 1$. See \Cref{fig:crossroad} for a numerical simulation.

\begin{figure}[t]
\centering
\includegraphics[width=\textwidth]{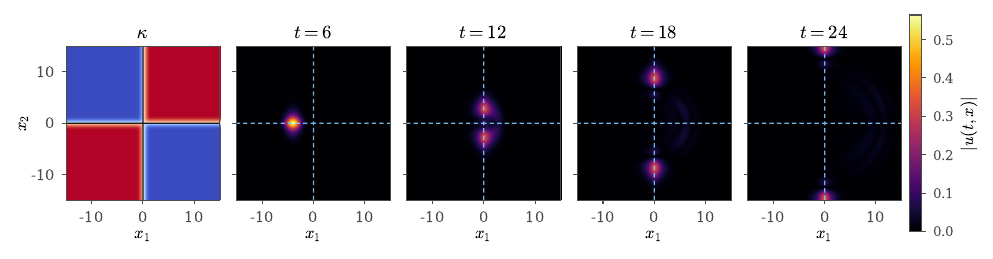}
\caption{The Dirac evolution $\left(D_t+\Di\right)u = 0$ for four alternating topological phases filling quadrants. First panel: the mass $\kappa$, equal to $+1$ in the red quadrants and $-1$ in the blue ones. Remaining panels: $\left|u(t,x)\right|$, with the interface dashed. The packet arrives along $\R_-e_1$ and scatters along two outgoing branches $\R_\pm e_2$. No mass leaves along $\R_+ e_1$ nor reflects along $\R_- e_1$. }
\label{fig:crossroad}
\end{figure}

\Cref{thm:intro} holds for every bend angle $\theta_+ \in (-\pi,\pi)$, with a transmission of modulus one however sharp the turn. When $\theta_+ \rightarrow \pm \pi$, many constants involved in the proof degenerate. This is expected as the operator $\Di$ formally converges to the gapped operator
\begin{equation}
    \sigma_1 D_1 + \sigma_2 D_2 \pm \sigma_3.
\end{equation}
It would be interesting to study the limit $\vartheta(\lambda)$ in the regime $\theta_+ \rightarrow \pm \pi$. In the context of plane waveguides, the sharp-bend limit has been investigated in \cite{DR12}; there the authors show that the number of eigenvalues of the Dirichlet Laplacian blows up as the bend angle goes to $\pm \pi$. 

Another interesting scenario consists of having one of the two topological phases within a large, bounded region: take for instance
\begin{equation}\label{eq:intro-circle}
    \kappa = 1 - 2\cdot\1_{B_R(0)}, \qquad R \gg 1.
\end{equation}
The operator $\Di$ is then a compact perturbation of $\sigma_1 D_1 + \sigma_2 D_2 + \sigma_3$, which has spectral gap $(-1,1)$. Therefore, the spectrum of $\Di$ in $(-1,1)$ consists only of eigenvalues. However, we expect that as $R \rightarrow \infty$, the spectrum of $\Di$ fills $(-1,1)$ -- see e.g. \cite[Theorem 2]{DrouotZhu24} for a related result on discrete systems. Moreover, the spectrum of $\Di$ in $(-1,1)$ should be associated to some form of transport for large $R$, despite it being made of eigenvalues -- see \Cref{fig:circle}. Therefore, two questions have emerged:
\begin{itemize}
    \item What is the distribution of eigenvalues of $\Di$ in $(-1,1)$?
    \item How do eigenvalues of $\Di$ in $(-1,1)$ contribute to dynamics?
\end{itemize} 
We will respond to these questions in a future work.

\begin{figure}[!ht]
\centering
\includegraphics[width=\textwidth]{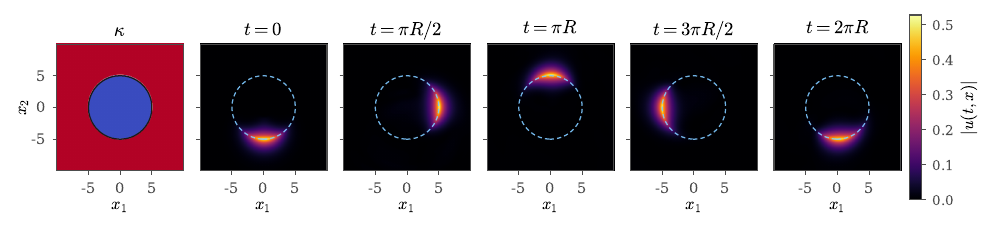}
\caption{The Dirac evolution $\left(D_t+\Di\right)u = 0$ for the circle domain wall \eqref{eq:intro-circle}, with $R = 5$. First panel: the mass $\kappa$, equal to $-1$ in the blue disk $B_R(0)$ and $+1$ in the red exterior. Remaining panels: $\left|u(t,x)\right|$ at different times. The spectrum of $\Di$ in $(-1,1)$ is discrete, nonetheless transport effectively arises.}
\label{fig:circle}
\end{figure}

Some recent papers have studied the bulk-edge correspondence for Dirac operators on a halfplane \cite{BarbarouxEtAl24,BarbarouxEtAl25}, with non-zigzag selfadjoint boundary conditions -- as well as its violation in the zigzag case. It would be interesting to adapt the techniques developed in \cite{DrouotZhu26} to extend the works \cite{BarbarouxEtAl24,BarbarouxEtAl25} to non-straight edges, and the techniques developed in the present work to construct reflectionless edge states for Dirac operators with boundary.

\subsection{Notations}\label{subsec:intro-notations} We will use the following notations:
\begin{itemize}
    \item Function spaces over $\R^2$ with values in $\C^2$ are written $L^2$, $H^1$, $L^2_\loc$, $H^1_\loc$ and $L^2_\comp$. Spaces over $\R$, such as $L^2(\R,\C^2)$, are spelled out.
    \item $\EE_\gamma$ denotes the space $e^{-\gamma|x|}L^2$, normed by $\norm{u}_{\EE_\gamma} \defeq \norm{e^{\gamma|x|}u}_{L^2}$. We write $f = g + \EE_\gamma$ if $f-g \in \EE_\gamma$.
    \item $D_j \defeq -i\partial_{x_j}$ and $D_t \defeq -i\partial_t$.
    \item We write $\1_S$ for the indicator function of a set $S$.
    \item $B_R(0)$ is the ball of radius $R$ centered at $0$.
    \item $\sigma_1, \sigma_2, \sigma_3$ are the Pauli matrices \eqref{eq:intro-pauli}; we set $\sigma \defeq \left(\sigma_1,\sigma_2\right)$ and $\sigma\cdot e \defeq \cos\theta\,\sigma_1 + \sin\theta\,\sigma_2$ for a unit vector $e = \left(\cos\theta,\sin\theta\right)$.
    \item Given a unit vector $e \in \R^2$, $e^\perp$ is its counter-clockwise rotation by $\pi/2$ and $O_e \in SO(2)$ is the rotation mapping $e_1$ to $e$.
    \item $e_- = e_1$ and $e_+$ are the incoming and outgoing branches of the interface, see \Cref{fig:intro-bent}; $\theta_- = 0$ and $\theta_+ \in (-\pi,\pi)$ are the corresponding angles with the $x$-axis (in particular, $\te_-=0$).
    \item $\theta_0 \defeq \tfrac{\theta_+}{2}$ and $e_0$ is the unit vector with angle $\te_0$ with the $x$-axis.
    \item $\kappa$ is a bent domain wall with asymptotics $\kappa_\pm(x) = k_\pm\left(x\cdot e_\pm^\perp\right)$, see \Cref{def:bent}.
    \item  $\Di_*$ is the unperturbed operator \eqref{eq:dirac} and $\Di_e$ is its tilt \eqref{eq:dirac-e}. $\Di_\perp \defeq D_2\sigma_2 + k_*(x_2)\sigma_3$, $\Di_\perp^\pm = D_2\sigma_2 + k_\pm(x_2)\sigma_3$ denote the transverse operators associated to $\Di_*$, $\Di_{e_\pm}$.
    \item $\Di$ is the bent Dirac operator $\sigma_1 D_1 + \sigma_2 D_2 + \kappa(x) \sigma_3$; $\Di_\pm = \Di_{e_\pm}$ denotes its models at infinity, see \eqref{eq:models}.
    \item $E_*$ (resp. $E_\pm$) is the infimum of the positive part of the spectrum of $\Di_\perp$ (resp. $\Di_\perp^\pm$). $E$ is $\min\left(E_-,E_+\right)$.
    \item $\phi_*$ spans the kernel of $\Di_\perp$ -- see \eqref{eq:phi0}; $\phi_e$ and $\phi_\pm = \phi_{e_\pm}$ are its tilted analogues \eqref{eq:psi-lambda-e}, \eqref{eq:phi-pm}, and $M_* \defeq \norm{e^{|x_2|}\phi_*}_\infty$; and $M_\pm \defeq \norm{e^{|x_2|}\phi_\pm}_\infty$.
    \item $\psi_{\lambda,*}$, $\psi_{\lambda,e}$ and $\psi_{\lambda,\pm}$ are the edge states at energy $\lambda$ of $\Di_*$, $\Di_e$ and $\Di_\pm$; see \eqref{eq:psi-lambda}, \eqref{eq:psi-lambda-e}. 
\end{itemize}

\subsection*{Acknowledgement} The author thanks Guillaume Bal for interesting conversations, as well as the NSF for support through the grants DMS-2439949 and DMS-2054589. The author acknowledges help from AI agent Claude through various parts of the project, in particular regarding the current identity \eqref{eq:current-identity} and the numerical simulations. 

\section{The unperturbed Dirac operator}\label{sec:unperturbed}

\subsection{Dirac operators with domain walls}\label{subsec:operator}

We work on the Hilbert space $L^2\left(\R^2,\C^2\right)$. Function spaces over $\R^2$ with values in $\C^2$ are abbreviated by dropping the arguments: we use the notations $L^2$, $H^1$, $L^2_\loc$ and $H^1_\loc$. Spaces over $\R$, such as $L^2(\R,\C^2)$, are spelled out. We write $D_j \defeq -i\partial_{x_j}$ for $j=1,2$.

\begin{definition}\label{def:domain-wall}
A horizontal domain wall is a function $\kappa_* : \R^2 \rightarrow \R$ of the form $\kappa_*(x) = k_*(x_2)$, where $k_* \in L^\infty(\R,\R)$ is such that
\begin{equation}\label{eq:domain-wall}
    k_*(x_2) = 1 \ \text{ for } x_2 \le -1, \qquad k_*(x_2) = -1 \ \text{ for } x_2 \ge 1.
\end{equation}
\end{definition}

\begin{definition}\label{def:dirac}
Given a horizontal domain wall $\kappa_*$, the unperturbed Dirac operator is
\begin{equation}\label{eq:dirac}
    \Di_* \defeq D_1 \sigma_1 + D_2 \sigma_2 + \kappa_*(x)\sigma_3,
\end{equation}
acting on $L^2$ with domain $H^1$.
\end{definition}

In \eqref{eq:dirac}, $\sigma_1, \sigma_2$ and $\sigma_3$ are the three Pauli matrices, given by:
\begin{equation}\label{eq:pauli}
    \sigma_1 \defeq \begin{bmatrix} 0 & 1 \\ 1 & 0 \end{bmatrix}, \qquad
    \sigma_2 \defeq \begin{bmatrix} 0 & -i \\ i & 0 \end{bmatrix}, \qquad
    \sigma_3 \defeq \begin{bmatrix} 1 & 0 \\ 0 & -1 \end{bmatrix}.
\end{equation}

The operator $D_1\sigma_1 + D_2\sigma_2$ is self-adjoint on $L^2$, with domain $H^1$. Moreover $\kappa_*\sigma_3$ is a bounded Hermitian multiplication operator; hence $\Di_*$ is self-adjoint on $L^2$, with domain $H^1$.

\subsection{The transverse Dirac operator} Since $\kappa_*$ depends on $x_2$ alone, $\Di_*$ commutes with translations in $x_1$. This suggests to introduce the transverse operator
\begin{equation}\label{eq:transverse}
    \Di_\perp \defeq D_2 \sigma_2 + k_*(x_2)\sigma_3
\end{equation}
on $L^2(\R,\C^2)$, with domain $H^1(\R,\C^2)$; it is self-adjoint on $L^2(\R,\C^2)$.

Let us recall a few spectral results from \cite[Appendix C]{DFW20} about $\Di_\perp$:
\begin{itemize}
    \item The absolutely continuous spectrum of $\Di_\perp$ is $\R \setminus (-1,1)$. 
    \item The spectrum of $\Di_\perp$ in $(-1,1)$ is made of an odd number of simple eigenvalues, symmetric about $0$. 
    \item $0$ is an eigenvalue of $\Di_\perp$.
\end{itemize}
It turns out that the zero mode of $\Di_\perp$ is explicit:

\begin{lemma}\label{lem:zero-mode}
The kernel of $\Di_\perp$ on $L^2(\R,\C^2)$ is spanned by
\begin{equation}\label{eq:phi0}
    \phi_*(x_2) \defeq c_* \begin{bmatrix} 1 \\ 1 \end{bmatrix} e^{K_*(x_2)}, \qquad K_*(x_2) \defeq \int_0^{x_2} k_*(s)\,ds,
\end{equation}
where $c_* > 0$ is the constant normalizing $\phi_*$ in $L^2(\R,\C^2)$. Moreover, $\sigma_1 \phi_* = \phi_*$ and $M_* \defeq \norm{e^{|x_2|}\phi_*}_\infty < +\infty$, so that $|\phi_*(x_2)| \le M_* e^{-|x_2|}$ for all $x_2 \in \R$.
\end{lemma}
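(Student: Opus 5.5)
Writing $\psi = (u,v)^T$, the eigenvalue equation $\Di_\perp\psi = 0$ reduces to a first-order ODE system that we can solve explicitly. We have $D_2\sigma_2 = -i\partial_2 \sigma_2$, and $-i\sigma_2 = \begin{bmatrix} 0 & -1 \\ 1 & 0\end{bmatrix}$, so
\begin{equation*}
    \Di_\perp \psi = \begin{bmatrix} -v' + k_* u \\ u' - k_* v \end{bmatrix}.
\end{equation*}
Thus $\psi \in H^1(\R,\C^2)$ lies in the kernel exactly when $v' = k_* u$ and $u' = k_* v$, in the sense of distributions. Since $k_* \in L^\infty$, this is a linear system with bounded coefficients. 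Its solutions are absolutely continuous, and by Carathéodory theory the solution space is two-dimensional.

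To find the solutions, we diagonalize with $w_\pm \defeq u \pm v$, which gives $w_\pm' = \pm k_* w_\pm$. Hence
\begin{equation*}
    w_+ = a\, e^{K_*}, \qquad w_- = b\, e^{-K_*}, \qquad a,b\in\C.
\end{equation*}
By \eqref{eq:domain-wall}, for $|x_2|\ge 1$ we have $K_*(x_2) = -|x_2| + C_\pm$ with constants $C_\pm$ depending on the sign of $x_2$. So $e^{K_*}$ decays like $e^{-|x_2|}$, while $e^{-K_*}$ grows like $e^{|x_2|}$. Square integrability therefore forces $b = 0$, so $u = v = \tfrac{a}{2} e^{K_*}$. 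Conversely, this function is in $H^1$, since its derivative is $k_* e^{K_*}$, which is bounded and exponentially decaying, and it solves the system. The kernel is therefore spanned by $\phi_*$, and $c_*>0$ is fixed by normalization.

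The remaining properties are then immediate. The identity $\sigma_1\phi_* = \phi_*$ holds because $\sigma_1$ swaps the two components, which are equal. For the bound, $e^{|x_2|}|\phi_*(x_2)| = \sqrt{2}\,c_*\, e^{K_*(x_2)+|x_2|}$. The exponent is continuous on $[-1,1]$ and equals the constant $C_\pm$ outside, so it is bounded. This gives $M_*<\infty$ and $|\phi_*(x_2)|\le M_* e^{-|x_2|}$.

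The only delicate point is regularity. Because $k_*$ is merely $L^\infty$, we need the fact that $H^1$ solutions of the distributional system are absolutely continuous, so that the computation of $w_\pm$ is legitimate. This follows because $(e^{-K_*}w_+)' = 0$ holds distributionally, using the product rule for $W^{1,1}_\loc$ functions, and a distribution with zero derivative is constant.
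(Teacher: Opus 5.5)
Your proof is correct and is the ``straightforward computation'' the paper has in mind; the paper itself gives no details and only points to an external reference. You diagonalize the system via $w_\pm = u \pm v$, discard the exponentially growing branch $e^{-K_*}$ by square-integrability, and read off $\sigma_1\phi_* = \phi_*$ and the finiteness of $M_*$ from $K_*(x_2) + |x_2|$ being constant for $|x_2| \ge 1$. The appeal to Carath\'eodory theory is unnecessary, since the identity $(e^{\mp K_*}w_\pm)' = 0$ already determines every $H^1$ solution.
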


The proof is a straightforward computation; see \cite[Appendix C]{DFW20}. The function
\begin{equation}\label{eq:psi-lambda}
    \psi_{\lambda,*}(x) \defeq e^{i\lambda x_1}\phi_*(x_2)
\end{equation}
satisfies $\Di_* \psi_{\lambda,*} = \lambda\,\psi_{\lambda,*}$. It is a plane wave in $x_1$, confined to $x_2$ near $0$, propagating rightwards for the Dirac equation $\left(D_t + \Di_*\right)\psi = 0$. Superposing it yields localized solutions of the form
\begin{equation}\label{eq:edge-packet}
    \psi(t,x) = a(x_1-t)\,\phi_*(x_2).
\end{equation}
In contrast, the other eigenvectors of $\Di_\perp$ do not seed generalized eigenmodes of $\Di_*$ that behave like plane waves in $x_1$. This is what makes \eqref{eq:psi-lambda} special.

\subsection{The edge decomposition}

\begin{definition}\label{def:edge-projection}
The edge projection $P_*$ is the orthogonal projection of $L^2$ defined by
\begin{equation}\label{eq:edge-projection}
    (P_*f)(x) \defeq a_*(x_1)\,\phi_*(x_2), \qquad
    a_*(x_1) \defeq \int_\R \ove{\phi_*(x_2)}^{\top}\, f(x_1,x_2)\, dx_2 .
\end{equation}
\end{definition}

We write $P_*^\perp \defeq \Id - P_*$ for the complementary projection, and
\begin{equation}\label{eq:HH}
    \HH_* \defeq \ran(P_*), \qquad \HH_*^\perp \defeq \ran\left(P_*^\perp\right),
\end{equation}
so that $L^2 = \HH_* \oplus \HH_*^\perp$. Accordingly, every $f \in L^2$ decomposes as
\begin{equation}\label{eq:decomposition}
    f(x) = a_*(x_1)\,\phi_*(x_2) + g(x), \qquad g \defeq P_*^\perp f .
\end{equation}
We recall that
\begin{equation}\label{eq:E-star}
    E_* \defeq \min\Big(\spec\left(\Di_\perp\right)\cap(0,+\infty)\Big).
\end{equation}

\begin{lemma}\label{lem:decomposition}
The projection $P_*$ commutes with $\Di_*$, and:
\begin{enumerate}[label=(\alph*)]
    \item on  $\HH_*$, the operator $\Di_*$ acts as $D_1$:
    \begin{equation}
        \Di_*\left(\beta(x_1)\phi_*(x_2)\right) = (D_1\beta)(x_1)\,\phi_*(x_2).
    \end{equation}
    \item the restriction of $\Di_*$ to $\HH_*^\perp$ has spectrum contained in $\Omega_*^c = \R\setminus\left(-E_*,E_*\right)$.
\end{enumerate}
\end{lemma}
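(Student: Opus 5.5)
Since $\kappa_*$ depends only on $x_2$, the natural route is the partial Fourier transform $\mathcal{F}_1$ in $x_1$. It conjugates $\Di_*$ to a direct integral of fibered operators:
\begin{equation}
    \mathcal{F}_1 \Di_* \mathcal{F}_1^{-1} = \int^\oplus_\R \Di(\xi)\,d\xi, \qquad \Di(\xi) \defeq \xi\sigma_1 + \Di_\perp \quad \text{on } L^2(\R,\C^2).
\end{equation}
The projection $P_*$ is also fibered. It equals $\mathcal{F}_1^{-1}\left(\int^\oplus \Pi\,d\xi\right)\mathcal{F}_1$, where $\Pi \defeq \phi_*\ove{\phi_*}^\top$ is the rank-one projection onto $\C\phi_*$ in $L^2(\R,\C^2)$. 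This holds because $a_*$ is obtained by pairing in $x_2$ only, which commutes with $\mathcal{F}_1$. So both claims reduce to statements about each fiber $\Di(\xi)$ together with $\Pi$.

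\emph{Step 1 (invariance and part (a)).} The key algebraic facts are $\Di_\perp\phi_* = 0$ and $\sigma_1\phi_* = \phi_*$ from \Cref{lem:zero-mode}. Together they give $\Di(\xi)\phi_* = \xi\phi_*$, so $\C\phi_*$ is invariant under the self-adjoint operator $\Di(\xi)$. Hence $\Pi$ commutes with $\Di(\xi)$, and $\Pi$ maps $H^1(\R,\C^2)$ to itself since $\phi_*\in H^1$. Integrating over $\xi$ shows that $P_*$ preserves $H^1$ and commutes with $\Di_*$ on $H^1$; it therefore commutes with $\Di_*$ in the sense of self-adjoint operators. Part (a) is also direct in physical space. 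For $\beta\in H^1(\R)$ we have
\[
\Di_*(\beta\phi_*) = (D_1\beta)\sigma_1\phi_* + \beta\,\Di_\perp\phi_* = (D_1\beta)\phi_*.
\]

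\emph{Step 2 (part (b)).} The restriction of $\Di_*$ to $\HH_*^\perp$ is unitarily equivalent to $\int^\oplus \Di(\xi)|_{\ran(1-\Pi)}\,d\xi$. It therefore suffices to prove a fiberwise gap statement: for every $\xi$, the spectrum of $\Di(\xi)$ on $\ran(1-\Pi)$ avoids $(-E_*,E_*)$, with resolvent bounds uniform in $\xi$. To see this, square the fiber operator. Because $\sigma_1$ anticommutes with $\sigma_2$ and $\sigma_3$, it anticommutes with $\Di_\perp$, so
\[
\Di(\xi)^2 = \xi^2 + \Di_\perp^2.
\]
The space $\ran(1-\Pi)$ is invariant under both $\Di(\xi)$ and $\Di_\perp$, as shown in Step 1. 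Since $\ker\Di_\perp = \C\phi_*$, the restriction of $\Di_\perp$ to $\ran(1-\Pi)$ has spectrum in $\spec(\Di_\perp)\setminus\{0\}$. The spectrum of $\Di_\perp$ is symmetric about $0$ and has no points in $(0,E_*)$, so this set lies in $\R\setminus(-E_*,E_*)$. Consequently $\Di_\perp^2 \ge E_*^2$ on $\ran(1-\Pi)$, and hence $\Di(\xi)^2 \ge \xi^2+E_*^2 \ge E_*^2$ there. This gives $|\Di(\xi)|\ge E_*$, so $\spec\big(\Di(\xi)|_{\ran(1-\Pi)}\big)\subset \R\setminus(-E_*,E_*)$. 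For $|\lambda|<E_*$ it also gives the uniform bound $\norm{(\Di(\xi)-\lambda)^{-1}(1-\Pi)} \le (E_*-|\lambda|)^{-1}$.

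\emph{Step 3 (assembling).} Gluing these fiber resolvents yields a bounded inverse of $\Di_*-\lambda$ on $\HH_*^\perp$ for every $|\lambda|<E_*$, which proves (b). The one point requiring care is the symmetry of $\spec(\Di_\perp)$ about $0$, needed so that negative eigenvalues of $\Di_\perp$ do not enter $(-E_*,0)$. The paper recalls this symmetry from \cite{DFW20}; it can also be checked directly, since $\sigma_1$ anticommutes with $\Di_\perp$ and so $\sigma_1\Di_\perp\sigma_1 = -\Di_\perp$. The remaining steps are routine.
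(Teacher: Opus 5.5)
Your proof is correct and follows the paper's argument. The paper also uses that $\sigma_1$ anticommutes with $\Di_\perp$ to get $\Di_*^2 = D_1^2 + \Di_\perp^2 \ge \Di_\perp^2 \ge E_*^2$ on $\HH_*^\perp$, arguing fiberwise in physical space over $x_1$ rather than over the Fourier variable $\xi$. Your version is a bit more careful than the paper's in two places: you justify that $P_*$ actually reduces $\Di_*$, and you make explicit the symmetry $\sigma_1\Di_\perp\sigma_1 = -\Di_\perp$ that excludes $(-E_*,0)$.
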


\begin{proof} A direct computation based on \Cref{lem:zero-mode} shows (a).

For (b), recall that $\sigma_1$ anticommutes with $\sigma_2$ and with $\sigma_3$. Hence
\begin{equation}\label{eq:chiral}
    \sigma_1 \Di_\perp = -\Di_\perp \sigma_1 .
\end{equation}
Expanding the square and cancelling the cross terms,
\begin{equation}\label{eq:fiber-square}
    \Di_*^2 = D_1^2 + D_1\left(\sigma_1 \Di_\perp + \Di_\perp \sigma_1\right) + \Di_\perp^2 = D_1^2 + \Di_\perp^2 \geq \Di_\perp^2.
\end{equation}
By \Cref{lem:zero-mode}, the kernel of $\Di_\perp$ on $L^2(\R,\C^2)$ is $\C\phi_*$; and by \eqref{eq:E-star}, the spectrum of $\Di_\perp$ meets $\left(-E_*,E_*\right)$ only at $0$. Hence $\Di_\perp^2 \geq E_*^2$ on $\left(\C\phi_*\right)^\perp$. By arguing fiberwise, we deduce that  $\Di_\perp^2 \ge E_*^2$ as an operator on $\HH_*^\perp$. Therefore, $\Di_*^2 \geq E_*^2$ on $\HH_*^\perp$. Assertion (b) follows.
\end{proof}

\subsection{A current identity} We shall use below the quantity
\begin{equation}\label{eq:current}
    J(v,w) \defeq \ove{v}^\top \sigma w
    = \begin{bmatrix} \ove{v}^\top\sigma_1 w \\[4pt] \ove{v}^\top\sigma_2 w \end{bmatrix},
    \qquad v,w \in H^1_\loc.
\end{equation}
If $v,w \in H^1_\loc$ then $J(v,w) \in W^{1,1}_\loc$, and its trace on each circle $\partial B_R(0)$ is in $L^1$. In particular, the divergence theorem applies on balls.

We will need the following result:

\begin{lemma}\label{lem:edge-flux}
Let $\lambda \in \R$, let $\chi \in C^\infty(\R)$ such that for some $T > 0$, $\chi(t) = 0$ when $t \le -T$ and $\chi(t) = 1$ when $t \ge T$, and let $\zeta = \left(\zeta_1,\zeta_2\right) \in \R^2$ be a unit vector with $\zeta_1 > 0$. Set
\begin{equation}\label{eq:edge-flux-w}
    w_*(x) \defeq \chi\left(x\cdot\zeta\right) e^{i\lambda x_1}\, \phi_*(x_2) .
\end{equation}
For $R > 0$, we have:
\begin{equation}\label{eq:edge-flux}
    \int_{\partial B_R(0)} J(w_*,w_*)\cdot n \ d\ell = 1+O\left(e^{-\zeta_1 R/2}\right).
\end{equation}
\end{lemma}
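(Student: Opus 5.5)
Throughout, write $c = \chi(x\cdot\zeta)$. Since $\sigma_1\phi_* = \phi_*$, we have $\ove{\phi_*}^\top\sigma_1\phi_* = |\phi_*|^2$. Since $\phi_*$ has the real-multiple-of-$(1,1)$ form of \Cref{lem:zero-mode}, we also have $\ove{\phi_*}^\top\sigma_2\phi_* = c_*^2e^{2K_*}\,(1,1)\sigma_2(1,1)^\top = c_*^2 e^{2K_*}(-i+i) = 0$. Hence
\begin{equation}
    J(w_*,w_*)(x) = |\chi(x\cdot\zeta)|^2\,|\phi_*(x_2)|^2\, e_1 ,
\end{equation}
because the phase $e^{i\lambda x_1}$ cancels. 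The flux through $\partial B_R(0)$ is therefore $\int_{\partial B_R(0)} \chi(x\cdot\zeta)^2|\phi_*(x_2)|^2\, n_1\, d\ell$. I will evaluate it by the divergence theorem rather than parametrizing the circle.

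\textbf{Step 1 (divergence).} The divergence of the field above is $2\zeta_1\chi\chi'(x\cdot\zeta)|\phi_*(x_2)|^2$, since $|\phi_*(x_2)|^2$ does not depend on $x_1$. This is supported in the strip $|x\cdot\zeta|\le T$. The divergence theorem on $B_R(0)$ gives
\begin{equation}
    \int_{\partial B_R}J\cdot n\,d\ell=\int_{B_R}2\zeta_1\chi\chi'(x\cdot\zeta)\,|\phi_*(x_2)|^2\,dx .
\end{equation}

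\textbf{Step 2 (full-plane integral).} Over all of $\R^2$, I change variables to $s = x\cdot\zeta$ and $x_2$, with Jacobian $1/\zeta_1$ because $\zeta_1>0$. For fixed $x_2$, the $s$-integral of $2\chi\chi'(s)$ is $[\chi^2]_{-\infty}^{\infty}=1$. The remaining factor is $\int|\phi_*|^2\,dx_2=1$. So the integral over $\R^2$ equals exactly $1$.

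\textbf{Step 3 (error outside the ball).} The error is the integral over $\R^2\setminus B_R$, restricted to the strip $|x\cdot\zeta|\le T$. There, $|x|\ge R$ while $|x\cdot\zeta|\le T$, which forces $|x\cdot\zeta^\perp|\gtrsim R$. Since $x_2=(x\cdot\zeta)\zeta_2+(x\cdot\zeta^\perp)\zeta_1$ up to sign, this gives $|x_2|\ge \zeta_1|x\cdot\zeta^\perp| - T \ge \zeta_1 R/2 - T'$ for $R$ large. The estimate $|\phi_*|^2\le M_*^2e^{-2|x_2|}$ then bounds the integrand by $e^{-2|x_2|}$ on this region. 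Integrating over the strip, which has width $2T$ in $s$, in the variables $(s,x_2)$ over $|x_2|\gtrsim\zeta_1R/2$, gives $O(e^{-\zeta_1 R})$. This is within the claimed $O(e^{-\zeta_1R/2})$, with constants depending on $T$, $\chi$ and $M_*$. For bounded $R$ the estimate is trivial.

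The only delicate point is this geometric inequality in Step 3: points of the strip outside $B_R$ have $|x_2|$ comparable to $\zeta_1 R$. The degeneration of the strip as $\zeta_1\to0$ is exactly what produces the $\zeta_1$ in the exponent.
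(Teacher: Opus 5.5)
Your proof is correct and follows the paper's argument. The current reduces to $\chi^2|\phi_*|^2 e_1$, the divergence theorem with the substitution $s = x\cdot\zeta$ gives exactly $1$ over the whole plane, and the remainder is confined to the part of the strip $|x\cdot\zeta|\le T$ where $|x_2|\gtrsim \zeta_1 R$. The differences are minor. You obtain the lower bound on $|x_2|$ from $x_2 = (x\cdot\zeta)\zeta_2 + (x\cdot\zeta^\perp)\zeta_1$, whereas the paper uses $\zeta_1|x_1|\le T+|x_2|$; as a result you get the slightly sharper error $O(e^{-\zeta_1 R})$.
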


\begin{proof} Because $\sigma_1\phi_* = \phi_*$ and $\ove{\phi_*}^\top\sigma_2\phi_* = 0$, we have
\begin{equation}\label{eq:edge-flux-J}
    J(w_*,w_*)(x) = \chi\left(x\cdot\zeta\right)^2 \begin{bmatrix} \ove{\phi_*(x_2)}^\top\sigma_1\phi_*(x_2) \\[4pt] \ove{\phi_*(x_2)}^\top\sigma_2\phi_*(x_2) \end{bmatrix}
    = \chi\left(x\cdot\zeta\right)^2 \left|\phi_*(x_2)\right|^2 e_1 .
\end{equation}

Because \eqref{eq:edge-flux-J} is carried by $e_1$, we have:
\begin{equation}\label{eq:edge-flux-divergence}
    \operatorname{div}J(w_*,w_*)(x) = \zeta_1\left(\chi^2\right)'\left(x\cdot\zeta\right)\left|\phi_*(x_2)\right|^2.
\end{equation}
The divergence theorem on $B_R(0)$ therefore gives
\begin{equation}\label{eq:edge-flux-div}
    \int_{\partial B_R(0)} J(w_*,w_*)\cdot n \ d\ell \; = \; \int_{B_R(0)} \zeta_1\left(\chi^2\right)'\left(x\cdot\zeta\right)\left|\phi_*(x_2)\right|^2 dx .
\end{equation}
For fixed $x_2$, the substitution $t = x\cdot\zeta$ has $dt = \zeta_1\,dx_1$; since $\chi^2$ runs from $0$ to $1$ and $\norm{\phi_*}_{L^2(\R)} = 1$, the same integrand over the whole plane gives
\begin{equation}\label{eq:edge-flux-full}
    \int_{\R^2} \zeta_1\left(\chi^2\right)'\left(x\cdot\zeta\right)\left|\phi_*(x_2)\right|^2 dx
    = \int_\R \left(\chi^2\right)'(t)\,dt \cdot \int_\R \left|\phi_*(x_2)\right|^2 dx_2 = 1 .
\end{equation}
It remains to discard the part of \eqref{eq:edge-flux-div} lying outside $B_R(0)$. Without loss of generality, assume $R \geq 2T/\zeta_1$. The function $\left(\chi^2\right)'$ is supported in $[-T,T]$, so the integral
\begin{equation}
    \int_{|x| > R} \zeta_1\left(\chi^2\right)'\left(x\cdot\zeta\right)\left|\phi_*(x_2)\right|^2 dx
\end{equation}
takes place on $\left|\zeta_1 x_1+\zeta_2 x_2\right| \le T$, where $\zeta_1\left|x_1\right| \le T + \left|x_2\right|$ and hence $\zeta_1 R \le \zeta_1|x| \le T+2|x_2|$; together with $R \ge 2T/\zeta_1$ this forces $|x_2| \ge \zeta_1 R/4$. Since $\phi_*$ decays exponentially in $x_2$, the same substitution gives
\begin{equation}
    \int_{|x| \ge R} \zeta_1\left|\left(\chi^2\right)'\left(x\cdot\zeta\right)\right| \left|\phi_*(x_2)\right|^2 dx \leq \int_\R \big| (\chi^2)'(t)\big| dt \cdot \int_{|x_2| \geq \zeta_1 R/4} \left|\phi_*(x_2)\right|^2 dx_2 = O\left(e^{-\zeta_1 R/2}\right).
\end{equation}
This proves \eqref{eq:edge-flux}.
\end{proof}

\section{The unperturbed resolvent}\label{sec:resolvent}

From now on we fix a horizontal domain wall $\kappa_*$.
The operator $\Di_*$ is self-adjoint, so its spectrum is real and for $\Im\lambda > 0$, its resolvent $(\Di_* - \lambda)^{-1}$ is a bounded operator on $L^2$. We prove here that $(\Di_* - \lambda)^{-1}$ continues analytically beyond $(-E_*,E_*)$ to a family of operators $R_*(\lambda)$, from $L^2_\comp$ to $H^1_\loc$. We also prove weighted estimates and spatial asymptotics formulas for $R_*(\lambda)$.

\begin{definition}\label{def:E-gamma}
For $\gamma\in(0,1)$, let $\EE_\gamma$ be the space
\begin{equation}\label{eq:E-gamma}
    \EE_\gamma \defeq \left\{ u \in L^2 \ : \ e^{\gamma|x|}u \in L^2\right\},
\end{equation}
normed by $\norm{u}_{\EE_\gamma} \defeq \norm{e^{\gamma|x|}u}_{L^2}$.
\end{definition}

The family of operators $(\Di_* - \lambda)^{-1}$, defined for $\lambda \in \C^+$, maps $L^2$ to $H^1$; in particular, it maps $L^2_\comp$ to $H^1_\loc$. The main theorem of this section shows that $(\Di_* - \lambda)^{-1} : L^2_\comp \rightarrow H^1_\loc$ can be analytically extended to
\begin{equation}
    \Omega_* \defeq \C \setminus \big( (-\infty,-E_*] \cup [E_*,+\infty) \big).
\end{equation}
With a view toward perturbing $\Di_*$, we also provide a spatial expansion of the continuation on weighted spaces when
\begin{equation}\label{eq:def-gamma}
    \lambda \in \tOmega_* \defeq \left\{ \lambda \in \C : |\lambda| < E_*, |\Im \lambda| < \gamma_*(\lambda) \right\}, \qquad \gamma_*(\lambda) \defeq \min\left( \dfrac{1}{2M_*^2}, \dfrac{E_*^2-|\lambda|^2}{20}\right).
\end{equation}
The set $\tOmega_*$ is a neighborhood of $\left(-E_*,E_*\right)$ in $\C$; see Figure \ref{fig:omega}.

\begin{figure}[ht]
\centering
\begin{minipage}[c]{0.54\textwidth}
    \centering
    \includegraphics[width=\linewidth]{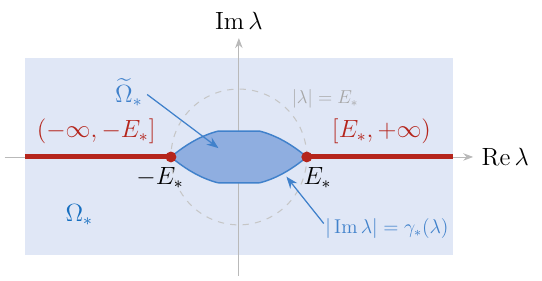}
\end{minipage}\hfill
\begin{minipage}[c]{0.42\textwidth}
\caption{The domain $\Omega_*$ of analytic continuation of $(\Di_*-\lambda)^{-1}$, and the subdomain $\tOmega_* \subset \Omega_*$ defined in \eqref{eq:def-gamma}, where the spatial asymptotics \eqref{eq:resolvent-asymptotics} hold.}
\label{fig:omega}
\end{minipage}
\end{figure}

\begin{theorem}\label{thm:resolvent-structure}   The family of operators $(\Di_* - \lambda)^{-1} : L^2_\comp \rightarrow H^1_\loc$, defined initially for $\lambda \in \C^+$, continues analytically to a family of operators
$R_*(\lambda) : L^2_\comp \rightarrow H^1_\loc$ for $\lambda \in \Omega_*$.

Moreover, for $\lambda \in \tOmega_*$ and $\gamma$ such that $|\Im\lambda| < \gamma \le \gamma_*(\lambda)$, the operator $R_*(\lambda)$ extends by density from $\EE_\gamma$ to $H^1_\loc$ and we have the spatial asymptotics
\begin{equations}\label{eq:resolvent-asymptotics}
    R_*(\lambda) f (x) =  i\int_{-\infty}^{x_1} e^{i\lambda(x_1-s)}a_*(s)\,ds \cdot \phi_*(x_2) + \EE_{\gamma/2}, \qquad f \in \EE_\gamma,
    \\
    P_*f(x) = a_*(x_1) \phi_*(x_2).
\end{equations}
\end{theorem}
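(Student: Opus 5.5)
The plan is to split the resolvent along the edge decomposition of \Cref{lem:decomposition}. Since $P_*$ commutes with $\Di_*$, for $\Im\lambda>0$ we have
\begin{equation}
(\Di_*-\lambda)^{-1} = (\Di_*-\lambda)^{-1}P_* + (\Di_*-\lambda)^{-1}P_*^\perp ,
\end{equation}
and we treat the two pieces separately.

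\emph{Edge piece.} By \Cref{lem:decomposition}(a), on $\HH_*$ the operator $\Di_*$ acts as $D_1$. For $\Im\lambda>0$, the function $b=(D_1-\lambda)^{-1}a_*$ is the unique $L^2$ solution of $-ib'-\lambda b=a_*$, namely
\begin{equation}
b(x_1)=i\int_{-\infty}^{x_1}e^{i\lambda(x_1-s)}a_*(s)\,ds .
\end{equation}
For $f\in L^2_\comp$, $a_*$ is compactly supported. The formula is then entire in $\lambda$ as a map into $H^1_\loc$, which gives the continuation of the edge piece to all of $\C$. For $f\in\EE_\gamma$ we have $|a_*(s)|\le \norm{\phi_*}\,\norm{f(s,\cdot)}_{L^2}$, and $a_*\in e^{-\gamma|s|}L^2(\R)$ by Cauchy--Schwarz in $x_2$. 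Since $|\Im\lambda|<\gamma$, the integral converges absolutely. This yields exactly the first term of \eqref{eq:resolvent-asymptotics}, and it depends continuously on $f$ in $H^1_\loc$.

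\emph{Transverse piece.} By \Cref{lem:decomposition}(b), the spectrum of $\Di_*|_{\HH_*^\perp}$ avoids $(-E_*,E_*)$. Hence $(\Di_*-\lambda)^{-1}P_*^\perp$ is analytic on $\Omega_*$ as a bounded map $L^2\to H^1$, and a fortiori $L^2_\comp\to H^1_\loc$. Combined with the edge piece, this gives the analytic continuation $R_*(\lambda)$ on $\Omega_*$.

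The remaining, and main, step is to show that the transverse piece maps $\EE_\gamma$ into $\EE_{\gamma/2}$ for $\lambda\in\tOmega_*$. We plan a conjugation (Combes--Thomas) argument. Set $\rho(x)=\langle x\rangle$ smoothed, with $|\nabla\rho|\le1$, and consider
\begin{equation}
e^{\mu\rho}\,\Di_*\,e^{-\mu\rho}=\Di_*+i\mu\,\sigma\cdot\nabla\rho ,
\end{equation}
a perturbation of norm at most $\mu$. The obstacle is that the conjugation does not commute with $P_*$, so we cannot simply invert on $\HH_*^\perp$. To handle this, we work with $P_*^\perp$ on the weighted side and control the commutator $[e^{\mu\rho},P_*]$. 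Because $P_*$ only integrates in $x_2$ against $\phi_*$, which decays like $M_*e^{-|x_2|}$, the commutator involves $e^{\mu(\rho(x_1,x_2)-\rho(x_1,y_2))}\phi_*(x_2)\overline{\phi_*(y_2)}$. This kernel is bounded by $M_*^2e^{(\mu-1)(|x_2|+|y_2|)}$, so the commutator has norm $O(\mu M_*^2)$; this is where the constraint $\gamma\le 1/(2M_*^2)$ enters.

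With these estimates in hand, set $u=P_*^\perp R_*(\lambda)f$, which solves $(\Di_*-\lambda)u=P_*^\perp f$. Then $v=e^{\mu\rho}u$ satisfies
\begin{equation}
(\Di_*-\lambda+i\mu\sigma\cdot\nabla\rho)v=e^{\mu\rho}P_*^\perp f .
\end{equation}
Write $v=P_*^\perp v+P_*v$. The component $P_*v=[P_*,e^{\mu\rho}]u$ is small, so it is controlled by the commutator bound. For $P_*^\perp v$, we use the lower bound $\norm{(\Di_*-\lambda)w}\ge c(E_*^2-|\lambda|^2)\norm{w}$ for $w\in\HH_*^\perp$. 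For non-real $\lambda$ this comes from the identity $\Di_*^2\ge E_*^2$, obtained by expanding $\norm{(\Di_*-\Re\lambda)w}^2$ and absorbing $|\Im\lambda|$. The choice $\mu=\gamma/2$, together with $\gamma\le(E_*^2-|\lambda|^2)/20$, makes all the perturbation terms absorbable, which explains the factor $20$. To make this rigorous, we first prove the bound for $f\in L^2_\comp$ with a bounded truncated weight $\min(\rho,N)$, uniformly in $N$, and then let $N\to\infty$. This gives $u\in\EE_{\gamma/2}$ with $\norm{u}_{\EE_{\gamma/2}}\lesssim\norm{f}_{\EE_\gamma}$. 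A density argument from $L^2_\comp$ to $\EE_\gamma$, using analytic uniqueness, then completes the proof of \eqref{eq:resolvent-asymptotics}.
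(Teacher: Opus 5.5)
Your argument is correct. Its skeleton is the paper's: you split by $P_*$, treat the edge piece through the explicit formula for $(D_1-\lambda)^{-1}$, and get analyticity of $B_*(\lambda)=(\Di_*-\lambda)^{-1}P_*^\perp$ on $\Omega_*$ from the gap of $\Di_*|_{\HH_*^\perp}$.

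The two routes differ in the weighted bound for $B_*(\lambda)$.
\begin{itemize}
\item \textbf{The paper's route.} It proves a \emph{directional} Agmon estimate (\Cref{prop:agmon}) with the separated truncated weight $\varphi=\varphi_1(x_1)+\varphi_2(x_2)\approx\epsi\,x\cdot\zeta$. It expands the real part of the Agmon identity and inserts $\norm{\Di_*v}^2\ge E_*^2(\norm{v}^2-\norm{P_*v}^2)$. It then controls $P_*v$ through the exact commutation $P_*e^{\varphi_1}=e^{\varphi_1}P_*$. The radial statement comes from combining $\zeta=e_1,e_2$ in \Cref{prop:agmon-preview}, and that step is where $\gamma$ degrades to $\gamma/2$.
\item \textbf{Your route.} You use a radial weight and invert perturbatively on $\HH_*^\perp$, using $\norm{(\Di_*-\lambda)w}\ge(E_*-|\lambda|)\norm{w}$. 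You control $P_*v$ fiberwise in $x_1$, which only needs the weight to be Lipschitz in $x_2$.
\end{itemize}
Your route is a bit more direct and loses nothing in the exponent. For $\mu$ below a fixed multiple of $\gamma_*(\lambda)$ it yields $B_*(\lambda):\EE_\mu\to\EE_\mu$, which the paper only conjectures after \Cref{prop:agmon-preview}. The paper's directional form is what \Cref{lem:cpt-gain} needs later, but your fiberwise argument should carry over to a truncated weight $\epsi\,x\cdot\zeta$.

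Two points should be stated more carefully when you write it out.
\begin{itemize}
\item \textbf{The commutator.} $[P_*,e^{\mu\rho}]$ is not bounded on $L^2$. What you actually use is $P_*v=(P_*-e^{\mu\rho}P_*e^{-\mu\rho})v$, which holds because $P_*u=0$. Its fiberwise kernel is $\bigl(1-e^{\mu(\rho(x_1,x_2)-\rho(x_1,y_2))}\bigr)\phi_*(x_2)\ove{\phi_*(y_2)}^\top$. The smallness factor $\mu$ comes from $|\rho(x_1,x_2)-\rho(x_1,y_2)|\le|x_2-y_2|$. The bound $M_*^2e^{(\mu-1)(|x_2|+|y_2|)}$ you quote gives only boundedness, not smallness.
\item \textbf{Projecting before using the gap.} Apply $P_*^\perp$ to the conjugated equation first; this is legitimate because $P_*^\perp$ commutes with $\Di_*$. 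You then get $(\Di_*-\lambda)P_*^\perp v=P_*^\perp\bigl(e^{\mu\rho}P_*^\perp f-i\mu(\sigma\cdot\nabla\rho)v\bigr)$. This way the term $(\Di_*-\lambda)P_*v$ never appears. That term involves $D_1$ of the edge coefficient and is not controlled by $\norm{P_*v}$.
\end{itemize}
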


We justify the convergence of the integral in \eqref{eq:resolvent-asymptotics} in \Cref{prop:edge-resolvent} below. The strategy in our proof of \Cref{thm:resolvent-structure} is to split $(\Di_*-\lambda)^{-1}$ using $P_*$:
\begin{equation}\label{eq:2a}
    (\Di_*-\lambda)^{-1} = (\Di_*-\lambda)^{-1}P_* + (\Di_*-\lambda)^{-1} P_*^\perp.
\end{equation}
We then extend each term in the right-hand-side analytically.

\subsection{The edge decomposition}\label{subsec:edge-decomposition}

We first study the analytic continuation of $(\Di_*-\lambda)^{-1} P_*$. For $\Im \lambda > 0$, this operator maps $L^2$ to $L^2$, hence $L^2_\comp$ to $L^2_\loc$.

\begin{proposition}\label{prop:edge-resolvent}
The family of operators $(\Di_* - \lambda)^{-1}P_* : L^2_\comp \rightarrow L^2_\loc$, defined initially for $\Im \lambda > 0$, continues analytically to a family of operators
$A_*(\lambda) : L^2_\comp \rightarrow L^2_\loc$ for $\lambda \in \Omega_*$.

Moreover, for $\lambda \in \tOmega_*$ and $\gamma$ such that $|\Im\lambda| < \gamma \le \gamma_*(\lambda)$, the operator $A_*(\lambda)$ extends by density from $\EE_\gamma$ to $L^2_\loc$, and we have the formula:
\begin{equation}\label{eq:edge-resolvent}
   A_*(\lambda)f(x) = i\int_{-\infty}^{x_1} e^{i\lambda(x_1-s)}a_*(s)\,ds \cdot \phi_*(x_2), \quad f\in \EE_\gamma, \quad  P_*f(x) = a_*(x_1) \phi_*(x_2),
\end{equation}
where the above integral converges absolutely.
\end{proposition}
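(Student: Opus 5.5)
By \Cref{lem:decomposition}(a), $P_*$ commutes with $\Di_*$, and on $\HH_*$ the operator $\Di_*$ acts as $D_1$. So for $\Im\lambda>0$ and $f\in L^2$ with $P_*f = a_*(x_1)\phi_*(x_2)$, we have $(\Di_*-\lambda)^{-1}P_*f = \big((D_1-\lambda)^{-1}a_*\big)(x_1)\,\phi_*(x_2)$. This reduces everything to the one-dimensional resolvent of $D_1 = -i\partial_{x_1}$ on $L^2(\R)$.

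For $\Im\lambda>0$, solving $-i\beta'-\lambda\beta = a_*$ with $\beta\in L^2$ gives
\[
\beta(x_1) = i\int_{-\infty}^{x_1} e^{i\lambda(x_1-s)}a_*(s)\,ds .
\]
The kernel $i e^{i\lambda(x_1-s)}\1_{s<x_1}$ decays as $x_1-s\to+\infty$ precisely because $\Im\lambda>0$. Alternatively, take the Fourier transform of $a_*$ and close the contour; the sign choice is dictated by $L^2$-integrability. This identifies the formula \eqref{eq:edge-resolvent} on $\C^+$.

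For the continuation to $\Omega_*$, take $f\in L^2_\comp$, supported in $B_\rho(0)$. By Cauchy--Schwarz in $x_2$, $a_*$ is in $L^2(\R)$ and supported in $[-\rho,\rho]$. The integral is then over $s\in[-\rho,\min(x_1,\rho)]$, a bounded set. For each $x_1$ it is an entire function of $\lambda$, locally uniformly bounded in $x_1$, with a bound like $e^{|\Im\lambda|(|x_1|+\rho)}\norm{a_*}_{L^2}\sqrt{2\rho}$. We therefore define $A_*(\lambda)f$ by the formula for all $\lambda\in\C$, in particular on $\Omega_*$. It maps $L^2_\comp$ to $L^2_\loc$ with norms on each compact set $B_r(0)$ locally bounded in $\lambda$. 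Since it is holomorphic in $\lambda$ when paired against $L^2_\comp$ test functions and agrees with $(\Di_*-\lambda)^{-1}P_*$ on $\C^+$, it is the analytic continuation, unique by the identity theorem.

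For the weighted statement, take $f\in\EE_\gamma$ with $|\Im\lambda|<\gamma$. Because $|\phi_*(x_2)|\le M_*e^{-|x_2|}$ and $|x|\ge|x_1|$, Cauchy--Schwarz gives $|a_*(s)|\le \norm{\phi_*}_{L^2}\,\norm{f(s,\cdot)}_{L^2(\R)}$. Hence $e^{\gamma|s|}a_*\in L^2(\R)$ with $\norm{e^{\gamma|s|}a_*}_{L^2}\le\norm{f}_{\EE_\gamma}$. Then
\[
|e^{i\lambda(x_1-s)}a_*(s)| \le e^{|\Im\lambda|\,|x_1-s|}e^{-\gamma|s|}\,|e^{\gamma|s|}a_*(s)| .
\]
For fixed $x_1$ and $s\to-\infty$, the factor behaves like $e^{(|\Im\lambda|-\gamma)|s|}$ times an $L^2$ function, so it is integrable on $(-\infty,x_1]$. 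This gives absolute convergence, with a bound $C(x_1)\norm{f}_{\EE_\gamma}$ where $C$ is locally bounded. Consequently the map $f\mapsto A_*(\lambda)f$ is continuous $\EE_\gamma\to L^2_\loc$, and since $L^2_\comp$ is dense in $\EE_\gamma$ it is the extension by density.

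The main obstacle is only bookkeeping. We must check that the compact-support continuation and the weighted formula define the same operator, which follows from the continuity just shown. We must also make sure the sign of $\Im\lambda$ does not matter: for $\Im\lambda<0$ the factor $e^{i\lambda(x_1-s)}$ grows like $e^{|\Im\lambda|(x_1-s)}$, and this is absorbed by $e^{-\gamma|s|}$ as above. The condition $\gamma\le\gamma_*(\lambda)$ is not needed here. It matters only for the complementary part $P_*^\perp$ treated afterwards.
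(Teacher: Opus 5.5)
Your proposal is correct and follows essentially the same route as the paper. You use \Cref{lem:decomposition}(a) to reduce to the scalar equation $(D_1-\lambda)\beta = a_*$ on $\HH_*$, and continue the explicit Volterra formula to $\Omega_*$ using the compact support of $a_*$ (weak holomorphy, as the paper does with Morera). You then obtain the density extension from $\|e^{\gamma|s|}a_*\|_{L^2(\R)}\le\|f\|_{\EE_\gamma}$ together with Cauchy--Schwarz, and your remark that only $|\Im\lambda|<\gamma$, not $\gamma\le\gamma_*(\lambda)$, is used here is accurate.
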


\begin{proof}[Proof of \Cref{prop:edge-resolvent}] \textbf{1.} Before all, we justify that for $\lambda \in \tOmega_*$, $\gamma$ such that $|\Im\lambda| < \gamma \le \gamma_*(\lambda)$, $f \in \EE_\gamma$, and $a_*$ as in \eqref{eq:edge-projection}, the integral in \eqref{eq:edge-resolvent} converges absolutely. 
    
    We claim that $a_* \in e^{-\gamma|x_1|}L^2(\R)$. Indeed, using the formula \eqref{eq:edge-projection} for $a_*$, we have:
\begin{equation}
    e^{\gamma|x_1|}\left|a_*(x_1)\right| \; \le \; \int_\R \left|\phi_*(x_2)\right| e^{\gamma|x|}\left|f(x)\right| dx_2 \; \le \; \norm{\phi_*}_{L^2(\R)} \left(\int_\R e^{2\gamma|x|}\left|f(x)\right|^2 dx_2\right)^{1/2} .
\end{equation}
Since $\norm{\phi_*}_{L^2(\R)} = 1$, integrating in $x_1$ produces \eqref{eq:edge-data}:
\begin{equation}\label{eq:edge-data}
   \norm{e^{\gamma|x_1|}a_*}_{L^2(\R)}^2 = \int_\R e^{2\gamma|x_1|}\left|a_*(x_1)\right|^2 dx_1 \; \le \; \int_{\R^2} e^{2\gamma|x|}\left|f(x)\right|^2 dx = \|f\|_{\EE_\gamma}^2.
\end{equation}

We now note that: 
\begin{equations}\label{eq:edge-integral-bound}
    \int_{-\infty}^{x_1} \left| e^{-i\lambda s} a_*(s) \right| ds \leq \int_{-\infty}^{x_1} \left| e^{|\Im \lambda|\,|s|}a_*(s) \right| ds
    \\
     = \int_{-\infty}^{x_1} \left| e^{ (|\Im \lambda|-\gamma)|s|} \cdot e^{ \gamma|s|} a_*(s) \right| ds \leq \dfrac{1}{\sqrt{\gamma-|\Im \lambda|}} \norm{e^{\gamma|x_1|}a_*}_{L^2(\R)}\leq \dfrac{1}{\sqrt{\gamma-|\Im \lambda|}} \norm{f}_{\EE_\gamma},
\end{equations}
where in the last line we used the Cauchy--Schwarz inequality and $|\Im \lambda| < \gamma$. This proves that the integral in \eqref{eq:edge-resolvent} converges absolutely.

    \textbf{2.} We now prove the core of the proposition. First work with $\lambda$ with $\Im\lambda>0$ and $f \in L^2_\comp$. We look for $u(x) = \beta(x_1)\phi_*(x_2) \in \HH_*$ solving $(\Di_*-\lambda)u = P_*f = a_*\phi_*$. Note that $a_* \in L^2_\comp(\R)$: indeed, if $\supp(f) \subset B_R(0)$, for $|x_1| \geq R$ we have:
    \begin{equation}
        a_*(x_1) = \int_\R \ove{\phi_*(x_2)}^\top f(x_1,x_2) dx_2 = 0.
    \end{equation}
Because $a_* \in L^2(\R)$ -- see \eqref{eq:edge-data} -- we deduce $a_* \in L^2_\comp$. By part (a) of \Cref{lem:decomposition}, solving $(\Di_*-\lambda)u = a_*\phi_*$ in $\HH_*$ reduces to the scalar ODE
\begin{equation}\label{eq:transport}
    (D_1 - \lambda)\beta = a_*.
\end{equation}

\textbf{3.} Because $f \in L^2_\comp \subset L^2$, $a_* \in L^2$. Moreover $D_1$ is selfadjoint on $L^2(\R)$ and $\Im \lambda > 0$. Hence $\beta = (D_1-\lambda)^{-1} a_*$ is the unique solution to the equation \eqref{eq:transport} in $L^2(\R)$. It is given by the integral formula
\begin{equation}\label{eq:edge-term}
    \beta(x_1) = (D_1-\lambda)^{-1} a_* = i\,e^{i\lambda x_1}\int_{-\infty}^{x_1} e^{-i\lambda s}a_*(s)\,ds
    = i\int_{-\infty}^{x_1} e^{i\lambda(x_1-s)}a_*(s)\,ds .
\end{equation}
Indeed, a straightforward calculation shows that \eqref{eq:edge-term} solves \eqref{eq:transport}. Moreover, \eqref{eq:edge-term} is square-integrable, because it is the convolution of $a_* \in e^{-\gamma|x_1|}L^2(\R) \subset L^2(\R)$ against the kernel
\begin{equation}\label{eq:transport-kernel}
    k_\lambda(t) \defeq e^{i\lambda t}\,\1_{t \ge 0}, \qquad \text{which has} \qquad \norm{k_\lambda}_{L^1(\R)} = \int_0^{+\infty} e^{-t\Im\lambda}\,dt = \frac{1}{\Im\lambda} < +\infty .
\end{equation}
Young's convolution inequality therefore gives $\norm{\beta}_{L^2(\R)} \le (\Im\lambda)^{-1}\norm{a_*}_{L^2(\R)}$, in particular $\beta \in L^2(\R)$. We deduce that for $\Im \lambda >0$ and $f \in L^2_\comp$,
\begin{equation}\label{eq:edge-resolvent-comp}
   A_*(\lambda)f(x) = i\int_{-\infty}^{x_1} e^{i\lambda(x_1-s)}a_*(s)\,ds \cdot \phi_*(x_2) .
\end{equation}

\textbf{4.} For $f \in L^2_\comp$ and $\lambda \in \Omega_*$, $a_* \in L^2_\comp(\R)$, and the integral \eqref{eq:edge-resolvent-comp} defines an element of $L^2_\loc$: indeed,  for any $L$:
\begin{equations}
    \int_{-L}^L \left| ie^{i\lambda x_1} \int_{-\infty}^{x_1} e^{-i\lambda s}a_*(s)\,ds  \right|^2 dx_1  \leq \; 2L \cdot 2R\, e^{2\left|\Im\lambda\right|(L+R)} \| a_* \|_{L^2(\R)}^2 \; < \; \infty,
\end{equations}
where $R$ is such that $\supp(a_*) \subset [-R,R]$, and where we used $\left|x_1-s\right| \le L+R$ on the domain of integration; the right-hand side is finite for every $\lambda \in \C$. Moreover, the formula \eqref{eq:edge-resolvent-comp} depends analytically on $\lambda$ as can be seen from an application of Morera's theorem. Because $\phi_* \in L^2(\R)$ is normalized, this extension satisfies
\begin{equation}
\norm{A_*(\lambda) f}^2_{L^2((-L,L) \times \R)} = \int_{-L}^L \left| \int_{-\infty}^{x_1} e^{i\lambda(x_1-s)}a_*(s)\,ds  \right|^2 dx_1 \; < \; \infty .
\end{equation}
Therefore it defines an analytic family of operators $A_*(\lambda) : L^2_\comp \rightarrow L^2_\loc$ for $\lambda \in \Omega_*$, that extends $(\Di_* - \lambda)^{-1}P_* : L^2_\comp \rightarrow L^2_\loc$ past $\left(-E_*,E_*\right)$.

\textbf{5.} Assume now $\lambda \in \tOmega_*$ and $\gamma$ such that $|\Im\lambda| < \gamma \le \gamma_*(\lambda)$. We justify that $A_*(\lambda)$ extends by density as an operator from $\EE_\gamma$ to $L^2_\loc$. Let $f \in L^2_\comp$; in particular, $f \in \EE_\gamma$ and $e^{\gamma|x_1|}a_* \in L^2(\R)$. By \eqref{eq:edge-integral-bound}, we have, for any $L$:
\begin{align}\label{eq:edge-density-bound}
    \int_{-L}^L \left| ie^{i\lambda x_1} \int_{-\infty}^{x_1} e^{-i\lambda s}a_*(s)\,ds  \right|^2 dx_1 & \leq \int_{-L}^L e^{2|\Im \lambda| |x_1|} \cdot \dfrac{1}{\gamma-|\Im \lambda|} \norm{e^{\gamma|x_1|}a_*}_{L^2(\R)}^2 dx_1
    \\
&     \leq \dfrac{2L e^{2L}}{\gamma-|\Im \lambda|} \| f \|_{\EE_\gamma}^2.
\end{align}
Because of \eqref{eq:edge-density-bound}, we conclude that $A_*(\lambda)$ maps $\EE_\gamma$ to $L^2_\loc$ indeed. This completes the proof.
\end{proof}

\subsection{The resolvent on \texorpdfstring{$\HH_*^\perp$}{H-perp}}\label{subsec:resolvent-complement}

By part (b) of \Cref{lem:decomposition}, every $\lambda \in \Omega_*$ lies in the resolvent set of $\Di_*|_{\HH_*^\perp}$, so the operator
\begin{equation}\label{eq:A-lambda}
    B_*(\lambda) \defeq (\Di_*-\lambda)^{-1} P_*^\perp : L^2 \rightarrow L^2
\end{equation}
has an analytic continuation for $\lambda \in \Omega_*$. In particular, $B_*(\lambda)$ maps $L^2_\comp$ to $L^2_\loc$.
 We prove here that it satisfies weighted estimates:

\begin{proposition}\label{prop:agmon} Let $\zeta \in \R^2$ be a unit vector, $|\lambda| < E_*$, and $0 < \epsi \leq 2\gamma_*(\lambda)$, with $\gamma_*(\lambda)$ given by \eqref{eq:def-gamma}.
For every $f \in \HH_*^\perp$:
\begin{equation}\label{eq:agmon-estimate}
    e^{\epsi |x\cdot \zeta|} f \in L^2 \quad \Rightarrow \quad
    e^{\epsi|x \cdot \zeta|} B_*(\lambda) f \in L^2, \quad
    \big\| e^{\epsi|x \cdot \zeta|} B_*(\lambda) f \big\| \leq 4\epsi^{-1} \big\|e^{\epsi |x \cdot \zeta|} f\big\|.
\end{equation}
\end{proposition}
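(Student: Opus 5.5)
The plan is a weighted energy (Agmon-type) argument, carried out on a truncated weight and then passed to the limit. Fix $\zeta$, $\lambda\in(-E_*,E_*)$ real and $0<\epsi\le 2\gamma_*(\lambda)$. For $N\ge1$ let $\varphi_N(x)$ be a Lipschitz, bounded regularization of $\epsi\min(|x\cdot\zeta|,N)$ with $|\nabla\varphi_N|\le\epsi$ almost everywhere. Multiplication by $e^{\pm\varphi_N}$ is then bounded on $L^2$ and on $H^1$.

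Let $u\defeq B_*(\lambda)f\in H^1\cap\HH_*^\perp$ and $v\defeq e^{\varphi_N}u\in H^1$. A direct computation gives
\begin{equation}
(\Di_*-\lambda)v \;=\; e^{\varphi_N}f \;-\; i\,(\sigma\cdot\nabla\varphi_N)\,v,
\end{equation}
so $\norm{(\Di_*-\lambda)v}\le \norm{e^{\varphi_N}f}+\epsi\norm{v}$. The aim is a lower bound $\norm{(\Di_*-\lambda)v}\ge c\norm{v}$ with $c-\epsi\ge \epsi/4$. The resulting estimate $\norm{e^{\varphi_N}u}\le 4\epsi^{-1}\norm{e^{\varphi_N}f}\le 4\epsi^{-1}\norm{e^{\epsi|x\cdot\zeta|}f}$ is uniform in $N$, and monotone convergence as $N\to\infty$ gives \eqref{eq:agmon-estimate}.

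The difficulty is that $v$ is not in $\HH_*^\perp$: the weight does not commute with $P_*$ unless $\zeta=\pm e_1$. Split $v=P_*v+P_*^\perp v$.

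\emph{Non-edge part.} Since $P_*$ commutes with $\Di_*$ (\Cref{lem:decomposition}), part (b) of \Cref{lem:decomposition} and self-adjointness give $\norm{(\Di_*-\lambda)P_*^\perp v}\ge (E_*-|\lambda|)\norm{P_*^\perp v}$.

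\emph{Edge part.} Here I use $P_*u=0$. Writing $P_*v=b(x_1)\phi_*(x_2)$, we have
\begin{equation}
b(x_1)=\int_\R \ove{\phi_*(x_2)}^\top\big(e^{\varphi_N(x)}-e^{\varphi_N(x_1,0)}\big)u(x)\,dx_2 .
\end{equation}
Since $|e^{\varphi_N(x)}-e^{\varphi_N(x_1,0)}|\le \epsi|x_2|\,e^{\epsi|x_2|}e^{\varphi_N(x_1,0)}$, $e^{\varphi_N(x_1,0)}\le e^{\epsi|x_2|}e^{\varphi_N(x)}$ (as $\varphi_N$ is $\epsi$-Lipschitz), $|\phi_*|\le M_*e^{-|x_2|}$, and $\epsi<1$ is small, Cauchy--Schwarz in $x_2$ gives $\norm{P_*v}\le C\,\epsi M_*\norm{v}$ with an explicit numerical $C$. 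Combining with $\norm{(\Di_*-\lambda)P_*v}\le \norm{(\Di_*-\lambda)v}+\norm{(\Di_*-\lambda)P_*^\perp v}$ is awkward. It is better to use orthogonality of the two pieces, $\norm{(\Di_*-\lambda)v}\ge \norm{(\Di_*-\lambda)P_*^\perp v}\ge (E_*-|\lambda|)\big(\norm{v}-\norm{P_*v}\big)$. Since $\epsi\le 1/M_*^2$, the edge part is a small multiple of $\norm{v}$.

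\emph{Closing.} This yields
\begin{equation}
(E_*-|\lambda|)(1-C\epsi M_*)\norm{v}\le \norm{e^{\varphi_N}f}+\epsi\norm{v}.
\end{equation}
From $\epsi\le (E_*^2-\lambda^2)/10\le (E_*-|\lambda|)E_*/5$ and $\epsi M_*\lesssim 1/M_*$, one absorbs the $\epsi\norm{v}$ term and gets a constant no worse than $4/\epsi$. The main obstacle, which I expect to need the most care, is this absorption step. It requires checking that the specific thresholds in $\gamma_*(\lambda)$, namely $1/(2M_*^2)$ and $(E_*^2-|\lambda|^2)/20$ doubled, indeed make $(E_*-|\lambda|)(1-C\epsi M_*)-\epsi\ge\epsi/4$. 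If the naive bound on $P_*v$ is too lossy, I would instead bound $\norm{P_*v}$ by $\epsi$ times a weighted norm of $u$ with weight $|x_2|e^{-|x_2|/2}$, which is controlled by $M_*$ directly.
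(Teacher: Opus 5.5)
Your argument is correct, and it differs from the paper's in two genuine ways.

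\textbf{The core estimate.} The paper runs the quadratic Agmon identity. It pairs $(\Di_*-\lambda+i(\sigma\cdot\nabla)\varphi)v=e^{\varphi}f$ with $(\Di_*+\lambda-i(\sigma\cdot\nabla)\varphi)v$ and takes real parts to reach $\norm{\Di_* v}^2-(|\lambda|^2+5\epsi)\norm{v}^2\le 3\epsi^{-1}\norm{e^{\varphi}f}^2$. It then invokes $\Di_*^2\ge E_*^2$ on $\HH_*^\perp$. You instead treat $-i(\sigma\cdot\nabla\varphi_N)$ as a perturbation of norm at most $\epsi$. You then use the linear bound $\norm{(\Di_*-\lambda)v}\ge\norm{(\Di_*-\lambda)P_*^\perp v}\ge(E_*-|\lambda|)\norm{P_*^\perp v}$, which rests on the orthogonality of $(\Di_*-\lambda)P_*v$ and $(\Di_*-\lambda)P_*^\perp v$.

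\textbf{The leakage $P_*v$.} The paper uses a separable, one-sided weight $\varphi_1(x_1)+\varphi_2(x_2)$ approximating $\epsi\, x\cdot\zeta$. It commutes $e^{\varphi_1}$ through $P_*$ and writes $P_*v=e^{\varphi_1}P_*(1-e^{-\varphi_2})e^{\varphi_2}u$. It then needs a separate check that $|\varphi|\le\epsi|x\cdot\zeta|$, and a final combination of $\zeta$ and $-\zeta$ that doubles its constant. Your subtraction of $e^{\varphi_N(x_1,0)}\int_\R\ove{\phi_*}^\top u\,dx_2=0$ only uses that $\varphi_N$ is $\epsi$-Lipschitz. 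So the two-sided weight $\epsi\min(|x\cdot\zeta|,N)$ can be used as is; it is already Lipschitz, so no regularization is needed. Monotone convergence then finishes.

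\textbf{Comparison.} Your route is shorter, needs only $\epsi\lesssim E_*-|\lambda|$ instead of $\epsi\lesssim E_*^2-|\lambda|^2$, and yields a better constant. The paper's follows the standard squared Agmon template, which ports more readily to second-order problems but gives nothing extra here.

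\textbf{The absorption step.} The step you flagged goes through with room to spare, but it needs both constraints in $2\gamma_*(\lambda)$.
\begin{itemize}
\item $\epsi\le M_*^{-2}$ alone gives only $\epsi M_*\le M_*^{-1}$, which is useless when $M_*=1$ (e.g.\ the sharp wall $k_*=-\operatorname{sgn}$).
\item Combined with $\epsi\le(E_*^2-|\lambda|^2)/10\le 1/10$, it gives $\epsi M_*\le\sqrt{\epsi}\le 10^{-1/2}$.
\item Your Cauchy--Schwarz constant satisfies $C^2\le\int_\R x_2^2e^{-2(1-2\epsi)|x_2|}\,dx_2=\big(2(1-2\epsi)^3\big)^{-1}<1$.
\item Since $E_*\le 1$, you have $E_*-|\lambda|\ge 5\epsi$.
\end{itemize}
Hence $(E_*-|\lambda|)(1-C\epsi M_*)-\epsi\ge 5\epsi(1-10^{-1/2})-\epsi>2\epsi$, i.e.\ $\norm{e^{\varphi_N}u}\le(2\epsi)^{-1}\norm{e^{\varphi_N}f}$. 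The fallback you mention is unnecessary.

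\textbf{Complex $\lambda$.} Do not restrict to real $\lambda$. The proposition is applied to complex $\lambda$ with $|\lambda|<E_*$: through \Cref{prop:agmon-preview} and \Cref{thm:resolvent-structure} on $\tOmega_*$, and through \Cref{lem:cpt-gain} in the meromorphic continuation. Your proof covers that case verbatim, since $\operatorname{dist}\big(\lambda,\R\setminus(-E_*,E_*)\big)\ge E_*-|\lambda|$ for every $\lambda\in\C$.
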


In particular, by the closed graph theorem, the operator $B_*(\lambda)$ is bounded on $e^{-\epsi|x\cdot\zeta|}L^2$, with norm less than $4\epsi^{-1}$.

\begin{proof} The proof relies on Agmon estimates, which are themselves an elaborate version of the identity
    \begin{equations}\label{eq:1e}
        \lr{(S+\lambda-iT)v, (S-\lambda+iT)v}
        \\
        = \| S v\|^2-|\lambda|^2 \| v\|^2 - \| T v\|^2 + 2i \Re \lr{S v,Tv} - 2i \Im \lambda \lr{v,Sv} + 2 \Im \lambda\lr{v,Tv},
    \end{equations}
valid for selfadjoint operators $S$ and $T$ and $\lambda \in \C$. Taking the real part of \eqref{eq:1e} gives:
\begin{equation}\label{eq:1k}
        \Re \lr{(S+\lambda-iT)v, (S-\lambda+iT)v} 
        = \| S v\|^2-|\lambda|^2 \| v\|^2 - \| T v\|^2 + 2 \Im \lambda\lr{v,Tv}.
    \end{equation}

\textbf{1.} Let $|\lambda|<E_*$ and $0<\epsi \leq 2\gamma_*(\lambda)$ with $\gamma_*(\lambda)$ given by \eqref{eq:def-gamma}; note that $\epsi \in (0,1)$. Fix a unit vector $\zeta \in \R^2$, $N \in \N$, and define:
\begin{equations}\label{eq:1i}
    \varphi_k(x_k) = \begin{cases}
        -N & \quad \text{for } \epsi \zeta_k x_k \leq -N
        \\
        \epsi \zeta_k x_k & \quad \text{for } |\epsi \zeta_k x_k| \leq N
        \\
        N & \quad \text{for } \epsi \zeta_k x_k \geq N
    \end{cases}, \qquad
    \varphi(x) \defeq \varphi_1(x_1) + \varphi_2(x_2).
\end{equations}
The function $\varphi$ is bounded and Lipschitz (in particular it is differentiable almost everywhere) and $\| \nabla \varphi \|_\infty \leq \epsi$. Moreover, $|\varphi(x)| \leq \epsi |\zeta \cdot x|$. Indeed, write
\begin{equation}
    \varphi(x) = \pi_N(\epsi \zeta_1 x_1) + \pi_N(\epsi \zeta_2 x_2), \qquad \pi_N(s) = \max(-N,\min(N,s)).
\end{equation}
The map $\pi_N$ preserves sign. Therefore, if $\zeta_1 x_1$ and $\zeta_2 x_2$ have the same sign, so do $\pi_N(\epsi \zeta_1 x_1)$ and $\pi_N(\epsi \zeta_2 x_2)$. Hence, because $\pi_N$ is $1$-Lipschitz, 
\begin{equation}
    |\varphi(x)| = |\pi_N(\epsi \zeta_1 x_1)| + |\pi_N(\epsi \zeta_2 x_2)| \leq \epsi |\zeta_1 x_1| + \epsi |\zeta_2 x_2| = \epsi |x \cdot \zeta|.
\end{equation}
If on the other hand, $\zeta_1 x_1$ and $\zeta_2 x_2$ have different signs, so do $\pi_N(\epsi \zeta_1 x_1)$ and $\pi_N(\epsi \zeta_2 x_2)$. Because $|\pi_N(s)| = \min(N,|s|)$, we have $||\pi_N(s)|-|\pi_N(t)|| \leq ||s|-|t||$, so
\begin{equation}
    \big|\varphi(x) \big| \; = \; \Big| |\pi_N(\epsi \zeta_1 x_1)| - |\pi_N(\epsi \zeta_2 x_2)| \Big| \leq \epsi \big| |\zeta_1 x_1|- |\zeta_2 x_2| \big| = \epsi |\zeta \cdot x|.
\end{equation}
This proves that $|\varphi(x)| \leq \epsi |\zeta \cdot x|$ indeed.

Let $f \in \HH_*^\perp$ such that $e^{\epsi |x\cdot \zeta|} f \in L^2$, and set $u = B_*(\lambda) f \in \HH_*^\perp$.
The function $v=e^\varphi u$ satisfies $D_k v = e^\varphi D_ku -i \partial_k\varphi \cdot v$, so
\begin{equation}\label{eq:1q}
    (\Di_*-\lambda + i (\sigma \cdot \nabla)\varphi) v = e^\varphi (\Di_*-\lambda) u = e^\varphi f.
\end{equation}
Pair both sides with $(\Di_*+\lambda - i (\sigma \cdot \nabla)\varphi) v = e^\varphi f+ 2\lambda v - 2i (\sigma \cdot \nabla)\varphi \cdot v$ to obtain
\begin{equation}\label{eq:1t}
    \lr{(\Di_*+\lambda - i (\sigma \cdot \nabla)\varphi) v, (\Di_*-\lambda + i (\sigma \cdot \nabla)\varphi) v} =  \big\| e^\varphi f \big\|^2 + 2\lr{(\lambda - i(\sigma \cdot \nabla)\varphi) v, e^\varphi f}.
\end{equation}
Take the real part of the left-hand-side and expand it according to  \eqref{eq:1k} to obtain:
\begin{equations}\label{eq:1l}
    \big\| \Di_* v \big\|^2 - |\lambda|^2 \big\| v \big\|^2 - \big\| (\sigma \cdot \nabla)\varphi\, v \big\|^2 + 2 \Im \lambda \lr{(\sigma \cdot \nabla)\varphi\, v, v}
    \\
    =  \big\| e^\varphi f \big\|^2 + 2\Re \lr{(\lambda - i(\sigma \cdot \nabla)\varphi) v, e^\varphi f}.
\end{equations}

Recall that $|\lambda| < E_*$. We have the following inequalities:
\begin{align}
    \big| 2 \Im \lambda \lr{(\sigma \cdot \nabla)\varphi\, v, v} \big|
    & \leq \epsi^{-1} \big\| (\sigma \cdot \nabla)\varphi\, v \big\|^2 + \epsi \big\| v \big\|^2
    \leq \Big(\epsi^{-1} \big\| \nabla \varphi \big\|_\infty^2 + \epsi\Big) \big\| v \big\|^2 ,
    \\
    \big| 2 \Re \lr{\lambda v, e^\varphi f} \big|
    & \leq \epsi \big\| v \big\|^2 + \epsi^{-1} \big\| e^\varphi f \big\|^2 ,
    \\
    \big| 2 \Re \lr{-i(\sigma \cdot \nabla)\varphi\, v, e^\varphi f} \big|
    & \leq \epsi \big\| (\sigma \cdot \nabla)\varphi\, v \big\|^2 + \epsi^{-1} \big\| e^\varphi f \big\|^2
    \leq \epsi \big\| \nabla \varphi \big\|_\infty^2 \big\| v \big\|^2 + \epsi^{-1} \big\| e^\varphi f \big\|^2 .
\end{align}
Plug them in \eqref{eq:1l} to obtain:
\begin{equations}\label{eq:1m}
    \big\| \Di_* v \big\|^2 - \Big(|\lambda|^2+2\epsi + (1+\epsi^{-1}+\epsi) \big\| \nabla \varphi \big\|_\infty^2 \Big) \big\| v \big\|^2
    \leq (1+2\epsi^{-1}) \big\| e^\varphi f \big\|^2.
\end{equations}
Because $\| \nabla \varphi \|_\infty \leq \epsi$ and $\epsi \in (0,1)$, we have $(1+\epsi^{-1}+\epsi) \| \nabla \varphi \|_\infty^2 \leq 3\epsi$ and $1+2\epsi^{-1} \leq 3\epsi^{-1}$. This implies
\begin{equations}\label{eq:1h}
    \big\| \Di_* v \big\|^2 - \big(|\lambda|^2+5\epsi \big) \big\| v \big\|^2
    \leq 3\epsi^{-1} \big\| e^\varphi f \big\|^2.
\end{equations}

\textbf{2.} From \Cref{lem:decomposition}, we get:
\begin{equation}\label{eq:1u}
    \big\| \Di_* v \big\|^2 \geq \big\| \Di_* P_*^\perp v \big\|^2 \geq E_*^2\big\| P_*^\perp v \big\|^2 = E_*^2\Big(\big\| v \big\|^2 - \big\| P_*v \big\|^2\Big) .
\end{equation}
Plugging this in \eqref{eq:1h} produces
\begin{equation}\label{eq:1n}
    \big(E_*^2-|\lambda|^2-5\epsi \big) \big\| v \big\|^2 - E_*^2\big\| P_*v \big\|^2 \leq \big\| \Di_* v \big\|^2 - \big(|\lambda|^2+5\epsi \big) \big\| v \big\|^2 \leq  3\epsi^{-1} \big\| e^\varphi f \big\|^2.
\end{equation}

\textbf{3.} We bound $\| P_*v \|$. Recall that  $\varphi = \varphi_1+\varphi_2$ was defined in \eqref{eq:1i}.
Set $w = e^{\varphi_2} u$ and note that $0 = P_*u = P_* e^{-\varphi_2} w$. Moreover, $P_*$ commutes with $e^{\varphi_1}$. Therefore,
\begin{equation}
    P_*v = P_* e^{\varphi_1+\varphi_2} u = e^{\varphi_1} P_* w = e^{\varphi_1} P_*(1-e^{-\varphi_2}) w.
 \end{equation}
It follows that:
\begin{equation}\label{eq:1p}
    \big\| P_*v \big\|^2 = \int_{\R^2} e^{2\varphi_1(x_1)} \left|\phi_*(x_2)\right|^2
    \left| \int_\R \ove{\phi_*(y_2)}^\top \left(1-e^{-\varphi_2(y_2)}\right) w(x_1,y_2)\,dy_2 \right|^2 dx .
\end{equation}
Below we use the notation $\Phi(y_2) = \ove{\phi_*(y_2)}^\top (1-e^{-\varphi_2(y_2)})$ and we estimate the integral \eqref{eq:1p}. First realize the integral in $x_2$, using that $\phi_*$ is normalized. Then use the Cauchy--Schwarz inequality on the integral in $y_2$ to obtain:
\begin{align}
    \big\| P_*v \big\|^2 & = \int_\R e^{2\varphi_1(x_1)}
    \left| \int_\R \Phi(y_2)\, w(x_1,y_2)\,dy_2 \right|^2 dx_1
    \\
    & \leq \big\| \Phi \big\|^2_{L^2(\R)} \int_\R e^{2\varphi_1(x_1)}
    \big\| w(x_1,\cdot) \big\|^2_{L^2(\R)} dx_1
     \; = \; \big\| \Phi \big\|^2_{L^2(\R)} \big\| e^{\varphi_1} w \big\|^2 \; = \; \big\| \Phi \big\|^2_{L^2(\R)} \big\| v \big\|^2 .
\end{align}

We now estimate $\| \Phi \|_{L^2(\R)}$. The bounds $\left|1-e^{-s}\right| \leq |s|e^{|s|}$ and $|\varphi_2(x_2)| \leq \epsi |x_2|$ yield 
\begin{equation}\label{eq:Phi-pointwise}
    \big|1-e^{-\varphi_2(x_2)}\big| \; \leq \; \epsi\, |x_2|\, e^{\epsi|x_2|} .
\end{equation}
Recall from \Cref{lem:zero-mode} that $\left|\phi_*(x_2)\right| \leq M_* e^{-|x_2|}$. Since $\epsi \leq 1/10$, we deduce that:
\begin{equation}
    \big\| \Phi \big\|_{L^2(\R)}^2 \; \leq \; M_*^2 \epsi^2 \int_\R x_2^2\, e^{-2(1-\epsi)|x_2|}\,dx_2 \; \leq \; M_*^2 \epsi^2 \int_\R x_2^2\, e^{-|x_2|}\,dx_2 \; = \; 4M_*^2 \epsi^2 .
\end{equation}
It follows that $\| P_*v \| \leq 2M_*\epsi \| v\|$.

\textbf{4.} Plugging this bound in \eqref{eq:1n} produces:
\begin{equation}\label{eq:1o}
    \big(E_*^2-|\lambda|^2-5\epsi - 4M_*^2 E_*^2\epsi^2 \big) \big\| v \big\|^2 \leq  3\epsi^{-1} \big\| e^\varphi f \big\|^2.
\end{equation}
Because $\epsi \leq 2\gamma_*(\lambda)$ with $\gamma_*(\lambda)$ given by \eqref{eq:def-gamma}, we have $\epsi \leq M_*^{-2}$ and $\epsi \leq \tfrac{E_*^2-|\lambda|^2}{10}$. Then $4M_*^2E_*^2\epsi^2 \leq 4E_*^2\epsi \leq 4\epsi$, so that
\begin{equation}\label{eq:1r}
    E_*^2-|\lambda|^2-5\epsi - 4M_*^2 E_*^2\epsi^2 \geq E_*^2-|\lambda|^2 - 9\epsi \geq \dfrac{E_*^2-|\lambda|^2}{10} \geq \epsi.
\end{equation}
Plugging \eqref{eq:1r} in \eqref{eq:1o} (and recalling that $|\lambda| < E_* \leq 1$), we obtain
\begin{equation}
    \big\| v \big\| \leq 2\epsi^{-1} \big\| e^\varphi f \big\|.
\end{equation}

\textbf{5.} Recall that $\varphi$ depends on $N$. We now emphasize this dependence by writing $\varphi = \varphi^{(N)}$ below. As $N \rightarrow \infty$, $\varphi^{(N)}$ converges to $\epsi \zeta\cdot x$. Therefore, by Fatou's lemma:
\begin{equations}
   \big\| e^{\epsi x \cdot \zeta} u \big\|^2 = \int_{\R^2} e^{2\epsi\, x \cdot \zeta} \left|u\right|^2 dx = \int_{\R^2} \liminf_{N \rightarrow \infty} e^{2\varphi^{(N)}} \left|u\right|^2 dx
   \\
   \leq \liminf_{N \rightarrow \infty} \int_{\R^2} e^{2\varphi^{(N)}} \left|u\right|^2 dx = \liminf_{N \rightarrow \infty} \big\| e^{\varphi^{(N)}} u \big\|^2 \leq 4\epsi^{-2} \cdot \liminf_{N \rightarrow \infty} \big\| e^{\varphi^{(N)}} f \big\|^2 \leq 4\epsi^{-2} \cdot\big\| e^{\epsi|x \cdot \zeta|} f \big\|^2,
\end{equations}
where we used $|\varphi (x)| \leq \epsi|x \cdot \zeta|$ in the last line.

In particular, $e^{\epsi x \cdot \zeta} u$ is in $L^2$. Changing $\zeta$ to $-\zeta$ produces $e^{-\epsi x \cdot \zeta} u \in L^2$ with the same bound. We can then conclude using the inequality $e^{\epsi |x \cdot \zeta|} |u| \leq e^{\epsi x \cdot \zeta} |u| +  e^{-\epsi x \cdot \zeta} |u|$.
It implies, by the triangle inequality on $L^2$:
\begin{equation}
    \big\| e^{\epsi |x \cdot \zeta|} u \big\| \leq \big\| e^{\epsi\, x \cdot \zeta} u \big\|  + \big\| e^{-\epsi x \cdot \zeta} u \big\| \leq 4\epsi^{-1} \cdot \big\| e^{\epsi |x \cdot \zeta|} f \big\|.
\end{equation}
This proves \eqref{eq:agmon-estimate} when $f \in \HH_*^\perp$.
\end{proof}

\begin{lemma}\label{lem:projection-weight} Let $\zeta \in \R^2$ be a unit vector and $0 < \epsi < 1$.
    If $e^{\epsi |x \cdot \zeta|} f \in L^2$, then $e^{\epsi |x \cdot \zeta|} P_*f$ and $e^{\epsi |x \cdot \zeta|} P_*^\perp f$ are both in $L^2$.
\end{lemma}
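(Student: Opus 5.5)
Since $P_*^\perp f = f - P_*f$ and $e^{\epsi|x\cdot\zeta|}f \in L^2$ by assumption, it suffices to show $e^{\epsi|x\cdot\zeta|}P_*f \in L^2$. I would do this by direct computation with the formula $P_*f(x) = a_*(x_1)\phi_*(x_2)$, $a_*(x_1) = \int_\R \ove{\phi_*(y_2)}^\top f(x_1,y_2)\,dy_2$. The key pointwise inequality is the triangle inequality for the weight,
\begin{equation}
    |x\cdot\zeta| \;\le\; |(x_1,y_2)\cdot\zeta| + |\zeta_2|\,|x_2-y_2| \;\le\; |(x_1,y_2)\cdot\zeta| + |x_2| + |y_2|,
\end{equation}
so that
\begin{equation}
    e^{\epsi|x\cdot\zeta|}\,|P_*f(x)| \;\le\; e^{\epsi|x_2|}|\phi_*(x_2)| \int_\R e^{\epsi|y_2|}|\phi_*(y_2)|\; e^{\epsi|(x_1,y_2)\cdot\zeta|}|f(x_1,y_2)|\,dy_2 .
\end{equation}

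Next I would use the bound $|\phi_*(s)| \le M_* e^{-|s|}$ from \Cref{lem:zero-mode}. Because $\epsi<1$, the function $\Psi(s) \defeq e^{\epsi|s|}|\phi_*(s)| \le M_* e^{-(1-\epsi)|s|}$ lies in $L^2(\R)$, with $\norm{\Psi}_{L^2(\R)}^2 \le M_*^2/(1-\epsi)$. Applying Cauchy--Schwarz in $y_2$ gives
\begin{equation}
    e^{\epsi|x\cdot\zeta|}|P_*f(x)| \;\le\; \Psi(x_2)\,\norm{\Psi}_{L^2(\R)}\,\Big\|e^{\epsi|(x_1,\cdot)\cdot\zeta|}f(x_1,\cdot)\Big\|_{L^2(\R)} .
\end{equation}
Squaring and integrating in $x_2$, then in $x_1$, yields
\begin{equation}
    \norm{e^{\epsi|x\cdot\zeta|}P_*f} \le \frac{M_*^2}{1-\epsi}\,\norm{e^{\epsi|x\cdot\zeta|}f}.
\end{equation}
This is the desired estimate. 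Fubini is justified since all the integrands are nonnegative.

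The only real point is the weight-shifting inequality. It works because $P_*$ only integrates in the $x_2$ variable, where $\phi_*$ decays at rate $1>\epsi$, so moving the weight from $y_2$ to $x_2$ costs a factor absorbed by the exponential decay of $\phi_*$. I do not expect a genuine obstacle here; the one thing to track is that the constant degenerates as $\epsi\to 1$, which is consistent with the hypothesis $\epsi<1$.
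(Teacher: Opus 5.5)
Your proof is correct. It shares the paper's core mechanism: Cauchy--Schwarz in the transverse variable $y_2$, with the decay $|\phi_*(s)|\le M_* e^{-|s|}$ absorbing an $e^{\epsi|s|}$ weight on both sides. This produces the same factor $M_*^2/(1-\epsi)$.

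The difference lies in how the weight is handled. The paper reuses the truncated separable weight $\varphi=\pi_N(\epsi\zeta_1x_1)+\pi_N(\epsi\zeta_2x_2)$ from the proof of \Cref{prop:agmon}. The $x_1$ part commutes with $P_*$, and the $x_2$ part is split as $e^{\varphi_2(x_2)}$ on the output and $e^{-\varphi_2(y_2)}e^{\varphi_2(y_2)}$ inside $a_*$. This gives $\norm{e^{\varphi}P_*f}\le \frac{M_*^2}{1-\epsi}\norm{e^{\varphi}f}$. The paper then lets $N\to\infty$ via Fatou, which only yields the one-sided weight $e^{\epsi x\cdot\zeta}$. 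It recovers $e^{\epsi|x\cdot\zeta|}$ by applying this to $\pm\zeta$, at the cost of a factor $2$.

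Your triangle inequality $|x\cdot\zeta|\le|(x_1,y_2)\cdot\zeta|+|x_2|+|y_2|$ handles the absolute-value weight in one step. You need no truncation, no Fatou, and no $\pm\zeta$ splitting, and Tonelli suffices since all integrands are nonnegative. You also get the slightly better constant $M_*^2/(1-\epsi)$ instead of the paper's $2M_*^2/(1-\epsi)$.

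The paper's route is mainly an economy of reusing the Agmon machinery. There, boundedness of the weight is genuinely needed to know a priori that $e^{\varphi}u\in L^2$. For this lemma, your argument is the more direct one.
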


By the closed graph theorem, \Cref{lem:projection-weight} is equivalent to saying that $P_*$ and $P_*^\perp$ are bounded operators on $e^{-\epsi |x \cdot \zeta|} L^2$.

\begin{proof} Set $\varphi(x) = \varphi_1(x_1)+\varphi_2(x_2)$ as in \eqref{eq:1i}. We have:
\begin{equations}
    \norm{ e^{\varphi} P_* f }^2 = \int_{\R^2} e^{2\varphi_1(x_1)} e^{2\varphi_2(x_2)} |\phi_*(x_2)|^2 \left|\int_\R \ove{\phi_*(y_2)}^\top f(x_1,y_2) dy_2 \right|^2 dx
    \\
    = \int_\R e^{2\varphi_1(x_1)} \left|\int_\R \ove{\phi_*(y_2)}^\top f(x_1,y_2) dy_2 \right|^2 dx_1 \cdot \int_\R e^{2\varphi_2(x_2)} |\phi_*(x_2)|^2 dx_2
    \\
    \leq \dfrac{M_*^2}{1-\epsi} \cdot \int_\R e^{2\varphi_1(x_1)} \left|\int_\R e^{-\varphi_2(y_2)}\ove{\phi_*(y_2)}^\top \cdot e^{\varphi_2(y_2)} f(x_1,y_2) dy_2 \right|^2 dx_1
    \\
    \leq \dfrac{M_*^4}{(1-\epsi)^2} \cdot \int_\R e^{2\varphi_1(x_1)} \int_\R e^{2\varphi_2(y_2)} \left|f(x_1,y_2)\right|^2 dy_2\, dx_1 = \dfrac{M_*^4}{(1-\epsi)^2} \| e^{\varphi} f\|^2.
\end{equations}
In the first line, we used the definition of $P_*$ and $\varphi(x) = \varphi_1(x_1)+\varphi_2(x_2)$; in the second, Fubini's theorem; in the third, the bound $|\phi_*(x_2)| \leq M_* e^{-|x_2|}$; in the fourth, we used the Cauchy--Schwarz inequality. Using an argument similar to Step 5 in the proof of \Cref{prop:agmon}, we conclude that
\begin{equation}\label{eq:1w}
    \norm{ e^{\epsi |x\cdot \zeta|} P_* f } \leq \dfrac{2M_*^2}{1-\epsi} \| e^{\epsi |x\cdot \zeta|} f\|.
\end{equation}
This proves $e^{\epsi |x \cdot \zeta|} P_* f \in L^2$. We obtain $e^{\epsi |x \cdot \zeta|} P_*^\perp f \in L^2$ by using $P_*^\perp f =f-P_*f$. This completes the proof.
\end{proof}

\begin{corollary}\label{prop:agmon-preview} Let $|\lambda| < E_*$, $0 < \gamma \leq \gamma_*(\lambda)$, with $\gamma_*(\lambda)$ given by \eqref{eq:def-gamma}. The operator $B_*(\lambda)$ is bounded from $\EE_\gamma$ to $\EE_{\gamma/2}$; more specifically:
    \begin{equations}\label{eq:agmon-radial}
        \norm{B_*(\lambda)f}_{\EE_{\gamma/2}} \leq 2^5 M_*^2 \gamma^{-1} \norm{f}_{\EE_\gamma}.
    \end{equations}
\end{corollary}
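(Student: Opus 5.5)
We reduce the radial weights $e^{\gamma|x|}$ and $e^{\gamma|x|/2}$ to the directional weights $e^{\epsi|x\cdot\zeta|}$ handled by \Cref{prop:agmon} and \Cref{lem:projection-weight}. We use the elementary comparisons
\begin{equation}
    e^{\gamma|x|/2} \leq e^{\gamma|x_1|/2} + e^{\gamma|x_2|/2}, \qquad e^{\epsi|x\cdot\zeta|} \leq e^{\epsi|x|},
\end{equation}
where the first holds because $|x| \leq |x_1|+|x_2| \leq 2\max(|x_1|,|x_2|)$. Both estimates are applied with the coordinate directions $\zeta = e_1$ and $\zeta = e_2$, taking $\epsi = \gamma$ in each.

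First, since $B_*(\lambda) = (\Di_*-\lambda)^{-1}P_*^\perp$, we write $B_*(\lambda)f = B_*(\lambda)P_*^\perp f$ with $P_*^\perp f \in \HH_*^\perp$. By the triangle inequality,
\begin{equation}
    \norm{B_*(\lambda)f}_{\EE_{\gamma/2}} \leq \norm{e^{\gamma|x_1|/2}B_*(\lambda)f} + \norm{e^{\gamma|x_2|/2}B_*(\lambda)f}.
\end{equation}
For each $j\in\{1,2\}$ we bound $e^{\gamma|x_j|/2} \leq e^{\gamma|x_j|}$ and apply \Cref{prop:agmon} with $\zeta = e_j$ and $\epsi = \gamma \leq \gamma_*(\lambda) \leq 2\gamma_*(\lambda)$. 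This gives
\begin{equation}
    \norm{e^{\gamma|x_j|}B_*(\lambda)f} \leq 4\gamma^{-1}\norm{e^{\gamma|x_j|}P_*^\perp f}.
\end{equation}
Next, \eqref{eq:1w} with $\zeta = e_j$ gives
\begin{equation}
    \norm{e^{\gamma|x_j|}P_*^\perp f} \leq \Big(1+\tfrac{2M_*^2}{1-\gamma}\Big)\norm{e^{\gamma|x_j|}f} \leq \Big(1+\tfrac{2M_*^2}{1-\gamma}\Big)\norm{f}_{\EE_\gamma}.
\end{equation}

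It remains to collect constants. We have $\gamma \leq 1/10$, so $1/(1-\gamma) \leq 10/9$. We also have $M_* \geq 1$ up to harmless adjustment. To check this, note that $1 = \norm{\phi_*}^2 \leq M_*^2\int e^{-2|s|}ds = M_*^2$. Hence $1+\tfrac{2M_*^2}{1-\gamma} \leq 4M_*^2$. Summing the two directions gives
\begin{equation}
    \norm{B_*(\lambda)f}_{\EE_{\gamma/2}} \leq 2\cdot 4\gamma^{-1}\cdot 4M_*^2\,\norm{f}_{\EE_\gamma} = 2^5 M_*^2\gamma^{-1}\norm{f}_{\EE_\gamma},
\end{equation}
which is \eqref{eq:agmon-radial}.

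The main point to be careful about is justifying the application of \Cref{prop:agmon} to $P_*^\perp f$. The proposition requires $e^{\epsi|x\cdot\zeta|}P_*^\perp f \in L^2$, and this is supplied exactly by \Cref{lem:projection-weight}. With that in hand, everything else is bookkeeping. Halving the weight from $\gamma$ to $\gamma/2$ is what allows the radial weight to be split into two directional ones without losing more than a factor of $2$.
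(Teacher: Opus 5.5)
Your argument is the paper's. You reduce the radial weight to the coordinate weights $e^{\gamma|x_j|}$, control $P_*^\perp f$ by \eqref{eq:1w}, and apply \Cref{prop:agmon} with $\zeta=e_j$, $\epsi=\gamma$. The only difference is how you combine the two directions: you use the triangle inequality, which lands exactly on $2^5$, whereas the paper sums squares via $e^{\gamma|x|}\le e^{2\gamma|x_1|}+e^{2\gamma|x_2|}$ and lands on $2^{9/2}$.

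There is one slip to fix. The displayed comparison $e^{\gamma|x|/2}\le e^{\gamma|x_1|/2}+e^{\gamma|x_2|/2}$ is false: at $x_1=x_2=t$ the left side is $e^{\gamma t/\sqrt2}$, which exceeds $2e^{\gamma t/2}$ for large $t$. The norm inequality you derive from it fails for functions concentrated near the diagonal. Your stated reason, $|x|\le 2\max(|x_1|,|x_2|)$, actually gives $e^{\gamma|x|/2}\le e^{\gamma\max(|x_1|,|x_2|)}\le e^{\gamma|x_1|}+e^{\gamma|x_2|}$. That is exactly the inequality you end up using after the step $e^{\gamma|x_j|/2}\le e^{\gamma|x_j|}$. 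So deleting the intermediate half-weights leaves the rest of your proof, and the constant $2^5M_*^2\gamma^{-1}$, intact.
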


\begin{proof} Note the inequality:
    \begin{equation}
        e^{\gamma|x|} \le e^{2\gamma \max(|x_1|,|x_2|)} \leq e^{2\gamma|x_1|} + e^{2\gamma|x_2|}.
    \end{equation}
    Let $f \in \EE_\gamma$ and $k \in \{1,2\}$. Since $e^{\gamma|x_k|} \leq e^{\gamma|x|}$, we have $e^{\gamma|x_k|} f \in L^2$. By \eqref{eq:1w}, using $P_*^\perp = \Id-P_*$:
    \begin{equation}\label{eq:agmon-preview-projection}
        \norm{e^{\gamma|x_k|} P_*^\perp f} \leq \left(1 + \dfrac{2M_*^2}{1-\gamma}\right) \norm{e^{\gamma|x_k|} f} \leq 2^2 M_*^2 \norm{f}_{\EE_\gamma}.
    \end{equation}
    In the last inequality, we used $\gamma \leq \gamma_*(\lambda) \leq 1/20$ and $M_* \geq 1$ (which we get from $1 = \norm{\phi_*}_{L^2(\R)}^2 \leq M_*^2 \int_\R e^{-2|x_2|} dx_2 = M_*^2$). Moreover, $B_*(\lambda) f = B_*(\lambda) P_*^\perp f$ with $P_*^\perp f \in \HH_*^\perp$.
    Therefore, using \Cref{prop:agmon} applied to $P_*^\perp f \in \HH_*^\perp$:
    \begin{align}
        \norm{B_*(\lambda)f}_{\EE_{\gamma/2}}^2 & \leq \big\|e^{\gamma |x_1|} B_*(\lambda)f\big\|^2 + \big\|e^{\gamma |x_2|} B_*(\lambda)f\big\|^2
        \\
        & \leq 2^4 \gamma^{-2} \left( \big\|e^{\gamma |x_1|} P_*^\perp f\big\|^2 + \big\|e^{\gamma |x_2|} P_*^\perp f\big\|^2 \right) \leq 2^9 M_*^4 \gamma^{-2} \norm{f}_{\EE_\gamma}^2.
    \end{align}
    This completes the proof.
\end{proof}

We mention that $B_*(\lambda)$ is most likely bounded from $\EE_\gamma$ to itself, though we will not need this in this work.

\subsection{The full resolvent}\label{subsec:resolvent-structure}

It remains to recombine the two halves of the decomposition.

\begin{proof}[Proof of \Cref{thm:resolvent-structure}]
\textbf{1.} We note that for $\Im\lambda>0$:
\begin{equation}\label{eq:resolvent-split}
    (\Di_*-\lambda)^{-1} \; = \; (\Di_*-\lambda)^{-1}P_* \; + \; (\Di_*-\lambda)^{-1}P_*^\perp \; = \; A_*(\lambda)\ \; + \; B_*(\lambda),
\end{equation}
where each term maps $L^2$ to $L^2$, in particular $L^2_\comp$ to $L^2_\loc$.

By \Cref{prop:edge-resolvent} and the discussion preceding \Cref{prop:agmon-preview}, each term in \eqref{eq:resolvent-split} continues analytically to $\lambda \in \Omega_*$ as operators from $L^2_\comp$ to $L^2_\loc$. This implies that $(\Di_*-\lambda)^{-1}$ continues analytically to $\lambda \in \Omega_*$ as an operator $R_*(\lambda)$ from $L^2_\comp$ to $L^2_\loc$.

According to \Cref{prop:edge-resolvent} and \eqref{eq:agmon-radial}, for $\lambda \in \tOmega_*$ and $\gamma$ such that $|\Im\lambda| < \gamma \le \gamma_*(\lambda)$, both $A_*(\lambda)$ and $B_*(\lambda)$ extend by density as operators $\EE_\gamma \rightarrow L^2_\loc$. Therefore, so does $R_*(\lambda)$. Finally, we obtain the spatial asymptotics \eqref{eq:resolvent-asymptotics} for $R_*(\lambda)$ by simply  plugging the spatial asymptotics \eqref{eq:edge-resolvent} and \eqref{eq:agmon-radial} for $A_*(\lambda)$ and $B_*(\lambda)$.

It remains to upgrade $L^2_\loc$ to $H^1_\loc$ in both statements. In the sense of distributions, we have for every $f \in L^2_\comp$, respectively every $f \in \EE_\gamma$,
\begin{equation}\label{eq:resolvent-equation}
    \left(\Di_*-\lambda\right)R_*(\lambda)f = f.
\end{equation}
Indeed, this identity holds when $\Im \lambda > 0$ and extends analytically to all $\lambda \in \Omega_*$. In particular, $u = R_*(\lambda)f$ satisfies $\left(\Di_*-\lambda\right) u \in L^2_\loc$. Elliptic regularity implies that $u \in H^1_\loc$. By the closed graph theorem, the operator $R_*(\lambda)$ maps $L^2_\comp$ to $H^1_\loc$ for $\lambda \in \Omega_*$, and $\EE_\gamma$ to $H^1_\loc$ for $\lambda \in \tOmega_*$. This completes the proof.
\end{proof}

\section{Tilted domain walls}\label{sec:tilted}

We consider now domain walls whose interface is an arbitrary line through the origin, instead of the horizontal line $\R e_1$ of \Cref{def:domain-wall}. We label such a line by a unit vector $e$ spanning it.

\subsection{The tilted interface}\label{subsec:tilted}

Throughout this section we fix a unit vector $e \in \R^2$. We denote by $e^\perp$ the rotation of $e$ by $\pi/2$, and by $O_e$ the rotation taking $e_1$ to $e$:
\begin{equation}\label{eq:rotation}
    e = \begin{bmatrix} \cos\theta \\ \sin\theta \end{bmatrix},
    \qquad
    e^\perp \defeq \begin{bmatrix} -\sin\theta \\ \cos\theta \end{bmatrix},
    \qquad
    O_e \defeq \begin{bmatrix} \cos\theta & -\sin\theta \\ \sin\theta & \cos\theta \end{bmatrix} \in SO(2),
\end{equation}
so that $O_e e_1 = e$ and $O_e e_2 = e^\perp$, where $e_1, e_2$ is the canonical basis of $\R^2$. The angle $\theta$ is determined by $e$ modulo $2\pi$, while $e^\perp$ and $O_e$ depend on $e$ alone.

\begin{definition}\label{def:tilted}
Given a horizontal domain wall $\kappa_*$ with profile $k_*$ and a unit vector $e \in \R^2$, the tilted domain wall in the direction $e$ is $\kappa_e(x) \defeq k_*\left(x\cdot e^\perp\right)$, and the tilted Dirac operator is
\begin{equation}\label{eq:dirac-e}
    \Di_e \defeq D_1\sigma_1 + D_2\sigma_2 + \kappa_e(x)\sigma_3
    = D_1\sigma_1 + D_2\sigma_2 + k_*\left(\cos\theta\, x_2 - \sin\theta\, x_1\right)\sigma_3 ,
\end{equation}
acting on $L^2$ with domain $H^1$.
\end{definition}

The mass now changes sign across the line $\R e$, rather than across $\R e_1$; see \Cref{fig:tilted}. 

\begin{figure}[ht]
\centering
\begin{minipage}[c]{0.52\textwidth}
    \centering
    \includegraphics[width=\linewidth]{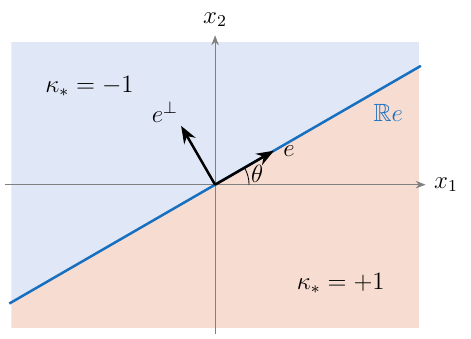}
\end{minipage}\hfill
\begin{minipage}[c]{0.45\textwidth}
\caption{The tilted domain wall of \Cref{def:tilted}. The mass $\kappa_e$ equals $-1$ on the shaded blue side $\{x\cdot e^\perp \ge 1\}$ and $+1$ on the shaded red side $\{x\cdot e^\perp \le -1\}$. The edge states \eqref{eq:psi-lambda-e} are supported near $\R e$ and propagate in the direction $e$.}
\label{fig:tilted}
\end{minipage}
\end{figure}

\begin{proposition}\label{prop:tilted}
For $e \in \R^2$ a unit vector, let $S_e$ be the operator on $L^2$ given by
\begin{equation}\label{eq:U-e}
    \left(S_e f\right)(x) \defeq U_e\, f\left(O_e x\right), \qquad U_e \defeq \begin{bmatrix} e^{i\theta/2} & 0 \\ 0 & e^{-i\theta/2}\end{bmatrix} \in SU(2).
\end{equation}
Then $S_e$ is unitary, maps $H^1$ onto itself, and
\begin{equation}\label{eq:tilt-conjugation}
    \Di_e = S_e^{-1}\, \Di_*\, S_e .
\end{equation}
\end{proposition}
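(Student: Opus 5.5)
The identity \eqref{eq:tilt-conjugation} should follow from a direct change-of-variables computation, using the fact that the spin rotation $U_e$ implements the rotation $O_e$ on the Pauli vector $\sigma$. I will first dispose of the functional-analytic claims. $O_e$ is a rotation, so $x\mapsto O_ex$ preserves Lebesgue measure, and $U_e$ is a unitary matrix; hence $S_e$ is an isometry of $L^2$. Its inverse is $(S_e^{-1}h)(x) = U_e^{-1}h(O_e^{-1}x)$, which is of the same form, so $S_e$ is unitary. Since $\nabla\left(f\circ O_e\right)(x) = O_e^{\top}(\nabla f)(O_ex)$ and $O_e^\top$ is orthogonal, $S_e$ and $S_e^{-1}$ preserve $\norm{\nabla f}_{L^2}$, so both map $H^1$ onto $H^1$. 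In particular $S_e^{-1}\Di_*S_e$ has domain $H^1$, matching that of $\Di_e$, and it suffices to check \eqref{eq:tilt-conjugation} on $H^1$ functions, or on $C^\infty_c$ by density.

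The key algebraic step is the spin-rotation identity
\begin{equation}
U_e^{-1}\sigma_1U_e = \cos\theta\,\sigma_1+\sin\theta\,\sigma_2,\qquad U_e^{-1}\sigma_2U_e = -\sin\theta\,\sigma_1+\cos\theta\,\sigma_2,\qquad U_e^{-1}\sigma_3U_e=\sigma_3 .
\end{equation}
These follow by multiplying $2\times2$ matrices. For instance, the $(1,2)$ entry of $U_e^{-1}\sigma_1U_e$ is $e^{-i\theta/2}\cdot e^{-i\theta/2}=e^{-i\theta}$, which is also the $(1,2)$ entry of $\cos\theta\,\sigma_1+\sin\theta\,\sigma_2$. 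The third identity holds because $U_e$ is diagonal. Compactly, $U_e^{-1}\sigma_jU_e = \sum_k (O_e)_{kj}\sigma_k$.

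For the kinetic part, write $g(x) = f(O_ex)$ and apply the chain rule $D_jg(x) = \sum_k (O_e)_{kj}(D_kf)(O_ex)$. This gives
\begin{equation}
(\sigma\cdot D)S_ef(x) = U_e\sum_{j,k}(O_e)_{kj}\,U_e^{-1}\sigma_jU_e\,(D_kf)(O_ex).
\end{equation}
One then has to combine the two rotation matrices. Depending on whether the spin identity reads with $O_e$ or $O_e^\top$, the product is either $O_eO_e^\top=\Id$, or $O_e^\top O_e$ with the index contraction arranged so that it reduces to $\delta_{kl}$. In either case this yields $(\sigma\cdot D)S_ef = S_e\big((\sigma\cdot D)f\big)$, i.e. the free Dirac operator commutes with $S_e$. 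This index bookkeeping is the only delicate point: the orientation conventions for $\theta$, $O_e$ and $U_e$ must match. If a sign comes out wrong, I will recheck it on the explicit case $\theta=\pi/2$ before anything else.

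For the mass term, $U_e$ commutes with $\sigma_3$. Hence
\begin{equation}
S_e^{-1}\big(\kappa_*\sigma_3S_ef\big)(x) = \kappa_*(O_e^{-1}x)\,\sigma_3f(x).
\end{equation}
Moreover $(O_e^{-1}x)_2 = x\cdot O_ee_2 = x\cdot e^\perp$, so $\kappa_*(O_e^{-1}x) = k_*(x\cdot e^\perp) = \kappa_e(x)$. Adding the kinetic and mass contributions gives $S_e^{-1}\Di_*S_e = D_1\sigma_1+D_2\sigma_2+\kappa_e\sigma_3 = \Di_e$ on $H^1$, which is \eqref{eq:tilt-conjugation}.
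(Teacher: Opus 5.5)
Your proposal is correct and is the standard direct covariance check. The paper gives no argument of its own and just cites \cite[Lemma 2.1]{BalEtAl23}. The same spin-rotation identity does appear in the paper's proof of \Cref{lem:edge-flux-tilted}, in the inverse form $U_e\sigma_jU_e^*=\sum_k(O_e)_{jk}\sigma_k$.

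One correction: the orientation does matter, so drop the hedge ``in either case''. Your stated identity $U_e^{-1}\sigma_jU_e=\sum_l(O_e)_{lj}\sigma_l$ is the right one, and with your chain rule it gives $\sum_j(O_e)_{kj}\,U_e^{-1}\sigma_jU_e=\sum_l(O_eO_e^\top)_{kl}\sigma_l=\sigma_k$. The transposed version would instead produce $(O_e^2)_{kl}$, which is not $\delta_{kl}$. Replace the hedge with this explicit contraction.
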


For the proof, see \cite[Lemma 2.1]{BalEtAl23}. We can now transfer everything proved in \S\ref{sec:unperturbed}--\ref{sec:resolvent} for $\Di_*$ to $\Di_e$. Below, $\phi_*$ and $K_*$ are as in \Cref{lem:zero-mode}, and $\psi_{\lambda,*} = \psi_{\lambda,e_1}$ is the edge state \eqref{eq:psi-lambda}.

\begin{corollary}\label{cor:tilted}
Let $\kappa_*$ be a horizontal domain wall and let $e \in \R^2$ be a unit vector. Then:
\begin{enumerate}[label=(\alph*)]
    \item $\Di_e$ is self-adjoint on $L^2$ with domain $H^1$, and $\spec(\Di_e) = \spec(\Di_*)$.
    \item For every $\lambda \in \R$, the function $\psi_{\lambda,e} \defeq S_e^{-1}\psi_{\lambda,*}$ is given by
    \begin{equation}\label{eq:psi-lambda-e}
        \psi_{\lambda,e}(x) = e^{i\lambda\, x\cdot e}\, \phi_e\left(x \cdot e^\perp\right),
        \qquad
        \phi_e(s) \defeq U_e^*\phi_* = c_* \begin{bmatrix} e^{-i\theta/2} \\ e^{i\theta/2}\end{bmatrix} e^{K_*(s)},
    \end{equation}
    and it satisfies $\Di_e \psi_{\lambda,e} = \lambda\, \psi_{\lambda,e}$ together with
    \begin{equation}\label{eq:tilted-chirality}
        \left(\sigma\cdot e\right)\phi_e = \phi_e , \qquad \sigma\cdot e \defeq \cos \theta \sigma_1 + \sin \theta \sigma_2, \qquad e = \begin{bmatrix} \cos\theta \\ \sin\theta \end{bmatrix}.
    \end{equation}
\end{enumerate}
\end{corollary}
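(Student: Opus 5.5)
Both parts follow by conjugating with the unitary $S_e$ of \Cref{prop:tilted}.

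For (a), \eqref{eq:tilt-conjugation} gives $\Di_e = S_e^{-1}\Di_* S_e$. Since $S_e$ is unitary and maps $H^1$ onto itself, $\Di_e$ is unitarily equivalent to $\Di_*$ with domain $S_e^{-1}H^1 = H^1$. Self-adjointness therefore transfers from $\Di_*$ (\Cref{def:dirac} and the remark after it), and so does the spectrum, because unitary equivalence preserves spectra.

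For (b), first compute $S_e^{-1}$. Since $O_e^{-1} = O_e^\top$ and $U_e^{-1} = U_e^*$, we have $(S_e^{-1}g)(x) = U_e^*\,g(O_e^{-1}x)$. Next identify the coordinates: $(O_e^{-1}x)_1 = x\cdot O_e e_1 = x\cdot e$ and $(O_e^{-1}x)_2 = x\cdot O_e e_2 = x\cdot e^\perp$. It follows that
\[
S_e^{-1}\psi_{\lambda,*}(x) = e^{i\lambda x\cdot e}\,U_e^*\phi_*(x\cdot e^\perp).
\]
Using \eqref{eq:phi0}, $U_e^*$ multiplies the two components of $\phi_*$ by $e^{-i\theta/2}$ and $e^{i\theta/2}$ respectively, which is exactly \eqref{eq:psi-lambda-e}. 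The eigen-equation then follows from \eqref{eq:tilt-conjugation}:
\[
\Di_e\psi_{\lambda,e} = S_e^{-1}\Di_*\psi_{\lambda,*} = \lambda\,S_e^{-1}\psi_{\lambda,*} = \lambda\,\psi_{\lambda,e}.
\]
Here we use $\Di_*\psi_{\lambda,*} = \lambda\psi_{\lambda,*}$, which holds by \Cref{lem:decomposition}(a) with $\beta = e^{i\lambda x_1}$, understood in the distributional sense since $\psi_{\lambda,*}\notin L^2$.

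For the chirality relation \eqref{eq:tilted-chirality}, write $\sigma\cdot e = \begin{bmatrix}0 & e^{-i\theta}\\ e^{i\theta} & 0\end{bmatrix}$ and apply it to $(e^{-i\theta/2}, e^{i\theta/2})^\top$:
\[
\begin{bmatrix}0 & e^{-i\theta}\\ e^{i\theta} & 0\end{bmatrix}\begin{bmatrix}e^{-i\theta/2}\\ e^{i\theta/2}\end{bmatrix} = \begin{bmatrix}e^{-i\theta/2}\\ e^{i\theta/2}\end{bmatrix}.
\]
Equivalently, $U_e(\sigma\cdot e)U_e^* = \sigma_1$, and $\sigma_1\phi_* = \phi_*$ by \Cref{lem:zero-mode}.

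The only care needed is in the sign and inverse conventions for $O_e$ and $U_e$, that is, in checking that $S_e^{-1}$ involves $U_e^*$ and $O_e^{-1}$ in the way claimed.
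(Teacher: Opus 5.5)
Your proposal is correct and follows the paper's argument: the paper states the corollary is a direct consequence of the conjugation $\Di_e = S_e^{-1}\Di_* S_e$ in \Cref{prop:tilted}, and your formulas for $S_e^{-1}$ and $U_e^*\phi_*$, as well as the identity $U_e(\sigma\cdot e)U_e^* = \sigma_1$, all check out. One point you use implicitly: you apply the conjugation identity to the non-$L^2$ function $\psi_{\lambda,*}$, which is legitimate because it is a pointwise identity of differential expressions that holds on $H^1_\loc$ (or in the sense of distributions), not only on the operator domain $H^1$.
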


The proof is a direct consequence of \Cref{prop:tilted}. The same conjugation transports the flux computation of \Cref{lem:edge-flux} to an arbitrary direction:

\begin{lemma}\label{lem:edge-flux-tilted}
Let $\lambda \in \R$, let $\chi$ be as in \Cref{lem:edge-flux}, and let $\zeta \in \R^2$ be a unit vector with $c \defeq \zeta\cdot e > 0$. Set
\begin{equation}\label{eq:edge-flux-w-e}
    w_e(x) \defeq \chi\left(x\cdot \zeta\right) e^{i\lambda\, x\cdot e}\, \phi_e\left(x \cdot e^\perp\right) .
\end{equation}
For $R >0$, we have:
\begin{equation}\label{eq:tilted-edge-flux}
    \int_{\partial B_R(0)} J\left(w_e,w_e\right)\cdot n \ d\ell = 1+O\left(e^{-cR/2}\right).
\end{equation}
\end{lemma}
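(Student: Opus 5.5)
Our plan is to reduce \Cref{lem:edge-flux-tilted} to \Cref{lem:edge-flux} by the rotation of \Cref{prop:tilted}. Set $\zeta' \defeq O_e^{-1}\zeta$; this is a unit vector with $\zeta'_1 = \zeta'\cdot e_1 = \zeta\cdot O_e e_1 = \zeta\cdot e = c > 0$. By \Cref{cor:tilted}(b), $\phi_e = U_e^*\phi_*$ and $x\cdot e = (O_e^{-1}x)_1$, $x\cdot e^\perp = (O_e^{-1}x)_2$. Hence, writing $y = O_e^{-1}x$,
\begin{equation}
    w_e(x) = U_e^*\, w_*(y), \qquad w_*(y) \defeq \chi\left(y\cdot\zeta'\right) e^{i\lambda y_1}\phi_*(y_2),
\end{equation}
which is exactly the function \eqref{eq:edge-flux-w} of \Cref{lem:edge-flux} with $\zeta'$ in place of $\zeta$.

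Next we transform the current. We need the covariance identity $U_e\,(\sigma\cdot v)\,U_e^* = \sigma\cdot(O_e^{-1}v)$ for $v\in\R^2$, i.e.\ conjugation by $U_e$ rotates $\sigma_1,\sigma_2$ by angle $-\theta$; this is a direct $2\times 2$ check: $U_e\sigma_1U_e^*$ has off-diagonal entries $e^{i\theta}, e^{-i\theta}$, so equals $\cos\theta\,\sigma_1 - \sin\theta\,\sigma_2$, and similarly for $\sigma_2$. (This is the same identity underlying \eqref{eq:tilt-conjugation}, so we could alternatively cite it.) It gives, for any $v\in\R^2$,
\begin{equation}
    J(w_e,w_e)(x)\cdot v = \ove{w_*(y)}^\top U_e(\sigma\cdot v)U_e^* w_*(y) = J(w_*,w_*)(y)\cdot O_e^{-1}v,
\end{equation}
that is, $J(w_e,w_e)(x) = O_e\,J(w_*,w_*)(O_e^{-1}x)$.

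Finally we change variables in the flux integral. The rotation $O_e$ maps $\partial B_R(0)$ to itself, preserves arclength, and sends the outward normal $n(y)$ at $y$ to $n(x)=O_e n(y)$. Therefore
\begin{equation}
    \int_{\partial B_R(0)} J(w_e,w_e)\cdot n\,d\ell = \int_{\partial B_R(0)} O_eJ(w_*,w_*)(y)\cdot O_e n(y)\,d\ell(y) = \int_{\partial B_R(0)} J(w_*,w_*)\cdot n\,d\ell,
\end{equation}
and \Cref{lem:edge-flux} with $\zeta'_1 = c$ yields $1+O(e^{-cR/2})$. The only step with real content is the spin covariance of $U_e$; we expect the main obstacle to be sign bookkeeping there (the rotation direction of $U_e$ versus $O_e$). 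We will fix the signs by checking the consistency condition \eqref{eq:tilted-chirality}: it must give $U_e(\sigma\cdot e)U_e^* = \sigma_1$, which is consistent with $\sigma_1\phi_*=\phi_*$ from \Cref{lem:zero-mode}.
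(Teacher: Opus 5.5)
Your proposal is correct and follows essentially the same route as the paper. The paper likewise proves the gauge covariance $J(S_e^{-1}\varphi,S_e^{-1}\psi)(x)=O_e\,J(\varphi,\psi)(O_e^\top x)$ from $U_e\sigma_jU_e^*=\sum_k(O_e)_{jk}\sigma_k$, and identifies $w_e=S_e^{-1}w_*$ with cutoff direction $O_e^\top\zeta$ (so that the first component is $c>0$). It then uses the rotation invariance of $\partial B_R(0)$, its arclength and its outward normal to reduce to \Cref{lem:edge-flux}. Your sign bookkeeping for $U_e(\sigma\cdot v)U_e^*=\sigma\cdot(O_e^{-1}v)$ checks out.
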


\begin{proof} We first establish how $J(\varphi,\psi)$ changes under the gauge transformations of \Cref{prop:tilted}. We have $\left(S_e^{-1}f\right)(x) = U_e^* f\left(O_e^\top x\right)$, so for $\varphi,\psi \in H^1_\loc$ and $j=1,2$,
\begin{equation}
    J_j\left(S_e^{-1}\varphi, S_e^{-1}\psi\right)(x)
    = \ove{\varphi\left(O_e^\top x\right)}^\top \left(U_e \sigma_j U_e^*\right) \psi\left(O_e^\top x\right) .
\end{equation}
A direct computation from \eqref{eq:pauli} and \eqref{eq:U-e} gives
\begin{equation}
    U_e \sigma_1 U_e^* = \cos\theta\, \sigma_1 - \sin\theta\, \sigma_2 , \qquad
    U_e \sigma_2 U_e^* = \sin\theta\, \sigma_1 + \cos\theta\, \sigma_2 ;
\end{equation}
that is, $U_e \sigma_j U_e^* = \sum_k \left(O_e\right)_{jk}\sigma_k$ with $O_e$ as in \eqref{eq:rotation}. Therefore
\begin{equation}\label{eq:current-rotates}
    J\left(S_e^{-1}\varphi, S_e^{-1}\psi\right)(x) = O_e\, J(\varphi,\psi)\left(O_e^\top x\right) .
\end{equation}
Let now $w_*$ be as in \eqref{eq:edge-flux-w}, with cut-off direction $O_e^\top\zeta$; this is legitimate because $O_e^\top\zeta\cdot e_1 = \zeta\cdot O_ee_1 = \zeta\cdot e = c > 0$. Since $O_e^\top x \cdot O_e^\top \zeta = x\cdot\zeta$, $O_e^\top x \cdot e_1 = x \cdot e$ and $O_e^\top x \cdot e_2 = x\cdot e^\perp$, and since $U_e^*\phi_* = \phi_e$ by \eqref{eq:psi-lambda-e}, we have $w_e = S_e^{-1}w_*$. The change of variable $x = O_e y$ maps $\partial B_R(0)$ to itself, preserves the arclength, and takes the outer normal $n(y)$ to $n(x) = O_e n(y)$; so \eqref{eq:current-rotates} and the orthogonality of $O_e$ give
\begin{equation}
    \int_{\partial B_R(0)} J\left(w_e,w_e\right)\cdot n \ d\ell
    = \int_{\partial B_R(0)} O_e J(w_*,w_*)(y) \cdot O_e n(y) \ d\ell(y)
    = \int_{\partial B_R(0)} J(w_*,w_*)\cdot n \ d\ell .
\end{equation}
The conclusion now follows from \Cref{lem:edge-flux}.
\end{proof}

There is also a notion of edge projection: we set
\begin{equation}\label{eq:tilted-projection}
    P_e \defeq S_e^{-1}\,P_*\,S_e , \qquad P_e^\perp \defeq \Id - P_e , \qquad \HH_e \defeq \ran\left(P_e\right) = S_e^{-1}\HH_* .
\end{equation}
Since $S_e$ is unitary, $P_e$ is the orthogonal projection of $L^2$ onto $\HH_e$; and by \Cref{lem:decomposition} together with \eqref{eq:tilt-conjugation}, it commutes with $\Di_e$. Unwinding \eqref{eq:U-e} and using $U_e^*\phi_* = \phi_e$ gives
\begin{equation}\label{eq:tilted-projection-formula}
    \left(P_e f\right)(x) = a_e\left(x\cdot e\right)\, \phi_e\left(x\cdot e^\perp\right) ,
\end{equation}
with $a_e$ as in \eqref{eq:tilted-resolvent-structure}. \Cref{lem:projection-weight} transfers to the tilted setting:

\begin{lemma}\label{lem:projection-weight-tilted}
Let $e, \zeta \in \R^2$ be unit vectors and let $0<\gamma<1$. If $e^{\gamma\left|x\cdot\zeta\right|}f \in L^2$, then $e^{\gamma\left|x\cdot\zeta\right|}P_ef$ and $e^{\gamma\left|x\cdot\zeta\right|}P_e^\perp f$ belong to $L^2$.
\end{lemma}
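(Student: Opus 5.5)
The plan is to transport \Cref{lem:projection-weight} through the unitary conjugation of \Cref{prop:tilted}. By definition \eqref{eq:tilted-projection}, $P_e = S_e^{-1} P_* S_e$ with $(S_e f)(x) = U_e f(O_e x)$. I will first record how the weight transforms under $S_e$. Since $U_e$ is a constant unitary matrix, it does not affect pointwise moduli, and
\begin{equation}
    e^{\gamma|x\cdot \zeta|}\,(S_e f)(x) = U_e\, e^{\gamma|O_e x \cdot O_e\zeta|} f(O_e x).
\end{equation}
Setting $\zeta' \defeq O_e^\top \zeta$, we have $x\cdot\zeta' = O_e x\cdot \zeta$. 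Hence $\norm{e^{\gamma|x\cdot\zeta'|} S_e f} = \norm{e^{\gamma|x\cdot\zeta|} f}$ by the rotation invariance of Lebesgue measure. Applying the same computation to $S_e^{-1}$ gives $\norm{e^{\gamma|x\cdot\zeta|} S_e^{-1} g} = \norm{e^{\gamma|x\cdot \zeta'|} g}$.

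The argument then proceeds in three steps. Suppose $e^{\gamma|x\cdot\zeta|}f\in L^2$.

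First, the previous identity gives $e^{\gamma|x\cdot\zeta'|}S_e f\in L^2$, where $\zeta'$ is again a unit vector.

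Second, \Cref{lem:projection-weight}, applied with the unit vector $\zeta'$ and $\epsi=\gamma\in(0,1)$, yields $e^{\gamma|x\cdot\zeta'|}P_*S_ef\in L^2$. Quantitatively, \eqref{eq:1w} bounds its norm by $\frac{2M_*^2}{1-\gamma}\norm{e^{\gamma|x\cdot\zeta|}f}$.

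Third, applying the identity for $S_e^{-1}$ gives $e^{\gamma|x\cdot\zeta|}P_e f = e^{\gamma|x\cdot\zeta|}S_e^{-1}P_*S_e f\in L^2$, with the same bound. The statement for $P_e^\perp$ then follows from $P_e^\perp f = f - P_e f$ and the triangle inequality.

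There is no real obstacle here. The only point requiring care is the bookkeeping of the direction change $\zeta\mapsto O_e^\top\zeta$, together with checking that $S_e$ preserves the weighted norm up to this change. That check is immediate because $O_e$ is orthogonal and $U_e$ is unitary and constant.
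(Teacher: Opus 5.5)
Your proof is correct and follows the paper's argument: conjugate by the unitary $S_e$, apply \Cref{lem:projection-weight} in the rotated direction $O_e^\top\zeta = O_e^{-1}\zeta$ (using $x\cdot O_e^\top\zeta = O_e x\cdot\zeta$ and rotation invariance of Lebesgue measure), conjugate back through $P_e = S_e^{-1}P_*S_e$, and handle $P_e^\perp = \Id - P_e$ by the triangle inequality. You are more explicit than the paper's one-line sketch, since you also carry the constant $2M_*^2/(1-\gamma)$ from \eqref{eq:1w} through the conjugation.
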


The proof follows directly from applying \Cref{lem:projection-weight} in the direction $O_e^{-1} \zeta$, then using the conjugation relation between $\Di_e$ and $\Di_*$ -- which transfers to $P_e$ and $P_*$. 
Likewise, we obtain the analogue of the analytic continuation of $(\Di_*-\lambda)^{-1}$ to the tilted case -- again we skip the details: 

\begin{theorem}\label{cor:tilted-resolvent}
Let $\kappa_*$ be a horizontal domain wall and let $e \in \R^2$ be a unit vector. The family of operators $\left(\Di_e-\lambda\right)^{-1} : L^2_\comp \rightarrow H^1_\loc$, defined initially for $\lambda \in \C^+$, continues analytically to a family of operators
\begin{equation}\label{eq:tilted-resolvent}
    R_e(\lambda)  \ : \ L^2_\comp \longrightarrow H^1_\loc, \qquad \lambda \in \Omega_* .
\end{equation}
Moreover, for $\lambda \in \tOmega_*$ and $\gamma$ such that $|\Im\lambda| < \gamma \le \gamma_*(\lambda)$, the operator $R_e(\lambda)$ extends by density from $\EE_\gamma$ to $H^1_\loc$ and we have the spatial asymptotics
\begin{equations}\label{eq:tilted-resolvent-structure}
    R_e(\lambda) f(x) = i\int_{-\infty}^{x\cdot e} e^{i\lambda\left(x\cdot e-s\right)}a_e(s)\,ds \cdot \phi_e\left(x\cdot e^\perp\right) \; + \; \EE_{\gamma/2}, \qquad f \in \EE_\gamma,
\\
    a_e(s) \defeq \int_\R \ove{\phi_e(\tau)}^{\top}\, f\left(s\,e + \tau\,e^\perp\right) d\tau .
\end{equations}
\end{theorem}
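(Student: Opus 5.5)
The plan is to transport \Cref{thm:resolvent-structure} to $\Di_e$ through the unitary conjugation of \Cref{prop:tilted}. Set $R_e(\lambda) \defeq S_e^{-1} R_*(\lambda) S_e$.

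\textbf{Step 1: functional setting is preserved.} $S_e$ acts by the rotation $x \mapsto O_e x$ composed with the constant unitary matrix $U_e$. Hence $S_e$ and $S_e^{-1}$ preserve $L^2_\comp$, $L^2_\loc$ and $H^1_\loc$. Since $|O_e x| = |x|$, they are also isometries of every $\EE_\gamma$.

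\textbf{Step 2: analytic continuation.} For $\Im\lambda>0$, \eqref{eq:tilt-conjugation} gives $(\Di_e-\lambda)^{-1} = S_e^{-1}(\Di_*-\lambda)^{-1}S_e$. Therefore $R_e(\lambda)$ agrees with the resolvent on $\C^+$. It is analytic on $\Omega_*$ as a map $L^2_\comp \to H^1_\loc$, because conjugation by fixed bounded maps preserves analyticity in the weak sense. The density extension to $\EE_\gamma$ for $\lambda \in \tOmega_*$ transfers in the same way by Step 1.

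\textbf{Step 3: spatial asymptotics.} Let $f \in \EE_\gamma$ and put $g = S_e f \in \EE_\gamma$. By \Cref{thm:resolvent-structure},
\[
R_*(\lambda) g(y) = i\int_{-\infty}^{y_1} e^{i\lambda(y_1-s)} a_*^g(s)\,ds\, \phi_*(y_2) + \EE_{\gamma/2}.
\]
Here
\[
a_*^g(s) = \int_\R \ove{\phi_*(\tau)}^\top U_e f(O_e(s,\tau))\,d\tau = \int_\R \ove{U_e^*\phi_*(\tau)}^\top f(se+\tau e^\perp)\,d\tau = a_e(s),
\]
using $O_e(s,\tau) = se + \tau e^\perp$ and $U_e^*\phi_* = \phi_e$ from \eqref{eq:psi-lambda-e}.

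Now apply $S_e^{-1}$, i.e. evaluate at $y = O_e^\top x$ and multiply by $U_e^*$. We have $y_1 = x\cdot e$ and $y_2 = x\cdot e^\perp$, and $U_e^*\phi_*(y_2) = \phi_e(x\cdot e^\perp)$. The error term stays in $\EE_{\gamma/2}$ by Step 1. This yields \eqref{eq:tilted-resolvent-structure}.

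\textbf{Main point to check.} The only place needing care is the identification $a_*^{S_e f} = a_e$. It requires the conventions $O_e e_1 = e$, $O_e e_2 = e^\perp$ and the correct placement of $U_e$ versus $U_e^*$. A sign slip there would replace $\phi_e$ by the wrong spinor. Everything else is formal.
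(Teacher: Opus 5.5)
Your proposal is correct and is the argument the paper intends: the paper obtains this theorem by transporting \Cref{thm:resolvent-structure} through the conjugation $\Di_e = S_e^{-1}\Di_* S_e$ of \Cref{prop:tilted}, and says only that it skips the details. You supply those details correctly, namely the invariance of $L^2_\comp$, $H^1_\loc$ and $\EE_\gamma$ under $S_e^{\pm 1}$, and the identification $a_*^{S_e f} = a_e$ via $\ove{\phi_*}^\top U_e = \ove{U_e^*\phi_*}^\top$, $U_e^*\phi_* = \phi_e$ and $O_e(s,\tau) = se + \tau e^\perp$.
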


\section{Bent domain walls}\label{sec:bent}

\subsection{Definitions} We now let the interface turn a corner. Throughout this subsection we fix two unit vectors $e_- = e_1$ and $e_+ \neq -e_1$. We set $\theta_- = 0$, denote by $\theta_+ \in (-\pi,\pi)$ the angle between $e_-$ and $e_+$, and define
\begin{equation}
    \theta_0 \defeq \dfrac{\theta_+}{2} \in \left(-\dfrac\pi2,\dfrac\pi2\right), \qquad e_0 \defeq e_{\theta_0} = \begin{bmatrix} \cos\theta_0 \\ \sin\theta_0 \end{bmatrix}.
\end{equation}
All identities between angles below are understood modulo $2\pi$.

We consider tilted domain walls of the form
\begin{equation}
    \kappa_\pm(x) = k_\pm\left(x \cdot e_\pm^\perp\right),
\end{equation}
where $k_\pm \in L^\infty(\R)$ are such that $k_\pm(x_2) = -1$ for $x_2 \geq 1$ and $k_\pm(x_2) = 1$ for $x_2 \leq -1$.

\begin{definition}\label{def:bent}
A domain wall bent by the angle $\theta_+$, with asymptotics $\kappa_\pm$, is a function $\kappa \in L^\infty(\R^2,\R)$ such that for some $R>0$,
\begin{equation}\label{eq:bent}
    \kappa(x) = \kappa_\pm(x) \qquad\text{whenever}\qquad |x| > R \quad\text{and}\quad \pm\, x\cdot e_0 \, > \, 0 .
\end{equation}
\end{definition}

\begin{figure}[b]
\centering
\begin{minipage}[c]{0.52\textwidth}
    \centering
    \includegraphics[width=\linewidth]{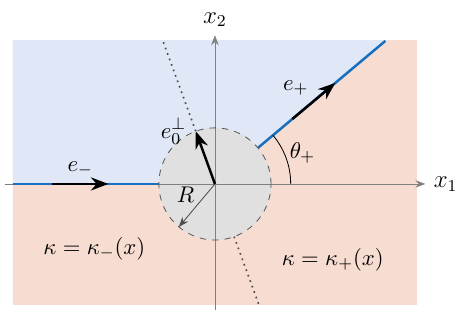}
\end{minipage}\hfill
\begin{minipage}[c]{0.45\textwidth}
\caption{A bent domain wall with angle $\theta_+$. For $|x| > R$, $\kappa$ agrees with $\kappa_-$ on $\left\{x\cdot e_0<0\right\}$ and with $\kappa_+$ on $\left\{x\cdot e_0>0\right\}$. It equals $-1$ in the blue area and $+1$ in the red area. There is no requirement in an at-worst bounded region (gray).}
\label{fig:bent}
\end{minipage}
\end{figure}

For instance, if $S = \{ x \in \R^2 : \theta_+ < \arg x \leq \pi\}$, then $\1_{S^c} - \1_S$, as well as
\begin{equation}\label{eq:bent-examples}
    \kappa = \rho * (\1_{S^c} - \1_S) + V,
\end{equation}
with $\rho \in C^\infty_0(\R^2)$ of integral one, and $V \in C^\infty_0(\R^2)$, are bent domain walls in the sense of \Cref{def:bent}; see \Cref{fig:intro-bent} for a depiction. We give several additional examples in \Cref{subsec:numerics-walls2}.

Here, we study the bent Dirac operator
\begin{equation}\label{eq:dirac-bent}
    \Di \defeq D_1\sigma_1 + D_2\sigma_2 + \kappa(x)\sigma_3 ,
\end{equation}
where $\kappa$ satisfies \Cref{def:bent}. The Dirac operator $\Di$ is selfadjoint on $L^2$ with domain $H^1$. Its two models at infinity are the operators
\begin{equation}\label{eq:models}
    \Di_\pm \defeq D_1\sigma_1 + D_2\sigma_2 + \kappa_\pm(x)\,\sigma_3 .
\end{equation}

According to \Cref{sec:tilted}, the functions $\phi_\pm$ that seed the edge states of $\Di_\pm$ are given by
\begin{equation}\label{eq:phi-pm}
    \phi_\pm(s) = c_\pm \begin{bmatrix} e^{-i\theta_\pm/2} \\ e^{i\theta_\pm/2}\end{bmatrix} e^{\int_0^{s} k_\pm},
\end{equation}
where $c_\pm$ are normalizing constants so that  $\| \phi_\pm \| = 1$; we write $M_\pm \defeq \norm{e^{|x_2|}\phi_\pm}_\infty$, the analogue for $\phi_\pm$ of the constant $M_*$ of \Cref{lem:zero-mode}. Recall that 
\begin{equations}
    E \defeq \min(E_+,E_-), \qquad 
    E_\pm \defeq \min\Big(\spec\big(\Di_\perp^\pm\big)\cap (0,+\infty) \Big),
    \qquad
    \Di_\perp^\pm \defeq D_2\sigma_2 + k_\pm(x_2)\,\sigma_3.
\end{equations}
Specializing \Cref{cor:tilted-resolvent} to $e = e_\pm$ gives the structure of the two model resolvents $R_\pm(\lambda)$: these continue analytically for
\begin{equation}\label{eq:def-gamma-pm}
    \lambda \in \widetilde{\Omega}_\pm \defeq \left\{ \lambda \in \C : |\lambda| < E_\pm, |\Im \lambda| < \gamma_\pm(\lambda) \right\}, \qquad \gamma_\pm(\lambda) \defeq \min\left( \dfrac{1}{2M_\pm^2}, \dfrac{E_\pm^2-|\lambda|^2}{20}\right)
\end{equation}
and moreover, for $\lambda \in \tOmega_\pm$ and $\gamma$ such that $|\Im\lambda| < \gamma \le \gamma_\pm(\lambda)$, $R_\pm(\lambda)$ is continuous from $\EE_\gamma$ to $H^1_\loc$, with
\begin{equations}\label{eq:2z}
    R_\pm(\lambda) f = i\,e^{i\lambda\, x\cdot e_\pm}\left(\int_{-\infty}^{x\cdot e_\pm} e^{-i\lambda s}\,a_\pm(s)\,ds\right)\phi_\pm\left(x\cdot e_\pm^\perp\right) + \; \EE_{\gamma/2}, \qquad f \in \EE_\gamma,
    \\
    a_\pm(s) \defeq \int_\R \ove{\phi_\pm(\tau)}^{\top}\, f\left(s\,e_\pm + \tau\,e_\pm^\perp\right) d\tau.
\end{equations}

\subsection{Resolvent of $\Di$}\label{subsec:mero}
Our first goal is to continue $(\Di-\lambda)^{-1}$ meromorphically across $(-E,E)$. Recall that $\gamma_\pm(\lambda)$ were defined in \eqref{eq:def-gamma-pm} and set
\begin{equations}\label{eq:def_gamma}
    \tOmega \defeq \left\{ \lambda \in \C \ : \ |\lambda| < E, \quad \left|\Im\lambda\right| < \left(1-\left|\sin\te_0\right|\right)\gamma(\lambda) \right\} \subset \tOmega_+ \cap \tOmega_-,
    \\
    \gamma(\lambda) \defeq \min\big( \gamma_-(\lambda), \gamma_+(\lambda) \big).
\end{equations}
The set $\tOmega$ is a neighborhood of $(-E,E)$ in $\C$. See \Cref{fig:tomega}.

\begin{figure}[ht]
\centering
\begin{minipage}[c]{0.58\textwidth}
    \centering
    \includegraphics[width=\linewidth]{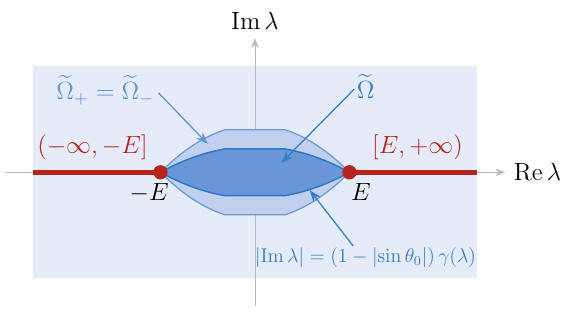}
\end{minipage}\hfill
\begin{minipage}[c]{0.39\textwidth}
\caption{The set $\tOmega$ defined in \eqref{eq:def_gamma}, with $\tOmega_+ = \tOmega_-$ for simplicity.}
\label{fig:tomega}
\end{minipage}
\end{figure}

\begin{theorem}\label{thm:mero}
The family of operators $\left(\Di-\lambda\right)^{-1} : L^2_\comp \rightarrow H^1_\loc$, originally defined for $\Im\lambda>0$, continues meromorphically to a family of operators
\begin{equation}\label{eq:mero}
    R(\lambda) : L^2_\comp \longrightarrow H^1_\loc, \qquad \lambda \in \tOmega,
\end{equation}
with poles of finite rank. 

Moreover, for $\lambda \in \tOmega$ and $\gamma > 0$ such that
\begin{equation}\label{eq:mero-gamma}
    \left|\Im\lambda\right| \ < \ \gamma \ < \ \left(1-\left|\sin\te_0\right|\right)\gamma(\lambda) ,
\end{equation}
the operator $R(\lambda)$ extends by density from $\EE_\gamma$ to $H^1_\loc$.
\end{theorem}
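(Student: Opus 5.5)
The plan is to glue the two model resolvents $R_\pm(\lambda)$ from \eqref{eq:2z} into a parametrix, show that the error is compact on an exponentially weighted space, and then invoke analytic Fredholm theory \cite[Appendix C]{DZ}.

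\textbf{Geometry near the seam.} Along the dividing line $\R e_0^\perp$, both interfaces are far away. Indeed, for $x=te_0^\perp$ we have $x\cdot e_\pm^\perp=t\cos\theta_0$, and $\cos\theta_0>0$. Hence $\kappa_+=\kappa_-=\kappa=\mp\operatorname{sgn}(t)$ on a strip $\Sigma\defeq\{|x\cdot e_0|\le 1\}$, once $|x|$ is large.

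\textbf{Cutoffs and parametrix.} Choose $\chi_+(x)=\chi(x\cdot e_0)$ and $\chi_-=1-\chi_+$, so that $\nabla\chi_\pm$ is supported in $\Sigma$. Set
\begin{equation}
Q(\lambda)\defeq\chi_-R_-(\lambda)+\chi_+R_+(\lambda).
\end{equation}
A direct computation gives $(\Di-\lambda)Q(\lambda)=\Id+K(\lambda)$, with
\begin{equation}
K(\lambda)=\sum_\pm\Big(-i(\sigma\cdot\nabla\chi_\pm)R_\pm(\lambda)+\chi_\pm(\kappa-\kappa_\pm)\sigma_3R_\pm(\lambda)\Big).
\end{equation}
By \Cref{def:bent} and the geometry above, $\chi_\pm(\kappa-\kappa_\pm)$ vanishes outside a ball. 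So every term of $K(\lambda)$ is $R_\pm(\lambda)$ multiplied on the left by a bounded function supported in $\Sigma\cup B_{R'}(0)$.

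\textbf{Compactness of $K(\lambda)$.} For $f\in\EE_\gamma$, \eqref{eq:2z} splits $R_\pm(\lambda)f$ into an $\EE_{\gamma/2}$ remainder plus an edge term. By the bound \eqref{eq:edge-integral-bound}, the edge term is at most
\begin{equation}
C\,e^{|\Im\lambda||x\cdot e_\pm|}\,e^{-|x\cdot e_\pm^\perp|}\,\norm{f}_{\EE_\gamma}.
\end{equation}
On $\Sigma$ we have $|x\cdot e_\pm|\approx|x||\sin\theta_0|$ and $|x\cdot e_\pm^\perp|\approx|x|\cos\theta_0$. Thus the edge term restricted to $\Sigma$ decays like $e^{-c|x|}$, with $c=\cos\theta_0-|\Im\lambda||\sin\theta_0|$, which is comfortably positive. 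I expect the requirement $|\Im\lambda|<(1-|\sin\theta_0|)\gamma(\lambda)$ in \eqref{eq:def_gamma}, \eqref{eq:mero-gamma} to enter exactly when one balances this growth $e^{|\Im\lambda||\sin\theta_0||x|}$ against the available weight, so as to land in a fixed weighted space. The local $H^1$ gain comes from $(\Di_\pm-\lambda)R_\pm f=f$ and elliptic regularity. Combining it with the uniform exponential decay and Rellich's theorem on balls gives that $K(\lambda)$ is compact on $\EE_{\gamma}$ for suitable $\gamma$ in the range \eqref{eq:mero-gamma}, and analytic in $\lambda$.

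\textbf{Invertibility at one point, and conclusion.} Analytic Fredholm theory then yields that $(\Id+K(\lambda))^{-1}$ is meromorphic with finite rank poles, provided it exists at one point. We then set $R(\lambda)=Q(\lambda)(\Id+K(\lambda))^{-1}$, which maps $\EE_\gamma$ (hence $L^2_\comp$) to $H^1_\loc$. For $\Im\lambda>0$ this agrees with $(\Di-\lambda)^{-1}$: apply $(\Di-\lambda)^{-1}$ to $(\Di-\lambda)Q=\Id+K$, using that everything is $L^2$ there. By uniqueness of continuation, $R(\lambda)$ is the meromorphic continuation.

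\textbf{Main obstacle.} The main obstacle is exhibiting one $\lambda\in\tOmega$ with $\Id+K(\lambda)$ invertible, since $\tOmega$ only allows small $\Im\lambda$. I would first try to make $K$ small in norm. Replace $\chi$ by a cutoff whose transition is spread over a width $L\gg1$, so $\nabla\chi_\pm=O(L^{-1})$. Place the transition far from the origin, radially, using a partition adapted to $|x|$ large, so that the compact term $\chi_\pm(\kappa-\kappa_\pm)$ is handled by a separate local inverse of $\Di-\lambda$ at $\Im\lambda>0$. If this fails, I would use a Fredholm index argument instead: $\Id+K$ has index $0$. Injectivity at some $\lambda$ with $\Im\lambda>0$ would follow from a left parametrix $Q'(\lambda)$, built symmetrically, satisfying $Q'(\Di-\lambda)=\Id+K'$, together with self-adjointness of $\Di$. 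This excludes a kernel for generic $\lambda$.
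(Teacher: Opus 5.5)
Your parametrix, the formula for $K(\lambda)$, the compact support of $\chi_\pm(\kappa-\kappa_\pm)$ and the final identification with $(\Di-\lambda)^{-1}$ all match the paper. There are, however, two genuine gaps.

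\textbf{Compactness.} You treat $WR_+(\lambda)$ and $WR_-(\lambda)$, where $W \defeq -i\sigma\cdot\nabla\chi_+$, separately, and you only show that their edge parts decay along the seam $\Sigma$. The remainder $B_\pm(\lambda)f = R_\pm(\lambda)P_\pm^\perp f$ is only known to lie in $\EE_{\gamma/2}$, so there is no decay gain to combine with Rellich. Even the directional Agmon bound of \Cref{prop:agmon} only gives boundedness of $WB_\pm(\lambda)$ on $\EE_\gamma$.

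In fact the individual terms are not compact. Since $W$ depends only on $x\cdot e_0$, consider $f_n \defeq e^{-\gamma n}(\Di_+-\lambda)\varphi(\cdot-ne_0^\perp)$ with $\varphi\in C^\infty_0$ and $W\varphi\neq 0$. These are bounded in $\EE_\gamma$ and converge weakly to $0$. Yet $WR_+(\lambda)f_n = e^{-\gamma n}(W\varphi)(\cdot-ne_0^\perp)$ has $\EE_\gamma$-norm bounded below. For the same sequence, $(R_+(\lambda)-R_-(\lambda))f_n=0$ once $n$ is large, because $\Di_+=\Di_-$ far out along $\Sigma$. So compactness holds only for the difference.

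Exploiting this cancellation is the heart of \Cref{lem:cpt-gain}. There one splits $R_\pm=A_\pm+B_\pm$ and uses the identity $B_+P_-^\perp-P_+^\perp B_-=B_+ZB_-$. Here $Z=(\kappa_--\kappa_+)\sigma_3$ is supported near the cone between $e_-$ and $e_+$, which meets $\Sigma$ only in a bounded set. One then estimates $WA_\pm$, $WB_+ZB_-$, $WB_+P_-$ and $WP_+B_-$ with anisotropic weights $e^{\epsi|x\cdot\zeta|}$. This gives $K_2(\lambda):\EE_\gamma\to\EE_\rho$ for some $\rho>\gamma$, and compactness follows by approximation with cut-offs. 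This step, through $|x\cdot e_\pm^\perp|+|x\cdot e_0|\ge(1-|\sin\te_0|)|x|$, is where the factor $1-|\sin\te_0|$ in $\tOmega$ comes from. It does not come from balancing the growth $e^{|\Im\lambda||\sin\te_0||x|}$.

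\textbf{Invertibility at one point.} You leave this open, and neither suggested route works as stated.

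Making $K$ small runs against the weights. Widening the seam to width $L$ lets the non-decaying edge part of $R_\pm(\lambda)f$ travel a distance $\sim L$ inside it. There the weight $e^{\gamma|x|}$ costs $e^{c\gamma L}$, far more than the factor $L^{-1}$ gained on $\nabla\chi_\pm$.

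The left-parametrix idea misses the actual issue. For $\Im\lambda>0$, $(\Id+K(\lambda))f=0$ already gives $Q(\lambda)f=0$ by selfadjointness of $\Di$. What must be proved is that $Q(\lambda)$ is injective, and a parametrix for $\Di-\lambda$ says nothing about that.

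No smallness is needed: every $\lambda\in\tOmega$ with $\Im\lambda>0$ works. Set $u_\pm\defeq R_\pm(\lambda)f\in H^1$ and $h\defeq u_+-u_-$. The relation $\chi_+u_++\chi_-u_-=0$ gives $\chi_-h=u_+$. Then $(\Di_--\lambda)u_-=f=(\Di_+-\lambda)u_+$ yields $(\Di_-+\chi_-(\kappa_+-\kappa_-)\sigma_3-\lambda)h=0$. This operator is selfadjoint with domain $H^1$ and $\Im\lambda>0$, so $h=0$. Hence $u_+=0$ and $f=(\Di_+-\lambda)u_+=0$, and the Fredholm alternative gives invertibility.
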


Poles $\lambda_0 \in \tOmega$ of $R(\lambda)$ are called resonances  of $\Di$. Resonances of $\Di$ on the real axis are necessarily eigenvalues of $\Di$, as can be seen from the bound $\|R(\lambda)\| \leq |\Im \lambda|^{-1}, \Im\lambda > 0$, coming from selfadjointness of $\Di$ -- see \cite[Theorem 4.18]{DZ}.

\subsection{$(\Di-\lambda)^{-1}$ continues meromorphically}
Let $\chi \in C^\infty(\R)$ be equal to $1$ on $[1,+\infty)$ and to $0$ on $(-\infty,-1]$, and set, for $\lambda \in \tOmega$:
\begin{equations}\label{eq:parametrix}
    Q(\lambda) \defeq \sum_\pm \chi_\pm\, R_\pm(\lambda) : \EE_\gamma \longrightarrow L^2_\loc, \qquad \lambda \in \tOmega ,
    \\
    \chi_+(x) \defeq \chi\left(e_0 \cdot x\right), \qquad \chi_-(x) \defeq 1-\chi_+(x).
\end{equations}
Heuristically, $Q(\lambda)$ should model the spatial asymptotics of the resolvent $(\Di-\lambda)^{-1}$. Using $\Di = \Di_\pm + \left(\kappa-\kappa_\pm\right)\sigma_3$, we compute:
\begin{align}\label{eq:leibniz}
    \left(\Di-\lambda\right)\chi_\pm R_\pm(\lambda)
    & = \chi_\pm\left(\Di_\pm - \lambda\right) R_\pm(\lambda)
    + \chi_\pm\left(\kappa - \kappa_\pm\right)\sigma_3 R_\pm(\lambda)
    + \left[\Di,\chi_\pm\right] R_\pm(\lambda)
    \\
    & = \chi_\pm + \chi_\pm\left(\kappa - \kappa_\pm\right)\sigma_3 R_\pm(\lambda) + (\sigma \cdot D_x \chi_\pm) R_\pm(\lambda)
    \\
    & = \chi_\pm + \chi_\pm\left(\kappa - \kappa_\pm\right)\sigma_3 R_\pm(\lambda) \pm (\sigma \cdot D_x \chi_+) R_\pm(\lambda).
\end{align}
Thanks to $\chi_++\chi_-=1$, we deduce that:
\begin{equations}\label{eq:parametrix-error}
\begin{aligned}
    \left(\Di - \lambda\right) Q(\lambda) &= \Id + K(\lambda),
    &\qquad K(\lambda) &\defeq K_1(\lambda) + K_2(\lambda),
    \\
    K_1(\lambda) &\defeq \sum_\pm V_\pm R_\pm(\lambda),
    &\qquad V_\pm &\defeq \chi_\pm\left(\kappa - \kappa_\pm\right)\sigma_3,
    \\
    K_2(\lambda) &\defeq W \big( R_+(\lambda) - R_-(\lambda)\big),
    &\qquad W &\defeq \sigma \cdot D_x\chi_+ .
\end{aligned}
\end{equations}
Our strategy is to use analytic Fredholm theory, see e.g. \cite[Appendix C]{DZ}. In particular, we must justify that $K(\lambda)$ is a compact operator on $\EE_\gamma$.

\begin{lemma}\label{lem:cpt-gain} Let $\lambda \in \widetilde{\Omega}_+ \cap \widetilde{\Omega}_-$ and $\gamma > 0$ be such that:
\begin{equations}\label{eq:cpt-parameters}
    |\lambda| < E, \qquad \left|\Im\lambda\right| < \gamma < \gamma(\lambda)\left(1-\left|\sin\te_0\right|\right).
\end{equations}
    Then there exists $\rho > \gamma$ such that $K_2(\lambda)$ maps $\EE_\gamma$ to $\EE_\rho$.
\end{lemma}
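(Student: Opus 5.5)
The strategy is to split each model resolvent into its edge part and its complementary part, as in \Cref{thm:resolvent-structure} transported by \Cref{cor:tilted-resolvent}: $R_\pm(\lambda) = A_\pm(\lambda) + B_\pm(\lambda)$ with $A_\pm = (\Di_\pm-\lambda)^{-1}P_{e_\pm}$ and $B_\pm = (\Di_\pm-\lambda)^{-1}P_{e_\pm}^\perp$. We then exploit that $W = \sigma\cdot D_x\chi_+$ is bounded and supported in the strip $\Sigma \defeq \{|x\cdot e_0| \le 1\}$. Since $W$ kills everything off $\Sigma$, it suffices to show that $\1_\Sigma R_\pm(\lambda) f$ belongs to $e^{-\rho|x|}L^2(\Sigma)$ for some $\rho > \gamma$, with a bound by $\norm{f}_{\EE_\gamma}$. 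In fact I would treat the two terms $R_+ f$ and $R_- f$ separately rather than use any cancellation. On $\Sigma$ we parametrize $x = t e_0^\perp + O(1)$, so $|x| \sim |t|$, and we use the geometric identities
\[
x\cdot e_\pm^\perp = \pm t\cos\te_0 + O(1), \qquad x\cdot e_\pm = \mp t\sin\te_0 + O(1).
\]

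\textbf{Edge terms.} By \eqref{eq:2z}, $A_\pm(\lambda)f(x) = i\,\beta_\pm(x\cdot e_\pm)\,\phi_\pm(x\cdot e_\pm^\perp)$. The bound \eqref{eq:edge-integral-bound}, transported to the tilted frame, gives
\[
|\beta_\pm(s)| \le (\gamma-|\Im\lambda|)^{-1/2} e^{|\Im\lambda||s|}\norm{f}_{\EE_\gamma}.
\]
On $\Sigma$ this grows at most like $e^{|\Im\lambda||\sin\te_0||t|}$, while $|\phi_\pm(x\cdot e_\pm^\perp)| \le M_\pm e^{-\cos\te_0|t| + O(1)}$. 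Hence $|A_\pm f| \lesssim e^{-(\cos\te_0 - |\Im\lambda||\sin\te_0|)|t|}\norm{f}_{\EE_\gamma}$ on $\Sigma$. Since $\cos\te_0 \ge 1-|\sin\te_0|$ and $\gamma,|\Im\lambda| \le 1/20$, this rate is at least $1-|\sin\te_0| - \tfrac{1}{20} > \gamma$ with room to spare. Here the hypothesis $\gamma < (1-|\sin\te_0|)\gamma(\lambda)$ guarantees that some $\rho > \gamma$ fits below the rate; this is where the factor $1-|\sin\te_0|$ enters. Integrating $e^{2\rho|t|}$ against this pointwise bound over the strip of bounded width gives the $\EE_\rho$ bound for $W A_\pm f$.

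\textbf{Complementary terms.} For $B_\pm f$ I would use the anisotropic Agmon estimate, \Cref{prop:agmon} in the tilted frame combined with \Cref{lem:projection-weight-tilted}. The weight should be adapted to the strip: take $\zeta = e_0^\perp$ (rotated into the $e_\pm$ frame), so that $|x\cdot\zeta| \sim |x|$ on $\Sigma$. This yields $e^{\epsi|x\cdot e_0^\perp|}B_\pm f \in L^2$ whenever $e^{\epsi|x\cdot e_0^\perp|}P^\perp_{e_\pm}f \in L^2$, and the latter holds for $\epsi \le \gamma$ because $|x\cdot\zeta| \le |x|$. I would then try to upgrade to a rate strictly larger than $\gamma$ on $\Sigma$. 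The idea is to run the Agmon argument with a non-directional Lipschitz weight $\varphi$, of the separable form $\varphi_1(x\cdot e_\pm) + \varphi_2(x\cdot e_\pm^\perp)$ so that Step 3 of \Cref{prop:agmon} still applies, with $\|\nabla\varphi\|_\infty \le 2\gamma_\pm(\lambda)$. The available slack is the gap between $\gamma$ and $\gamma_\pm(\lambda)(1-|\sin\te_0|)$. The hope is to choose the coefficients along $e_\pm$ and $e_\pm^\perp$ so that $\varphi$ is dominated by $\gamma|x|$ everywhere while being larger than $\gamma|x\cdot e_0^\perp|$ along $\Sigma$.

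\textbf{Main obstacle.} I expect the $B_\pm$ step to be the hard, and possibly delicate, point. A pure weighted-$L^2$ Agmon bound cannot beat the decay rate of the source at points where the source sits on $\Sigma$ itself. So the gain $\rho > \gamma$ must come from choosing the weight so that it is dominated by $\gamma|x|$ globally, and possibly from using that $W$ only sees $\Sigma$ while $f$ is controlled in all directions. If this fails, my fallback is to settle for $\rho = \gamma$ together with a local-compactness gain: $W R_\pm$ maps into $H^1$ on $\Sigma$ by elliptic regularity, as in the last step of the proof of \Cref{thm:resolvent-structure}, combined with a slightly larger weight in a neighborhood of $\Sigma$. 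That would still suffice for the compactness of $K_2(\lambda)$ on $\EE_\gamma$ needed for the analytic Fredholm argument.
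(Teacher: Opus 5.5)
\textbf{Verdict: there is a genuine gap.} The edge-term step is fine; the plan for the complementary terms $B_\pm$ cannot work, because the estimate it aims for is false.

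\textbf{Edge terms.} Your estimate for $WA_\pm(\lambda)$ is fine, and even sharper than needed, up to small slips. On $\Sigma$ one has $x\cdot e_\pm^\perp = t\cos\te_0+O(1)$ for both signs, and $x\cdot e_\pm = \pm t\sin\te_0+O(1)$. Also, $1-|\sin\te_0|-\tfrac1{20}$ can be negative when $|\te_0|$ is near $\pi/2$. Use instead $\gamma+|\Im\lambda| < \tfrac1{10}(1-|\sin\te_0|)\le \tfrac1{10}\cos\te_0$.

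\textbf{Why treating $R_+f$ and $R_-f$ separately fails.} Your reduction ``it suffices that $\1_\Sigma R_\pm(\lambda)f\in e^{-\rho|x|}L^2$ for some $\rho>\gamma$'' is false, so no choice of Agmon weight can close the $B_\pm$ step. Here is a counterexample.
\begin{itemize}
\item Take a bump $v_0\in C^\infty_0(\R^2,\C^2)$ supported where $\chi'(x\cdot e_0)\ge c>0$, with values in the orthogonal complement of the constant spinor $(e^{-i\te_\pm/2},e^{i\te_\pm/2})^\top$ carried by $\phi_\pm$. Set $v_t\defeq v_0(\cdot-te_0^\perp)$.
\item Then $P_\pm v_t=0$, so $f_t\defeq(\Di_\pm-\lambda)v_t\in\HH_\pm^\perp$ and $R_\pm(\lambda)f_t=B_\pm(\lambda)f_t=v_t$.
\item Since $|Wv_t|=|\chi'(x\cdot e_0)|\,|v_t|$, you get $\norm{f_t}_{\EE_\gamma}\lesssim e^{\gamma|t|}$ but $\norm{WR_\pm(\lambda)f_t}_{\EE_\rho}\gtrsim e^{\rho|t|}$.
\end{itemize}
So $WR_\pm(\lambda)$ is unbounded from $\EE_\gamma$ to $\EE_\rho$ for every $\rho>\gamma$. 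The obstruction you sensed is a property of the operator, not of the method.

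Your fallback fails too. After normalization, the images $Wv_t$ have disjoint supports and $\EE_\gamma$-norms bounded below, so $WR_\pm(\lambda)$ is not even compact on $\EE_\gamma$.

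Conversely, for $|t|$ large we have $\kappa_+=\kappa_-$ on $\supp v_t$. Hence $R_\mp(\lambda)f_t=v_t$ as well, and $K_2(\lambda)f_t=0$. The lemma holds only because of cancellation in $R_+(\lambda)-R_-(\lambda)$.

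\textbf{The missing idea: exploit the difference via the resolvent identity.} The paper keeps your edge terms but writes
$B_+-B_-=B_+P_--P_+B_-+\big(B_+P_-^\perp-P_+^\perp B_-\big)$. It then uses $B_+P_-^\perp-P_+^\perp B_-=P_+^\perp(R_+-R_-)P_-^\perp=B_+ZB_-$ with $Z\defeq(\kappa_--\kappa_+)\sigma_3$, first for $\Im\lambda>0$ and then by continuation.

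The key geometric fact is that $Z$ vanishes on $\Sigma$ outside a bounded set. It lives within distance $1$ of the cone where $x\cdot e_+^\perp$ and $x\cdot e_-^\perp$ have opposite signs. On that cone, $|x\cdot e_0^\perp|\le|\sin\te_0|\max(|x\cdot e_+|,|x\cdot e_-|)+\cos\te_0$. The chain of estimates is then:
\begin{itemize}
\item Directional Agmon bounds (with $\zeta=e_\pm$) put $B_-(\lambda)f$ in $e^{-\gamma|x\cdot e_+|}L^2\cap e^{-\gamma|x\cdot e_-|}L^2$.
\item Multiplication by $Z$ converts this into $e^{-\rho|x\cdot e_0^\perp|}L^2$, for $\rho$ up to $\gamma/|\sin\te_0|>\gamma$ (within the Agmon range).
\item $B_+(\lambda)$ preserves this weight by Agmon with $\zeta=e_0^\perp$. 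This is where your choice of $\zeta$ is the right one, but applied to the source $ZB_-f$ rather than to $f$.
\item $W$ supplies the decay along $e_0$.
\end{itemize}
The cross terms $WB_+P_-$ and $WP_+B_-$ gain decay because $P_\pm$ maps $L^2$ into $e^{-\frac12|x\cdot e_\pm^\perp|}L^2$ and $(e_\pm^\perp,e_0)$ is a basis.

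The projections must be split off first. The quantity $ZA_-(\lambda)f$ does not decay along $\R_+e_-$ and in general is not in $L^2$, so the naive identity $R_+-R_-=R_+ZR_-$ cannot be used directly.
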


\begin{proof} Throughout the proof, $\lambda$ and $\gamma$ are parameters satisfying \eqref{eq:cpt-parameters}.

    \textbf{1.} We first prove a useful decomposition for $K_2(\lambda)$. By adapting the decomposition of the untilted resolvent \eqref{eq:2a} to the tilted case, we may expand
    \begin{equation}
        R_\pm(\lambda) = A_\pm(\lambda) + B_\pm(\lambda), \qquad A_\pm(\lambda) \defeq R_\pm(\lambda) P_\pm, \qquad B_\pm(\lambda) \defeq R_\pm(\lambda) P_\pm^\perp. 
    \end{equation}
It follows that 
\begin{equation}\label{eq:cpt-AB}
    K_2(\lambda) = W A_+(\lambda) - W A_-(\lambda) + W \big( B_+(\lambda) - B_-(\lambda) \big).
\end{equation}

    Because $B_\pm(\lambda)$ has range in $\HH_\pm^\perp$, we have $P_\pm^\perp B_\pm(\lambda) = B_\pm(\lambda)$, and inserting $\Id = P_-+P_-^\perp$ on the right of $B_+(\lambda)$ and $\Id = P_++P_+^\perp$ on the left of $B_-(\lambda)$,
\begin{equation}\label{eq:cpt-B-split}
    B_+(\lambda)-B_-(\lambda) = B_+(\lambda)P_- \; - \; P_+ B_-(\lambda) \; + \; \left(B_+(\lambda)P_-^\perp - P_+^\perp B_-(\lambda)\right) .
\end{equation}
We have the identity
\begin{equation}\label{eq:cpt-resolvent-identity}
    B_+(\lambda)P_-^\perp - P_+^\perp B_-(\lambda) \ = \ B_+(\lambda)\,Z\,B_-(\lambda) , \qquad Z \defeq \left(\kappa_--\kappa_+\right)\sigma_3 = \Di_- - \Di_+.
\end{equation}
Indeed, for $\Im\lambda>0$ the operators $R_\pm(\lambda) = \left(\Di_\pm-\lambda\right)^{-1}$ are bounded on $L^2$ and commute with $P_\pm$, so that the resolvent identity $R_+(\lambda)-R_-(\lambda) = R_+(\lambda)ZR_-(\lambda)$ gives
\begin{equation}
    B_+(\lambda)P_-^\perp - P_+^\perp B_-(\lambda) = P_+^\perp\left(R_+(\lambda)-R_-(\lambda)\right)P_-^\perp = P_+^\perp R_+(\lambda)\, Z\, R_-(\lambda)P_-^\perp = B_+(\lambda) Z B_-(\lambda) .
\end{equation}
Therefore \eqref{eq:cpt-resolvent-identity} holds for $\Im\lambda>0$; by analyticity, it holds for $|\lambda| < E$.

Combining \eqref{eq:cpt-AB}, \eqref{eq:cpt-B-split} and \eqref{eq:cpt-resolvent-identity}, we obtain the decomposition
\begin{equations}\label{eq:cpt-decomposition}
\begin{aligned}
    K_2(\lambda) &= \underbrace{W A_+(\lambda)}_{(1)} \; - \; \underbrace{W A_-(\lambda)}_{(2)} \; + \; \underbrace{W B_+(\lambda)\,Z\,B_-(\lambda)}_{(3)}
    \\[4pt]
    &\qquad + \; \underbrace{W B_+(\lambda)P_-}_{(4)} \; - \; \underbrace{W P_+B_-(\lambda)}_{(5)} .
\end{aligned}
\end{equations}
We prove in steps \textbf{3}--\textbf{6} below that each of $(1)$--$(5)$ is bounded from $\EE_\gamma$ to $\EE_\rho$, for some $\rho > \gamma$.

\textbf{2.} We record here a few weighted bounds on the operators $P_\pm, B_\pm(\lambda)$. \Cref{lem:projection-weight-tilted} applied to $P_\pm = P_{e_\pm}$ shows that the operators
\begin{equation}\label{eq:cpt-projection-bounds}
    P_\pm, P_\pm^\perp  \ : \ e^{-\mu |\zeta \cdot x|} L^2 \rightarrow e^{-\mu |\zeta \cdot x|} L^2 , \qquad |\zeta| = 1, \qquad \mu \in (0,1)
\end{equation}
are bounded. Together with (the tilted version of) \Cref{prop:agmon}, this implies  that
\begin{equation}\label{eq:cpt-agmon}
    B_\pm(\lambda) \ : \ e^{-\epsi |\zeta \cdot x|} L^2 \rightarrow e^{-\epsi |\zeta \cdot x|} L^2 , \qquad 0< \epsi \leq 2\gamma(\lambda), \qquad  |\zeta| = 1 
\end{equation}
is bounded.

Moreover, the operator
\begin{equation}\label{eq:cpt-projection-transverse}
    P_\pm \ : \ L^2 \ \longrightarrow \ e^{-\mu\left|x\cdot e_\pm^\perp\right|} L^2 , \qquad \mu \in (0,1)
\end{equation}
is bounded. Indeed,
\begin{equation}
    P_\pm f(x) = a_\pm\left(x\cdot e_\pm\right)\, \phi_\pm\left(x\cdot e_\pm^\perp\right) , \qquad a_\pm(s) \defeq \int_\R \ove{\phi_\pm(\tau)}^{\top} f\left(s\,e_\pm + \tau\, e_\pm^\perp\right) d\tau .
\end{equation}
Since $\norm{\phi_\pm}_{L^2(\R)} = 1$ and $P_\pm$ is an orthogonal projection, $\norm{a_\pm}_{L^2(\R)} = \norm{P_\pm f} \leq \norm{f}$. Thus
\begin{equation}
    \big\| e^{\mu\left|x\cdot e_\pm^\perp\right|} P_\pm f \big\| \ = \ \norm{a_\pm}_{L^2(\R)}\ \big\| e^{\mu\left|x\cdot e_\pm^\perp\right|}\phi_\pm \big\|_{L^2(\R)} \ \leq \ \big\| e^{\mu\left|x\cdot e_\pm^\perp\right|}\phi_\pm \big\|_{L^2(\R)}\, \norm{f} ,
\end{equation}
and $\big\| e^{\mu|\tau|}\phi_\pm \big\|_{L^2(\R)} < +\infty$ because $\mu < 1$ and $\left|\phi_\pm(\tau)\right| \leq M_\pm e^{-|\tau|}$ by \eqref{eq:phi-pm}. This proves \eqref{eq:cpt-projection-transverse}.

\textbf{3.} We work on the terms $W A_\pm(\lambda)$ in \eqref{eq:cpt-decomposition}. Heuristically, these map $\EE_\gamma$ to $\EE_\rho$ for some $\rho > \gamma$ because they add decay in the $e_\pm^\perp$-direction (due to $P_\pm$) and in the $e_0$-direction (due to $W$), and $\left(e_\pm^\perp, e_0\right)$ form a basis of $\R^2$. Let us be more specific. We recall from \eqref{eq:phi-pm} and \eqref{eq:edge-integral-bound} that
\begin{equations}\label{eq:cpt-edge-factors}
    \left|\phi_\pm\left(x\cdot e_\pm^\perp\right)\right| \ \le \ C\,e^{-\left|x\cdot e_\pm^\perp\right|},
\qquad
\left|\int_{-\infty}^{x\cdot e_\pm} e^{-i\lambda s}a_\pm(s)\,ds\right| \ \le \ \dfrac{\norm{f}_{\EE_\gamma}}{\sqrt{\gamma-\left|\Im\lambda\right|}}.
\end{equations}
This leads to:
\begin{align}
    \big|A_\pm(\lambda)f(x)\big| 
    & = \left| e^{i\lambda\, x\cdot e_\pm}\left(\int_{-\infty}^{x\cdot e_\pm} e^{-i\lambda s}\,a_\pm(s)\,ds\right)\phi_
    \pm\left(x\cdot e^\perp_\pm\right)\right|
    \\
    & \leq \ \dfrac{C\,\norm{f}_{\EE_\gamma}}{\sqrt{\gamma-\left|\Im\lambda\right|}}\; e^{\gamma|x| \, - \, \left|x\cdot e_\pm^\perp\right|} . \label{eq:cpt-edge-A-bound}
\end{align}

The function $W$ is supported in the strip $|x \cdot e_0| \leq 1$, and so $W(x) = O(e^{-|x \cdot e_0|})$. Combined with \eqref{eq:cpt-edge-A-bound}, this produces:
\begin{equation}\label{eq:cpt-amplitude}
    \big|W(x) A_\pm(\lambda)f(x)\big| 
     \leq \ \dfrac{C\,\norm{f}_{\EE_\gamma}}{\sqrt{\gamma-\left|\Im\lambda\right|}}\; e^{\gamma|x| \, - \, \left|x\cdot e_\pm^\perp\right| \, - \, \left|x\cdot e_0\right|} .
\end{equation}

We now study $\left|x\cdot e_\pm^\perp\right| + \left|x\cdot e_0\right|$. We have
\begin{equation}\label{eq:cpt-edge-geometry}
    \left|x\cdot e_\pm^\perp\right| + \left|x\cdot e_0\right| \ \ge \ \sqrt{\left(x\cdot e_\pm^\perp\right)^2 + \left(x\cdot e_0\right)^2} = \sqrt{x^\top M_\pm\, x}, \qquad M_\pm \defeq e_\pm^\perp\left(e_\pm^\perp\right)^\top + e_0\,e_0^\top .
\end{equation}
Set $c_\pm \defeq e_\pm^\perp \cdot e_0 = \mp \sin \te_0$. The matrix $M_\pm$ maps $e_\pm^\perp$ to $e_\pm^\perp + c_\pm\,e_0$ and $e_0$ to $c_\pm\,e_\pm^\perp + e_0$. In that basis $M_\pm$ is represented by
\begin{equation}\label{eq:cpt-edge-matrix}
    \begin{bmatrix} 1 & c_\pm \\ c_\pm & 1 \end{bmatrix} ,
    \qquad\text{whose eigenvalues are}\qquad
    1 \pm c_\pm .
\end{equation}
The smallest eigenvalue of $M_\pm$ is therefore $1-\left|\sin\theta_0\right| \in (0,1]$, hence
\begin{equation}\label{eq:cpt-edge-sqrt}
    \left|x\cdot e_\pm^\perp\right| + \left|x\cdot e_0\right| \ \ge \sqrt{x^\top M_\pm\, x} \ \ge \ \sqrt{1-\left|\sin\theta_0\right|}\ |x| \ \ge \ \left(1-\left|\sin\theta_0\right|\right)|x|.
\end{equation}

Combining \eqref{eq:cpt-amplitude} and \eqref{eq:cpt-edge-sqrt}, we obtain 
\begin{equation}\label{eq:cpt-edge-radial}
    \left|W(x)\,A_\pm(\lambda)f(x)\right| \ \le \ \dfrac{C\,\norm{f}_{\EE_\gamma}}{\sqrt{\gamma-\left|\Im\lambda\right|}}\; e^{-\left(1-\left|\sin\theta_0\right|-\gamma\right)|x|} .
\end{equation}
In particular, $WA_\pm(\lambda)$ map $\EE_\gamma$ to $\EE_\rho$ when $\rho < 1 -\left|\sin\theta_0\right|-\gamma$.

\textbf{4.} We now work on the term $W B_+(\lambda) P_-$ in \eqref{eq:cpt-decomposition}. This term should also map $\EE_\gamma$ to $\EE_\rho$ for the same heuristic reason as the terms $(1)$ and $(2)$. By \eqref{eq:cpt-projection-transverse} with $\mu = \tfrac12$, and by \eqref{eq:cpt-agmon}, the operators
\begin{equation}\label{eq:cpt-term4-projection}
\begin{aligned}
    P_- \ &: \ L^2 \ \longrightarrow \ e^{-\frac12\left|x\cdot e_-^\perp\right|}L^2 \ \subset \ e^{-\gamma(\lambda) \left|x\cdot e_-^\perp\right|}L^2 ,
    \\
    B_+(\lambda) \ &: \ e^{-\gamma(\lambda) \left|x\cdot e_-^\perp\right|}L^2 \ \longrightarrow \ e^{-\gamma(\lambda) \left|x\cdot e_-^\perp\right|}L^2
\end{aligned}
\end{equation}
are bounded. Therefore, since $W$ is supported in the strip $|x \cdot e_0| \leq 1$ by \eqref{eq:parametrix} and \eqref{eq:parametrix-error}, we have:
\begin{equation}
    W B_+(\lambda) P_- \ : \ L^2 \ \longrightarrow \ e^{-|x\cdot e_0|-\gamma(\lambda) \left|x\cdot e_-^\perp\right|}L^2.
\end{equation}
By \eqref{eq:cpt-edge-sqrt}, and because $\gamma(\lambda) < 1$ by \eqref{eq:def_gamma}, we have
\begin{equation}
    |x\cdot e_0|+\gamma(\lambda) \left|x\cdot e_-^\perp\right| \geq \gamma(\lambda) \big( |x\cdot e_0| + \left|x\cdot e_-^\perp\right|\big) \geq \gamma(\lambda) (1-|\sin \te_0|) |x|.
\end{equation}
Hence, for $\rho \leq \gamma(\lambda) (1-|\sin \te_0|)$, the operator $W B_+(\lambda) P_-$ maps $L^2$ to $\EE_\rho$, hence $\EE_\gamma$ to $\EE_\rho$.

\textbf{5.} We now work on the term $W P_+ B_-(\lambda)$ in \eqref{eq:cpt-decomposition}. Because $|\lambda| < E \le E_-$, $\lambda$ is not in the spectrum of $\Di_-\big|_{\HH_-^\perp}$; in particular, $B_-(\lambda)$ is bounded on $L^2$.

It therefore suffices to show that $W P_+$ is bounded from $L^2$ to $\EE_\rho$ for some $\rho > \gamma$. By \eqref{eq:cpt-projection-transverse} with $\mu = \tfrac12$, and because $W$ is supported in the strip $\left|x\cdot e_0\right| \leq 1$, the operators
\begin{equation}\label{eq:cpt-term5-projection}
\begin{aligned}
    P_+ \ &: \ L^2 \ \longrightarrow \ e^{-\frac12\left|x\cdot e_+^\perp\right|}L^2 ,
    \\
    W \ &: \ e^{-\frac12\left|x\cdot e_+^\perp\right|}L^2 \ \longrightarrow \ e^{-\frac12\left(\left|x\cdot e_+^\perp\right| \, + \, \left|x\cdot e_0\right|\right)}L^2
\end{aligned}
\end{equation}
are bounded. By \eqref{eq:cpt-edge-sqrt}, we have
\begin{equation}
    \dfrac12\left(\left|x\cdot e_+^\perp\right| + \left|x\cdot e_0\right|\right) \ \geq \ \dfrac{1-\left|\sin\te_0\right|}{2}\,|x| .
\end{equation}
For $\rho \leq \frac12\left(1-\left|\sin\te_0\right|\right)$, we deduce that $W P_+ B_-(\lambda) : L^2 \rightarrow \EE_\rho$ is bounded. In particular, $W P_+ B_-(\lambda)$ maps $\EE_\gamma$ to $\EE_\rho$.

\begin{figure}[t]
\centering
\begin{minipage}[c]{0.50\textwidth}
\centering
\begin{tikzpicture}[scale=0.42,>=Stealth,line join=round]
    \def\tz{30}
    \def\dd{1}
    \begin{scope}
        \clip (-9.4,-6.0) rectangle (9.4,6.0);
        \begin{scope}
            \clip[rotate=-90]           (-60,-60) rectangle (\dd,60);
            \clip[rotate={2*\tz+90}]    (-60,-60) rectangle (\dd,60);
            \fill[BurntOrange!18]       (-60,-60) rectangle (60,60);
        \end{scope}
        \begin{scope}
            \clip[rotate=90]            (-60,-60) rectangle (\dd,60);
            \clip[rotate={2*\tz-90}]    (-60,-60) rectangle (\dd,60);
            \fill[BurntOrange!18]       (-60,-60) rectangle (60,60);
        \end{scope}
        \begin{scope}
            \clip[rotate=\tz]           (-60,-60) rectangle (\dd,60);
            \clip[rotate={\tz+180}]     (-60,-60) rectangle (\dd,60);
            \fill[RoyalBlue!16]         (-60,-60) rectangle (60,60);
        \end{scope}
        \begin{scope}
            \clip[rotate=\tz]           (-60,-60) rectangle (\dd,60);
            \clip[rotate={\tz+180}]     (-60,-60) rectangle (\dd,60);
            \clip[rotate=-90]           (-60,-60) rectangle (\dd,60);
            \clip[rotate={2*\tz+90}]    (-60,-60) rectangle (\dd,60);
            \fill[RoyalBlue!50!BurntOrange!45] (-60,-60) rectangle (60,60);
        \end{scope}
        \begin{scope}
            \clip[rotate=\tz]           (-60,-60) rectangle (\dd,60);
            \clip[rotate={\tz+180}]     (-60,-60) rectangle (\dd,60);
            \clip[rotate=90]            (-60,-60) rectangle (\dd,60);
            \clip[rotate={2*\tz-90}]    (-60,-60) rectangle (\dd,60);
            \fill[RoyalBlue!50!BurntOrange!45] (-60,-60) rectangle (60,60);
        \end{scope}
        \draw[gray!50,dashed] (180:9) -- (0:9);
        \draw[gray!50,dashed] ({2*\tz+180}:9) -- ({2*\tz}:9);
        \draw[BurntOrange!85,thick,rotate=-90]        (\dd,-60) -- (\dd,60);
        \draw[BurntOrange!85,thick,rotate=90]         (\dd,-60) -- (\dd,60);
        \draw[BurntOrange!85,thick,rotate={2*\tz+90}] (\dd,-60) -- (\dd,60);
        \draw[BurntOrange!85,thick,rotate={2*\tz-90}] (\dd,-60) -- (\dd,60);
        \draw[RoyalBlue!75,thick,rotate=\tz]          (\dd,-60) -- (\dd,60);
        \draw[RoyalBlue!75,thick,rotate={\tz+180}]    (\dd,-60) -- (\dd,60);
    \end{scope}
    \draw[->,thick] (0,0) -- (0:2.2)         node[below=-1pt]      {$e_-$};
    \draw[->,thick] (0,0) -- ({2*\tz}:2.2)   node[above right=-3pt]{$e_+$};
    \draw[->,thick] (0,0) -- ({\tz+90}:2.2)  node[left=-1pt]       {$e_0^\perp$};
    \fill (0,0) circle (2.2pt);
    \node[font=\small] at ({\tz}:7.0)        {$\supp Z$};
    \node[font=\small] at ({\tz+180}:7.0)    {$\supp Z$};
    \node[font=\small,align=center] at ({\tz+90}:4.9) {$\left|x\cdot e_0\right| \le 1$};
\end{tikzpicture}
\end{minipage}\hfill
\begin{minipage}[c]{0.47\textwidth}
\caption{The function $Z =\left(\kappa_--\kappa_+\right)\sigma_3$ is supported within distance $1$ of the cone spanning directions between $e_-$ and $e_+$ (orange); $W$ is supported in the strip $\left|x\cdot e_0\right| \le 1$ (blue). The two regions meet in a bounded set (shaded).}
\label{fig:supp-Z}
\end{minipage}
\end{figure}

\textbf{6.} We now work on $W B_+(\lambda)\,Z\,B_-(\lambda)$. This term adds decay in the direction $e_0^\perp$, as well as in all directions not in the support of $Z$; these cover a basis of $\R^2$, see \Cref{fig:supp-Z}. We first observe that $Z$ is supported at distance at most $1$ from the set
\begin{equation}\label{eq:1f}
    \{ x : \  \text{ $x\cdot e_+^\perp$ and $x\cdot e_-^\perp$ have opposite signs} \} .
\end{equation}
In particular, on the support of $Z$, we have $x \cdot e_+^\perp \leq 1$ and $x \cdot e_-^\perp \geq -1$; or $x \cdot e_+^\perp \geq -1$ and $x \cdot e_-^\perp \leq 1$. The vector $e_0^\perp$ decomposes in the basis $(e_\pm,e_\pm^\perp)$ as 
\begin{equation}\label{eq:1j}
    e_0^\perp = \sin \theta_0 \, e_+ + \cos \te_0 \, e_+^\perp = -\sin \theta_0 \, e_- + \cos \te_0 \, e_-^\perp .
\end{equation}

Therefore, if $x \cdot e_+^\perp \leq 1$ and $x \cdot e_-^\perp \geq -1$, we have:
\begin{equation}\label{eq:cpt-e0perp}
\begin{aligned}
    x \cdot e_0^\perp &= \sin \te_0 \, x \cdot e_+ + \cos \te_0 \, x \cdot e_+^\perp \leq \sin \te_0 \, x \cdot e_+ + \cos \te_0 ;
    \\
    x \cdot e_0^\perp &= -\sin \te_0 \, x \cdot e_- + \cos \te_0 \, x \cdot e_-^\perp \geq -\sin \te_0 \, x \cdot e_- - \cos \te_0 ;
    \\
    \Rightarrow \qquad \left|x \cdot e_0^\perp\right| &\leq \left|\sin \te_0\right| \max\left(\left|x \cdot e_+\right| , \left|x \cdot e_-\right|\right) + \cos \te_0 .
\end{aligned}
\end{equation}
A similar argument shows that \eqref{eq:cpt-e0perp} also holds when $x \cdot e_+^\perp \geq -1$ and $x \cdot e_-^\perp \leq 1$. Therefore, for all $x \in \supp Z$ and $\rho \leq \tfrac{\gamma}{|\sin \te_0|}$,
\begin{align}
    \rho |x \cdot e_0^\perp| & \leq \rho |\sin \theta_0| \max (|x \cdot e_+| , |x \cdot e_-|) + \rho \cos \te_0
    \\
    & \leq \gamma \max (|x \cdot e_+| , |x \cdot e_-|) + \rho. \label{eq:1g}
\end{align}
In particular, this proves that 
\begin{equation}
    \big| Z(x) e^{\rho |x \cdot e_0^\perp|} \big| \leq \| Z \|_\infty e^{\gamma \max (|x \cdot e_+| , |x \cdot e_-|) + \rho} \leq \| Z \|_\infty e^\rho (e^{\gamma |x \cdot e_+|} + e^{\gamma |x \cdot e_-|}).
\end{equation}
This implies that $Z$, as a multiplication operator, maps
$e^{-\gamma |x \cdot e_+|} L^2 \cap e^{-\gamma |x \cdot e_-|} L^2$ to $e^{-\rho |x \cdot e_0^\perp|} L^2$.

Recall $\gamma < \gamma(\lambda) (1-|\sin \te_0|) < 2\gamma(\lambda)$ and assume that $0<\rho \leq \gamma(\lambda)(1-|\sin \te_0|)$; in particular $0<\rho \leq 2\gamma(\lambda)$.  
Bringing in \Cref{prop:agmon} -- and the fact that $W$ is supported in $|x \cdot e_0| \leq 1$ -- we obtain that the operators
\begin{equation}
\begin{aligned}
    B_-(\lambda) \ &: \ \EE_\gamma \ \longrightarrow \ e^{-\gamma \left|x \cdot e_+\right|} L^2 \cap e^{-\gamma \left|x \cdot e_-\right|} L^2 ,
    \\
    Z \ &: \ e^{-\gamma \left|x \cdot e_+\right|} L^2 \cap e^{-\gamma \left|x \cdot e_-\right|} L^2 \ \longrightarrow \ e^{-\rho \left|x \cdot e_0^\perp\right|} L^2 ,
    \\
    B_+(\lambda) \ &: \ e^{-\rho \left|x \cdot e_0^\perp\right|} L^2 \ \longrightarrow \ e^{-\rho \left|x \cdot e_0^\perp\right|} L^2 ,
    \\
    W \ &: \ e^{-\rho \left|x \cdot e_0^\perp\right|} L^2 \ \longrightarrow \ e^{-\rho \left|x \cdot e_0^\perp\right| - \rho \left|x \cdot e_0\right|} L^2 \ \subset \ \EE_\rho
\end{aligned}
\end{equation}
are bounded. Chaining these together, we obtain that $W B_+(\lambda)\,Z\,B_-(\lambda)$ maps $\EE_\gamma$ to $\EE_\rho$. 

\textbf{7.} Therefore, if $\rho$ satisfies
\begin{equation}\label{eq:cpt-rho}
    \gamma \ < \ \rho \ < \ \min\left( 1-\left|\sin\te_0\right|-\gamma \ , \ \ \gamma(\lambda)\left(1-\left|\sin\te_0\right|\right) \ , \ \ \dfrac{1-\left|\sin\te_0\right|}{2} \ , \ \ \dfrac{\gamma}{\left|\sin\te_0\right|} \right)
\end{equation}
-- with the convention $\gamma/0 = +\infty$ when $\te_0 = 0$ -- the terms $(1)$--$(5)$ from \eqref{eq:cpt-decomposition} are all bounded from $\EE_\gamma$ to $\EE_\rho$: the four upper bounds in \eqref{eq:cpt-rho} are the constraints produced in steps \textbf{3}, \textbf{4}, \textbf{5} and \textbf{6}, respectively. In particular,  if $\rho$ satisfies \eqref{eq:cpt-rho},
    $K_2(\lambda) : \EE_\gamma \rightarrow \EE_\rho$
 is bounded. 

It remains to check that there exist values $\rho$ satisfying \eqref{eq:cpt-rho}. We first note that
\begin{equation}
    \min\left( 1-\left|\sin\te_0\right|-\gamma \ , \ \ \gamma(\lambda)\left(1-\left|\sin\te_0\right|\right) \ , \ \ \dfrac{1-\left|\sin\te_0\right|}{2} \right) = \gamma(\lambda)\left(1-\left|\sin\te_0\right|\right)
\end{equation}
because $\gamma(\lambda) \leq 1/20$ and $\gamma < \tfrac{1-\left|\sin\te_0\right|}{2}$. By assumption, $\gamma(\lambda)\left(1-\left|\sin\te_0\right|\right) > \gamma$,
and $\tfrac{\gamma}{\left|\sin\te_0\right|} > \gamma$ (because $|\te_0| < \pi/2$), so the minimum in \eqref{eq:cpt-rho} is larger than $\gamma$ indeed. This completes the proof.
\end{proof}

\begin{figure}[t]
\centering
\begin{minipage}[c]{0.54\textwidth}
\centering
\begin{tikzpicture}[scale=0.60,>=Stealth,line join=round]
    \def\tz{50}
    \def\dd{1}
    \begin{scope}
        \clip (-7.2,-5.4) rectangle (4.6,3.1);
        \begin{scope}
            \clip[rotate={\tz+180}]  (-60,-60) rectangle (\dd,60);
            \fill[BurntOrange!15]    (-60,-60) rectangle (60,60);
        \end{scope}
        \begin{scope}
            \clip[rotate=90]         (-60,-60) rectangle (\dd,60);
            \clip[rotate={2*\tz-90}] (-60,-60) rectangle (\dd,60);
            \fill[RoyalBlue!14]      (-60,-60) rectangle (60,60);
        \end{scope}
        \begin{scope}
            \clip[rotate=90]         (-60,-60) rectangle (\dd,60);
            \clip[rotate={2*\tz-90}] (-60,-60) rectangle (\dd,60);
            \clip[rotate={\tz+180}]  (-60,-60) rectangle (\dd,60);
            \fill[RoyalBlue!40]      (-60,-60) rectangle (60,60);
        \end{scope}
        \draw[gray!50,dashed] (0,0) -- (180:8);
        \draw[gray!50,dashed] (0,0) -- ({2*\tz+180}:8);
        \draw[gray!50,dashed] ({\tz+90}:8) -- ({\tz-90}:8);
        \draw[RoyalBlue!75,thick,rotate=90]          (\dd,-60) -- (\dd,60);
        \draw[RoyalBlue!75,thick,rotate={2*\tz-90}]  (\dd,-60) -- (\dd,60);
        \draw[BurntOrange!85,thick,rotate={\tz+180}] (\dd,-60) -- (\dd,60);
    \end{scope}
    \draw[->,thick] (0,0) -- (0:1.5)        node[below right=-3pt] {$e_-$};
    \draw[->,thick] (0,0) -- ({2*\tz}:1.5)  node[above=-2pt]       {$e_+$};
    \draw[->,thick] (0,0) -- ({\tz}:1.5)    node[right=-1pt]       {$e_0$};
    \fill (0,0) circle (1.7pt);
    \node[font=\small] at (232:5.1) {$\{\kappa \neq \kappa_+\}$};
    \node[font=\small] at (48:3.2)  {$\{\chi_+ \neq 0\}$};
    \node[font=\small] at (-5.6,2.1) (lab) {$\supp V_+$};
    \draw[->,black!60] (lab) -- (-1.35,0.35);
    \node[BurntOrange!90!black,font=\scriptsize,rotate={\tz-90},anchor=north,
          inner sep=2pt] at ({\tz+180}:{\dd+0.15}) {$x\cdot e_0 = -1$};
\end{tikzpicture}
\end{minipage}\hfill
\begin{minipage}[c]{0.46\textwidth}
\caption{The support of $V_+ = \chi_+\left(\kappa-\kappa_+\right)\sigma_3$. }
\label{fig:supp-V}
\end{minipage}
\end{figure}

\begin{lemma}\label{lem:cpt-fredholm} Under the assumptions of \Cref{lem:cpt-gain}, the operators $K_1(\lambda), K_2(\lambda) : \EE_\gamma \rightarrow \EE_\gamma$ are compact.
\end{lemma}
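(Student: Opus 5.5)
The plan is to obtain compactness from two sources: a gain in exponential weight, which controls behavior at infinity, and local elliptic regularity, which supplies a gain of one derivative on bounded sets. The mechanism is Rellich's theorem: for $\rho > \gamma$, the embedding $\EE_\rho \cap H^1_\loc \hookrightarrow \EE_\gamma$ is compact once we have uniform local $H^1$ control. Concretely, if $T:\EE_\gamma\to\EE_\rho$ is bounded and $\chi_L T$ is compact on $\EE_\gamma$ for each cutoff $\chi_L = \1_{B_L(0)}$, then $T$ is compact. Indeed, $\|(1-\chi_L)T\|_{\EE_\gamma\to\EE_\gamma}\le e^{-(\rho-\gamma)L}\|T\|_{\EE_\gamma\to\EE_\rho}\to 0$, so $T$ is a norm limit of compact operators.

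\textbf{For $K_2(\lambda)$.} \Cref{lem:cpt-gain} already gives boundedness $\EE_\gamma\to\EE_\rho$ with $\rho>\gamma$, so it remains to check local compactness. Since $W=\sigma\cdot D_x\chi_+$ is smooth and bounded, $\chi_L W$ has compact support. By \Cref{cor:tilted-resolvent}, $R_\pm(\lambda)$ maps $\EE_\gamma$ continuously into $H^1_\loc$; the closed graph theorem makes it continuous into $H^1(B_L(0))$. Hence $\chi_L W R_\pm(\lambda):\EE_\gamma\to H^1(B_L)$ is bounded, and by Rellich $H^1(B_L)\hookrightarrow L^2(B_L)\hookrightarrow\EE_\gamma$ is compact (the weight is bounded on $B_L$). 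The difference $K_2 = W(R_+-R_-)$ is therefore compact.

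\textbf{For $K_1(\lambda)$.} Here $V_\pm=\chi_\pm(\kappa-\kappa_\pm)\sigma_3$. By \Cref{def:bent}, $\kappa=\kappa_+$ on $\{|x|>R,\ x\cdot e_0>0\}$, while $\chi_+$ vanishes on $x\cdot e_0\le -1$. So $\supp V_+$ lies in $B_R(0)\cup\{|x\cdot e_0|\le 1\}\cup\{x\cdot e_0\ge -1,\ |x|\le R\}$ intersected with $\{\kappa\neq\kappa_+\}$, as in \Cref{fig:supp-V}; the analogous statement holds for $V_-$. I need to check that this support is actually bounded. In the strip $|x\cdot e_0|\le1$ far out, we are in $\{x\cdot e_0>0\}$ or $\{x\cdot e_0<0\}$, where $\kappa$ agrees with $\kappa_+$ or $\kappa_-$ respectively. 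On the wrong side, the difference $\kappa_- - \kappa_+$ is nonzero only near the support of $Z$, and \Cref{fig:supp-Z} shows that this meets the strip only in a bounded set. Thus $\supp V_\pm$ is bounded. If so, $V_\pm R_\pm(\lambda) = V_\pm\1_{B_L}R_\pm(\lambda)$ is compact by the same Rellich argument, and no weight gain is needed. The one point to verify is that the support is bounded; I expect this to be straightforward from the geometry in \Cref{fig:supp-Z}, using that $e_0^\perp$ is transverse to the cone between $e_-$ and $e_+$ because $|\theta_0|<\pi/2$.

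\textbf{Main obstacle.} The delicate step is justifying that $R_\pm(\lambda)$ is bounded $\EE_\gamma\to H^1(B_L)$ uniformly, so that a bounded set in $\EE_\gamma$ is mapped to a precompact set. I will get this from the closed graph theorem applied to the continuous extension in \Cref{cor:tilted-resolvent}. Alternatively, the estimates \eqref{eq:edge-density-bound} and \eqref{eq:agmon-radial} give $L^2(B_L)$ bounds directly, and the local elliptic estimate $\|u\|_{H^1(B_L)}\lesssim\|(\Di_\pm-\lambda)u\|_{L^2(B_{L+1})}+\|u\|_{L^2(B_{L+1})}$, with $(\Di_\pm-\lambda)R_\pm f=f$, upgrades these to $H^1(B_L)$ bounds.
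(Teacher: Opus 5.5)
Your proposal is correct and follows essentially the same route as the paper. For $K_1(\lambda)$ you use the bounded support of $V_\pm$ together with Rellich's theorem, and for $K_2(\lambda)$ you combine the weight gain $\EE_\gamma\to\EE_\rho$, $\rho>\gamma$, from \Cref{lem:cpt-gain} with local $H^1$ compactness and a norm-limit argument; the only difference is that the paper cuts off with smooth $\varphi_n$ rather than with $\1_{B_L(0)}$, which is immaterial.
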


\begin{proof} \textbf{1.} The function $V_+$ is supported in
\begin{equation}
    \{ \kappa \neq \kappa_+ \} \cap \{ \chi_+ \neq 0 \}.
\end{equation}
These two sets lie within fixed distance of $\{ \pi < \arg x < \pi+\te_+ \}$ and $\{ x \cdot e_0 > 0 \}$, respectively; see \Cref{fig:supp-V}. So $V_+$ is a bounded, compactly supported function. Let $\varphi \in C^\infty_0(\R^2)$ be equal to $1$ on a neighborhood of $\supp V_+$, so that $V_+ R_+(\lambda) = V_+ \cdot \varphi R_+(\lambda)$. For $\lambda \in \tOmega$, the operator $\varphi R_+(\lambda)$ maps $\EE_\gamma$ boundedly to $H^1$, with values supported in the fixed compact set $\supp\varphi$; by the Rellich--Kondrachov theorem, it is therefore compact from $\EE_\gamma$ to $L^2$. Since $V_+$ is bounded and compactly supported, the multiplication operator $V_+ : L^2 \rightarrow \EE_\gamma$ is bounded. Being the composition of a compact and a bounded operator, $V_+R_+(\lambda)$ is thus compact as an operator on $\EE_\gamma$. The same holds for  $V_- R_-(\lambda)$. It follows that $K_1(\lambda)$ is compact on $\EE_\gamma$.

\textbf{2.} We now show that $K_2(\lambda) : \EE_\gamma \rightarrow \EE_\gamma$ is a limit (in operator norm) of compact operators on $\EE_\gamma$; hence it is itself compact. For $n \in \N$, let $\varphi_n$ be a smooth function supported in $B_{2n}(0)$, equal to $1$ on $B_n(0)$, such that $0 \leq\varphi_n \leq 1$. By \Cref{lem:cpt-gain}, the operator $K_2(\lambda)$ maps $\EE_\gamma$ to $\EE_\rho$ for some $\rho > \gamma$, and thus:
\begin{align}
    \left\| K_2(\lambda) - \varphi_n K_2(\lambda) \right\|_{\EE_\gamma \rightarrow \EE_\gamma} & \leq \left\| 1 - \varphi_n \right\|_{\EE_\rho \rightarrow \EE_\gamma} \left\| K_2(\lambda) \right\|_{\EE_\gamma \rightarrow \EE_\rho}
    \\
    & \leq C \left\| (1 - \varphi_n) e^{(\gamma-\rho) |x|}\right\|_\infty \leq C e^{-(\rho - \gamma) n}.
\end{align}
Hence, as operators on $\EE_\gamma$,
\begin{equation}
    K_2(\lambda) = \lim_{n \rightarrow \infty} \varphi_n K_2(\lambda).
\end{equation}
Because $R_\pm(\lambda)$ maps $\EE_\gamma$ to $H^1_\loc$ and $W$ is smooth with bounded derivatives, $K_2(\lambda)$ maps $\EE_\gamma$ to $H^1_\loc$, and hence $\varphi_n K_2(\lambda)$ maps $\EE_\gamma$ to $H^1_\comp$, in particular it is a compact operator on $\EE_\gamma$. This proves that $K_2(\lambda)$ is compact on $\EE_\gamma$, and this completes the proof. 
\end{proof}

\begin{proof}[Proof of \Cref{thm:mero}]
\textbf{1.}  We first claim that $\Id+K(\lambda)$ is invertible on $\EE_\gamma$ for every $\lambda$ satisfying \eqref{eq:cpt-parameters}, with $\Im\lambda>0$. By \Cref{lem:cpt-fredholm}, $K(\lambda)$ is compact, so it suffices to show that $\Id+K(\lambda)$  is injective.

Let $f \in \EE_\gamma$ with $\left(\Id+K(\lambda)\right)f = 0$. By \eqref{eq:parametrix-error}, we have
$(\Di-\lambda) Q(\lambda)f = 0$. Because $\Im \lambda > 0$, $\Di-\lambda$ is invertible on $L^2$ and hence $Q(\lambda) f = 0$. We write this identity as 
\begin{equation}
    0 = \chi_+u_++\chi_-u_-, \qquad u_\pm \defeq R_\pm(\lambda)f.
\end{equation}

Let $h \defeq u_+-u_- \in H^1$. Note that
\begin{equation}\label{eq:1s}
    \chi_- h = \chi_-u_+-\chi_-u_- =  \chi_-u_+ + \chi_+ u_+ = u_+.
\end{equation}
Therefore, from $u_\pm = R_\pm(\lambda)f$, we deduce that:
\begin{align}
      (\Di_--\lambda) u_- = f &= (\Di_+-\lambda) u_+  =  (\Di_--\lambda) u_+ + (\Di_+-\Di_-) u_+ \\ & = (\Di_--\lambda) u_+ + (\kappa_+-\kappa_-) \sigma_3 u_+ = (\Di_--\lambda) u_+ + (\kappa_+-\kappa_-) \chi_- \sigma_3 h.
\end{align}
In particular, subtracting $(\Di_--\lambda) u_-$ from both sides, we obtain:
\begin{equation}\label{eq:mero-selfadjoint}
    0 = \big(\Di_- - \lambda + \chi_-\left(\kappa_+-\kappa_-\right)\sigma_3 \big) h.
\end{equation}
The operator $\Di_- + \chi_-\left(\kappa_+-\kappa_-\right)\sigma_3$ is selfadjoint on $L^2$ with domain $H^1$. So \eqref{eq:mero-selfadjoint} is a spectral, selfadjoint problem with spectral parameter $\lambda$ with positive imaginary part. The only solution is $h=0$. This implies $u_+ =0$ by \eqref{eq:1s}, and so $f = (\Di_+-\lambda)u_+ = 0$. Therefore $\Id+K(\lambda)$  is injective for $\Im \lambda > 0$, and hence $\Id+K(\lambda)$ is invertible when in addition $\lambda$ satisfies \eqref{eq:cpt-parameters}, by the Fredholm alternative.

\textbf{2.} Fix $ 0 < \gamma < \left(1-\left|\sin\te_0\right|\right)\gamma(0)$ and set
\begin{equation}\label{eq:mero-domain}
    U_\gamma \defeq \left\{ \lambda \in \C \ : \ |\lambda| < E, \ \ \left|\Im\lambda\right| < \gamma < \left(1-\left|\sin\te_0\right|\right)\gamma(\lambda) \right\} .
\end{equation}
For $\lambda \in U_\gamma$:
\begin{itemize}
    \item $K(\lambda) : \EE_\gamma \rightarrow \EE_\gamma$ is analytic;
    \item $K(\lambda) : \EE_\gamma \rightarrow \EE_\gamma$ is compact on $\EE_\gamma$;
    \item $\Id+K(\lambda) : \EE_\gamma \rightarrow \EE_\gamma$ is invertible when $\Im \lambda > 0$.
\end{itemize}
Therefore, by analytic Fredholm theory \cite[Theorem C.8]{DZ}, $\left(\Id+K(\lambda)\right)^{-1}$ is a meromorphic family of bounded operators on $\EE_\gamma$, $\lambda \in U_\gamma$, whose Laurent coefficients at each pole are finite rank.

\textbf{3.} Now observe that
\begin{equation}\label{eq:mero-formula}
    (\Di-\lambda)^{-1} = Q(\lambda)\left(\Id+K(\lambda)\right)^{-1} \ : \ \EE_\gamma \longrightarrow L^2_\loc ,
\end{equation}
valid when $\Im \lambda > 0$ and $\lambda$ satisfies \eqref{eq:mero-domain}, meromorphically continues to $\lambda \in U_\gamma$. This provides a meromorphic continuation of $(\Di-\lambda)^{-1}$ from $\EE_\gamma$ to $L^2_\loc$ past $(-E,E)$.

In particular this meromorphic continuation maps $L^2_\comp$ to $L^2_\loc$; we denote this family by $R(\lambda)$ below. A priori $R(\lambda)$ depends on $\gamma$, but since $L^2_\comp$ is dense in $\EE_\gamma$, and they all continue the same operator for $\Im \lambda > 0$, the uniqueness of the meromorphic continuation shows that $R(\lambda)$ is $\gamma$-independent. In particular, we obtain a meromorphic continuation of $(\Di-\lambda)^{-1}$ from $L^2_\comp$ to $L^2_\loc$, for 
\begin{equation}
    \lambda \in \bigcup_{0 < \gamma < (1-|\sin \te_0|) \gamma(0)}U_\gamma = \tOmega.
\end{equation}

\textbf{4.} It remains to upgrade $L^2_\loc$ to $H^1_\loc$. Away from the poles, $u = R(\lambda)f$ satisfies $\left(\Di-\lambda\right)u = f$ in the sense of distributions: this holds for $\Im\lambda>0$ and continues meromorphically to $\lambda \in \tOmega$. Hence $\left(\Di-\lambda\right)u \in L^2_\loc$, and elliptic regularity for the first-order elliptic operator $\Di$ gives $u \in H^1_\loc$. By the closed graph theorem, $R(\lambda)$ maps $L^2_\comp$, respectively $\EE_\gamma$, to $H^1_\loc$. The proof is complete.
\end{proof}

\section{Scattering states}\label{sec:scattering}

\subsection{The scattering problem}\label{subsec:scattering-problem}

We are interested in the scattering properties of $\Di$. Throughout this section $\lambda \in (-E,E)$ is a fixed energy such that $R(\lambda)$ is well-defined on $\EE_\gamma$ for some $0 < \gamma < \cos\te_0$, and we write $\psi_{\pm,\lambda}$ for the edge states \eqref{eq:psi-lambda-e} of $\Di_\pm$ at energy $\lambda$:
\begin{equation}\label{eq:psi-pm}
    \psi_{\pm,\lambda}(x) \defeq e^{i\lambda\,x\cdot e_\pm}\,\phi_\pm\left(x\cdot e_\pm^\perp\right), \qquad\text{so that}\qquad \left(\Di_\pm - \lambda\right)\psi_{\pm,\lambda} = 0 .
\end{equation}
We are interested in solutions of
\begin{equation}\label{eq:scattering-eq}
    \left(\Di - \lambda\right)u = 0, \qquad u(x) \sim \psi_{-,\lambda}(x) \ \text{ as } \ x_1 \rightarrow -\infty,
\end{equation}
that are outgoing along $e_+$. This means that $u$ takes the form
\begin{equation}\label{eq:scattering-state}
    u = \rho\,\psi_{-,\lambda} + R(\lambda)f,
\end{equation}
for some $f$ decaying exponentially and $\rho$ such that
\begin{equation}\label{eq:cutoff-rho}
    \rho \in C^\infty(\R), \qquad \rho(x_1) = 1 \ \text{ for } x_1 \le -R-2, \qquad \rho(x_1) = 0 \ \text{ for } x_1 \ge -R-1,
\end{equation}
with $R$ as in \Cref{def:bent}.
Such functions $u$ are called distorted plane waves at energy $\lambda$. They describe the scattering properties of the system. Indeed, superposing them over $\lambda$ produces wavepackets initially propagating rightwards, along $\R e_-$. The asymptotics of $u$ allow us to study transmission and reflection by the corner.

The next result  identifies the function $f$ in \eqref{eq:scattering-state}.

\begin{lemma}\label{lem:source}
With $\Di$ as in \eqref{eq:dirac-bent},
\begin{equation}\label{eq:source}
    \left(\Di - \lambda\right)\left(\rho\,\psi_{-,\lambda}\right) = \left(D_1\rho\right)\psi_{-,\lambda} + \rho\left(\kappa-\kappa_-\right)\sigma_3\,\psi_{-,\lambda} \defeq -f.
\end{equation}
Moreover, $f \in \EE_\gamma$ for all $\gamma < \cos \te_0$. 
\end{lemma}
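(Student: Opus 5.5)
The identity \eqref{eq:source} follows from the Leibniz rule. Write $\Di = \Di_- + (\kappa-\kappa_-)\sigma_3$. Since $\rho$ depends on $x_1$ only, we have $[D_1\sigma_1 + D_2\sigma_2,\rho] = (D_1\rho)\sigma_1$, so
\begin{equation}
    (\Di-\lambda)(\rho\psi_{-,\lambda}) = \rho(\Di_--\lambda)\psi_{-,\lambda} + (D_1\rho)\sigma_1\psi_{-,\lambda} + \rho(\kappa-\kappa_-)\sigma_3\psi_{-,\lambda}.
\end{equation}
The first term vanishes by \eqref{eq:psi-pm}. For the second, $\te_-=0$, so \eqref{eq:tilted-chirality} gives $\sigma_1\phi_- = \phi_-$ and hence $\sigma_1\psi_{-,\lambda} = \psi_{-,\lambda}$. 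This yields \eqref{eq:source}.

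For the decay we treat the two terms separately, using $|\psi_{-,\lambda}(x)| = |\phi_-(x_2)| \le M_- e^{-|x_2|}$ (note $\lambda$ is real).

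\emph{First term.} $D_1\rho$ is supported in the strip $-R-2 \le x_1 \le -R-1$. There $|x| \le R+2+|x_2|$, so $e^{\gamma|x|}|(D_1\rho)\psi_{-,\lambda}| \le C e^{\gamma(R+2)} e^{-(1-\gamma)|x_2|}$. This is square-integrable on the strip for any $\gamma<1$, and in particular for $\gamma<\cos\te_0$.

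\emph{Second term.} This is the main step: we must locate the support of $\rho(\kappa-\kappa_-)$. On $\supp\rho$ we have $x_1 \le -R-1$, so $|x|>R$. By \Cref{def:bent}, $\kappa = \kappa_-$ wherever in addition $x\cdot e_0<0$. Hence the support lies in $\{x_1\le -R-1,\ x\cdot e_0 \ge 0\}$. Since $|\te_0|<\pi/2$, the condition $x_1\cos\te_0 + x_2\sin\te_0 \ge 0$ with $x_1<0$ forces $|x_2||\sin\te_0| \ge |x_1|\cos\te_0$. Then $|x| \le |x_1|+|x_2| \le |x_2|(1+|\tan\te_0|)$.

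This crude bound gives decay $e^{\gamma|x| - |x_2|}$ only for $\gamma < (1+|\tan\te_0|)^{-1}$, so I would sharpen it. On this set $|x|^2 = x_1^2 + x_2^2 \le x_2^2(\tan^2\te_0 + 1)$, i.e. $|x| \le |x_2|/\cos\te_0$, equivalently $|x_2| \ge \cos\te_0\,|x|$. Then
\begin{equation}
    e^{\gamma|x|}\,|\rho(\kappa-\kappa_-)\sigma_3\psi_{-,\lambda}| \le 2\norm{\kappa}_\infty M_-\, e^{(\gamma-\cos\te_0)|x|},
\end{equation}
which lies in $L^2(\R^2)$ exactly when $\gamma<\cos\te_0$. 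This matches the threshold in the statement. When $\te_0=0$ the support is empty, since $x\cdot e_0 = x_1<0$ there, and the term vanishes.

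Combining the two terms gives $f \in \EE_\gamma$ for all $\gamma<\cos\te_0$. The only delicate point is the geometric inequality $|x_2|\ge\cos\te_0|x|$ on the support of the second term.
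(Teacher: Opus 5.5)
Your proof is correct and follows the paper's argument essentially step by step: the same Leibniz computation with $\sigma_1\phi_-=\phi_-$, the same split into the strip term $(D_1\rho)\psi_{-,\lambda}$ (which lies in $\EE_\gamma$ for all $\gamma<1$) and the term supported in $\{x_1\le -R-1,\ x\cdot e_0\ge 0\}$, and the same geometric inequality $\cos\te_0\,|x|\le |x_2|$ on that set. Your explicit constants and your remark that the second term vanishes when $\te_0=0$ are harmless additions.
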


\begin{proof}
The operators $\Di$ and $\Di_-$ differ only by their mass terms: $\Di = \Di_- + \left(\kappa-\kappa_-\right)\sigma_3$. Since $\rho$ depends only on $x_1$ and $\left(\Di_- - \lambda\right)\psi_{-,\lambda} = 0$ by \eqref{eq:psi-pm}, we obtain
\begin{equation}
\begin{aligned}
    \left(\Di - \lambda\right)\left(\rho\,\psi_{-,\lambda}\right)
     = \left(D_1\rho\right)\sigma_1\psi_{-,\lambda} + \rho\left(\kappa-\kappa_-\right)\sigma_3\,\psi_{-,\lambda} .
\end{aligned}
\end{equation}
Because $\sigma_1\phi_- = \phi_-$, we have $\sigma_1\psi_{-,\lambda} = \psi_{-,\lambda}$. This proves \eqref{eq:source}.

We split $f = f_1 + f_2$,
\begin{equation}
    f_1 = \left(D_1\rho\right)\psi_{-,\lambda} , \qquad f_2 = \rho\left(\kappa-\kappa_-\right)\sigma_3\,\psi_{-,\lambda}\add{,}
\end{equation}
and we estimate $f_1, f_2$ separately. Because $D_1\rho$ is supported in a vertical strip and $\psi_{-,\lambda}$ decays exponentially like $e^{-|x_2|}$, $f_1 \in \EE_\gamma$ for all $\gamma < 1$, in particular for all $\gamma < \cos \te_0$. 

The function $f_2$ is supported in $x_1 \leq -R-1$, $x\cdot e_0 \ge 0$. In particular, for $x \in \supp f_2$, we have:
\begin{equations}
    x_2\sin\theta_0 \ge -x_1\cos\theta_0 = |x_1|\cos\theta_0,
\\
    \cos^2\theta_0\,|x|^2 = x_1^2\cos^2\theta_0 + x_2^2\cos^2\theta_0 \leq x_2^2\sin^2\theta_0 + x_2^2\cos^2\theta_0 = x_2^2. 
\end{equations}
Because $\psi_{-,\lambda}$ decays like $e^{-|x_2|}$, $f_2$ decays like $e^{-\cos \theta_0 |x|}$. In particular, $f_2 \in \EE_\gamma$ for every $\gamma < \cos \te_0$. This completes the proof. 
\end{proof}

In particular, if $R(\lambda)$ is well-defined on $\EE_\gamma$ for some $\gamma < \cos \te_0$, then  \eqref{eq:scattering-eq} has a unique solution $u$ that is outgoing along $e_+$ in the sense of \eqref{eq:scattering-state}. It is given by: 
\begin{equations}\label{eq:1c}
    u = \rho\,\psi_{-,\lambda} + R(\lambda)f, \qquad f \defeq -\left(D_1\rho\right)\psi_{-,\lambda} - \rho\left(\kappa-\kappa_-\right)\sigma_3\,\psi_{-,\lambda}.
\end{equations} 
We refer to $u$ as a \textit{distorted plane wave of $\Di$ at energy $\lambda$.}

\begin{remark}\label{rem:cutoff-independent}
The distorted plane wave $u$ at energy $\lambda$ does not depend on the choice of $\rho$. Indeed, let $\rho_1,\rho_2$ both satisfy \eqref{eq:cutoff-rho}, $f_j$ be associated to $\rho_j$ by \eqref{eq:source}, and $u_j$ be associated to $\rho_j, f_j$ by \eqref{eq:scattering-state} ($j=1,2$). We have:
\begin{align}
    u_1-u_2 & = (\rho_1-\rho_2)\psi_{-,\lambda} + R(\lambda)(f_1-f_2)
    \\
    & =(\rho_1-\rho_2)\psi_{-,\lambda} - R(\lambda)(\Di-\lambda)(\rho_1-\rho_2)\psi_{-,\lambda} = 0,
\end{align}
where we used that $v = (\rho_1-\rho_2)\psi_{-,\lambda} \in \EE_\gamma$ for all $\gamma < 1$ and $R(\lambda)(\Di-\lambda) v = v$ for exponentially decaying $v$.
\end{remark}

Our goal is now to prove the main result of this work, \Cref{thm:intro}, recast here in a more detailed form:

\begin{theorem}\label{thm:scattering} There exists a locally finite set $S \subset (-E,E)$ with the following property. For every $\lambda \in (-E,E) \setminus S$:
    \begin{itemize}
        \item[(a)] There exists $0< \gamma < \cos \te_0$ such that $R(\lambda): \EE_\gamma \rightarrow L^2_\loc$ is well-defined and $\Id + K(\lambda)$ is a linear invertible map on $\EE_\gamma$;
        \item[(b)] There exists $\nu >0$ such that the function $u$ given by \eqref{eq:1c} satisfies:
        \item[] \begin{equations}\label{eq:scattering-asymptotics}
    u \; = \; \rho\,\psi_{-,\lambda} \; + \; i \widehat{a_+}(\lambda)\cdot \chi_+\,\psi_{+,\lambda} \; + \; \EE_\nu ,
    \qquad g \defeq \big(\Id+K(\lambda)\big)^{-1} f,
    \\
    a_+(s)=\int_\R\ove{\phi_+(\tau)}^\top g(se_++\tau e_+^\perp)d\tau,  \qquad \widehat{a_+}(\lambda) \defeq \int_\R e^{-i\lambda s}\,a_+(s)\,ds.
\end{equations}
Moreover, $\left|\widehat{a_+}(\lambda)\right| = 1$.
    \end{itemize}
\end{theorem}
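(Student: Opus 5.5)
Part (a) follows from \Cref{thm:mero} and analytic Fredholm theory. Fix any $\gamma_0$ with $0<\gamma_0<\min\big(\cos\te_0,(1-|\sin\te_0|)\gamma(0)\big)$. Since $(1-|\sin\te_0|)\gamma(\lambda)$ is continuous and positive on $(-E,E)$, every compact subinterval $I\subset(-E,E)$ has a $\gamma\in(0,\gamma_0]$ with $I\subset U_\gamma$, where $U_\gamma$ is the set \eqref{eq:mero-domain}. On $U_\gamma$, the family $(\Id+K(\lambda))^{-1}$ is meromorphic on $\EE_\gamma$, so its poles in $I$ form a finite set. Taking the union over an exhaustion of $(-E,E)$ by such intervals gives a locally finite set $S$. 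For $\lambda\in(-E,E)\setminus S$, the operator $\Id+K(\lambda)$ is invertible on $\EE_\gamma$, and $R(\lambda)=Q(\lambda)(\Id+K(\lambda))^{-1}$ is well-defined. By \Cref{lem:source}, $f\in\EE_\gamma$, since $\gamma<\cos\te_0$.

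For the asymptotics in (b), write $R(\lambda)f=Q(\lambda)g=\chi_+R_+(\lambda)g+\chi_-R_-(\lambda)g$ and apply \eqref{eq:2z} to each term.

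\emph{The $\chi_+$ term.} On $\supp\chi_+$, \eqref{eq:2z} gives
\[
i\Big(\int_{-\infty}^{x\cdot e_+}e^{-i\lambda s}a_+(s)\,ds\Big)\psi_{+,\lambda}+\EE_{\gamma/2}.
\]
Replace the integral by $\widehat{a_+}(\lambda)$. The error is bounded by $\int_{x\cdot e_+}^\infty|a_+|\lesssim e^{-\gamma|x\cdot e_+|}$ when $x\cdot e_+>0$, which combines with the decay $e^{-|x\cdot e_+^\perp|}$ of $\phi_+$ to give an exponentially decaying error. When $x\cdot e_+<0$ on $\supp\chi_+$, the factors $e^{-|x\cdot e_+^\perp|}$ and $\chi_+$ (which needs $x\cdot e_0\ge -1$) confine us to a bounded region up to exponentially small terms. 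This uses the same geometric inequality as \eqref{eq:cpt-edge-sqrt}.

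\emph{The $\chi_-$ term.} The edge part is $i\big(\int_{-\infty}^{x_1}e^{-i\lambda s}a_-(s)\,ds\big)\psi_{-,\lambda}$. In the region $x_1\to-\infty$, where $\phi_-(x_2)$ is not small, the integral is $O(e^{-\gamma|x_1|})$. In the region $x_1\to+\infty$ on $\supp\chi_-$, $x\cdot e_0\le 1$ forces $|x_2|\gtrsim x_1$, so $\phi_-$ is exponentially small there. Hence the whole $\chi_-$ term is $\EE_\nu$ for small $\nu$. This is exactly the absence of reflection: the $A_-$ term only carries mass rightward, into the region where $\chi_-$ cuts it off.

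Together these give \eqref{eq:scattering-asymptotics} with $\nu$ smaller than $\gamma/2$ and the geometric constants.

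The main step is $|\widehat{a_+}(\lambda)|=1$, which I would prove by current conservation. Since $\lambda$ and $\kappa$ are real and $\Di$ is symmetric, $(\Di-\lambda)u=0$ with $u\in H^1_\loc$ gives $\operatorname{div}J(u,u)=2\Im\big(\ove{u}^\top(\Di-\lambda)u\big)=0$; one checks this from $\ove u^\top\sigma\cdot D u$ directly. Hence $\int_{\partial B_r}J(u,u)\cdot n\,d\ell=0$ for every $r$. Set
\[
w_-=\rho\,\psi_{-,\lambda},\qquad w_+=i\widehat{a_+}(\lambda)\chi_+\psi_{+,\lambda},\qquad u=w_-+w_++v,\quad v\in\EE_\nu\cap H^1_\loc.
\]

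\emph{Self-fluxes.} Apply \Cref{lem:edge-flux-tilted}. For $w_+$, take the cutoff direction $\zeta=e_0$ with $e_0\cdot e_+=\cos\te_0>0$; this gives flux $|\widehat{a_+}|^2(1+o(1))$. For $w_-$, the cutoff $\rho(x_1)=\chi'(-x_1)$ points along $\zeta=-e_1$, so the lemma (after $\chi\to1-\chi$) gives flux $-1+o(1)$.

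\emph{Cross terms.} The cross terms $w_-$ with $w_+$ are exponentially small on $\partial B_r$, since the two are concentrated near different rays. Terms involving $v$ are harder: $v\in\EE_\nu$ only gives $L^2$ control, not control on circles. I would handle this by averaging the flux identity over $r\in[r_0,2r_0]$, which turns the circle integrals into integrals over annuli, bounded by $\|e^{\nu|x|}v\|_{L^2}\,e^{-\nu r_0}$ times the polynomially bounded $L^2$ norms of $w_\pm$ on the annulus.

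Letting $r_0\to\infty$ gives $|\widehat{a_+}(\lambda)|^2-1=0$. I expect this averaging to be the main technical obstacle; the remaining steps are routine.
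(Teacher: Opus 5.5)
Your proposal is correct and follows the paper's proof essentially step by step. Part (a) goes through analytic Fredholm theory on $U_\gamma$ over compact subintervals; the far-field expansion is a re-derivation of \Cref{lem:outgoing-far-field} from the two edge terms of $Q(\lambda)g$; and $|\widehat{a_+}(\lambda)|=1$ comes from $\operatorname{div}J(u,u)=0$, with the self-fluxes $-1$ and $|\widehat{a_+}(\lambda)|^2$ supplied by \Cref{lem:edge-flux} and \Cref{lem:edge-flux-tilted}.

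The only variation is in how the remainder is handled, and it is cosmetic. You average the flux identity over $r\in[r_0,2r_0]$. The paper instead writes $J(u,u)=J(u_-,u_-)+J(u_+,u_+)+h$ with $h\in L^1$, folding the $u_-$--$u_+$ cross term into $h$. It then picks radii $r_n\to\infty$ with $\int_{\partial B_{r_n}}|h|\,d\ell\to0$. This is the same idea, so the step you flagged as the main obstacle is not a real one.
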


\begin{remark}
In the proof, $S$ emerges as the set of characteristic values of $K(\lambda)$, i.e. as the values of $\lambda$ such that $\Id +K(\lambda)$ is not invertible. Because of the identity
\begin{equation}
R(\lambda) =    Q(\lambda) \big( \Id + K(\lambda) \big)^{-1},
\end{equation}
one may be tempted to describe $S$ as the set of resonances of $\Di$ in $(-E,E)$ -- hence of eigenvalues of $\Di$ in $(-E,E)$. It is in fact more subtle, because $Q(\lambda)$ may cancel a pole of $\big( \Id + K(\lambda) \big)^{-1}$. In other words, characteristic values of $K(\lambda)$ in $(-E,E)$ form a larger set than eigenvalues of $\Di$.
\end{remark}

\subsection{A current identity}\label{subsec:transmission}

We start with a useful identity:

\begin{lemma}\label{lem:current}
If $v, w \in H^1_\loc$ then in the sense of distributions:
\begin{equation}\label{eq:current-identity}
    \ove{v}^\top \left(\Di w\right) \ - \ \ove{\left(\Di v\right)}^\top w = -i \cdot
    \operatorname{div} J(v,w), \qquad J(v,w) \ \text{ as in } \eqref{eq:current} .
\end{equation}
In particular, if $u \in H^1_\loc$ satisfies $\Di u = \lambda u$ for some $\lambda \in \R$, then
\begin{equation}\label{eq:current-eigen}
    \operatorname{div} J(u,u) = 0 .
\end{equation}
\end{lemma}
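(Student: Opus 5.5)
The identity is local and pointwise, so I will verify it first for smooth $v,w$ and then pass to $H^1_\loc$ by density. Write $\Di = \sigma_1 D_1 + \sigma_2 D_2 + \kappa\sigma_3$. The mass term does not contribute: $\sigma_3$ is Hermitian and $\kappa$ is real, so
\begin{equation}
\ove{v}^\top(\kappa\sigma_3 w) - \ove{(\kappa\sigma_3 v)}^\top w = \kappa\big(\ove{v}^\top\sigma_3 w - \ove{v}^\top\sigma_3 w\big) = 0
\end{equation}
pointwise a.e. This holds for $v,w\in L^2_\loc$ and $\kappa\in L^\infty$, so no regularity of $\kappa$ is needed.

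For the derivative terms, I use $D_j = -i\partial_j$, $\ove{D_j v} = i\partial_j \ove{v}$, and the fact that $\sigma_j$ is Hermitian. This gives
\begin{equation}
\ove{v}^\top \sigma_j D_j w - \ove{(\sigma_j D_j v)}^\top w = -i\,\ove{v}^\top\sigma_j\partial_j w - i\,(\partial_j\ove{v})^\top \sigma_j w = -i\,\partial_j\big(\ove{v}^\top\sigma_j w\big).
\end{equation}
Summing over $j=1,2$ yields $-i\operatorname{div}J(v,w)$ with $J$ as in \eqref{eq:current}. For $v,w\in H^1_\loc$ the product rule $\partial_j(\ove{v}^\top\sigma_j w) = (\partial_j\ove v)^\top\sigma_j w + \ove v^\top\sigma_j\partial_j w$ holds in $W^{1,1}_\loc$. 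To justify it, approximate $v,w$ in $H^1_\loc$ by smooth functions. Both sides then converge in $L^1_\loc$, hence in distributions, because products of $L^2_\loc$ functions converge in $L^1_\loc$. This establishes \eqref{eq:current-identity}.

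For \eqref{eq:current-eigen}, take $v=w=u$ with $\Di u = \lambda u$ and $\lambda\in\R$. The left side becomes $\lambda|u|^2 - \ove{\lambda}|u|^2 = 0$, so $\operatorname{div}J(u,u)=0$. The only point needing care is the density/product-rule justification; the rest is algebra.
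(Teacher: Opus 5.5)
Your proposal is correct and follows essentially the same route as the paper: a direct pointwise computation in which the mass terms cancel because $\kappa$ is real and $\sigma_3$ is Hermitian, and the derivative terms combine via Hermiticity of $\sigma_1,\sigma_2$ into $\partial_j(\ove{v}^\top\sigma_j w)$. You isolate the mass term first where the paper cancels it when adding its two displayed identities, and you explicitly justify the product rule in $W^{1,1}_\loc$ by density, which the paper leaves implicit.
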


\begin{proof}
Write $\langle a,b\rangle \defeq \ove a^\top b$ for the pairing of $\C^2$, conjugate linear in the first slot. We have
\begin{equation} \label{eq:current-second-half}
    \sum_j \sigma_j\partial_j w = i\left(\Di w - \kappa\sigma_3 w\right), \qquad
    \sum_{j=1,2} \left\langle v, \sigma_j\partial_j w\right\rangle
    = i\left\langle v, \Di w\right\rangle - i\kappa \left\langle v,\sigma_3 w\right\rangle .
\end{equation}
The matrices $\sigma_j$ being Hermitian, the same computation applied to $v$ gives
\begin{equation}\label{eq:current-first-half}
    \sum_{j=1,2}\left\langle \partial_j v, \sigma_j w\right\rangle
    = \left\langle \sum_{j=1,2}\sigma_j\partial_j v, w\right\rangle
    = \left\langle i\left(\Di v-\kappa\sigma_3 v\right),w\right\rangle
    = -i\left\langle \Di v,w\right\rangle + i\kappa\left\langle v,\sigma_3 w\right\rangle.
\end{equation}
Adding \eqref{eq:current-second-half} and \eqref{eq:current-first-half} gives \eqref{eq:current-identity}. Taking $v=w=u$, the components of $J(u,u)$ are real because $\sigma_1,\sigma_2$ are Hermitian, and \eqref{eq:current-identity} becomes 
\begin{equation}
\operatorname{div}J(u,u) = i\left(\lambda - \ove\lambda\right)|u|^2 = -2\,\Im\lambda  \cdot|u|^2  = 0.
\end{equation}
This completes the proof.
\end{proof}

\subsection{Far field asymptotics}

\begin{lemma}\label{lem:outgoing-far-field}
    Assume given
    \begin{equation}
        \lambda \in (-E,E), \qquad 0<\gamma \leq \gamma(\lambda), \qquad g \in \EE_\gamma,
    \end{equation}
and let $u = Q(\lambda) g$. Then, with $a_+$ given in \eqref{eq:scattering-asymptotics},
    \begin{equations}\label{eq:outgoing-far-field}
        u(x) = i\,\widehat{a_+}(\lambda) \cdot \chi_+(x)\,e^{i\lambda\,x\cdot e_+}\,\phi_+\left(x\cdot e_+^\perp\right) + \EE_\nu, \qquad \nu \defeq \dfrac{\gamma}{2} \sin\left( \dfrac{\pi-|\te_+|}{8} \right).
    \end{equations}
\end{lemma}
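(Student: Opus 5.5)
The argument expands $u = \chi_+ R_+(\lambda) g + \chi_- R_-(\lambda) g$ using the spatial asymptotics \eqref{eq:2z} of the model resolvents. We then show that every piece is exponentially small except the $e_+$-edge contribution on the support of $\chi_+$, which becomes the plane wave with amplitude $i\widehat{a_+}(\lambda)$. Since $\lambda$ is real and $0<\gamma\le\gamma(\lambda)\le\gamma_\pm(\lambda)$, \eqref{eq:2z} applies (with $|\Im\lambda|=0<\gamma$). It gives
\begin{equation*}
R_\pm(\lambda)g = A_\pm(\lambda) g + \EE_{\gamma/2}, \qquad A_\pm(\lambda)g(x) = i\,e^{i\lambda x\cdot e_\pm}\Big(\int_{-\infty}^{x\cdot e_\pm} e^{-i\lambda s}a_\pm(s)\,ds\Big)\phi_\pm\big(x\cdot e_\pm^\perp\big),
\end{equation*}
and $\EE_{\gamma/2}\subset \EE_\nu$ because $\nu\le\gamma/2$. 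The cutoffs $\chi_\pm$ are bounded, so the task reduces to analyzing $\chi_\pm A_\pm(\lambda)g$.

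\textbf{The outgoing term.} For $\chi_+A_+(\lambda)g$, write $\int_{-\infty}^{x\cdot e_+} = \int_\R - \int_{x\cdot e_+}^{\infty}$. The full integral is $\widehat{a_+}(\lambda)$ and produces the main term $i\widehat{a_+}(\lambda)\chi_+\psi_{+,\lambda}$. The tail satisfies $|\int_{t}^\infty e^{-i\lambda s}a_+| \le C e^{-\gamma t}\|g\|_{\EE_\gamma}$ for $t\ge 0$ and is $O(1)$ otherwise, by the Cauchy--Schwarz argument of \eqref{eq:edge-integral-bound} and \eqref{eq:edge-data}. The remainder is therefore bounded by $C e^{-\gamma\max(x\cdot e_+,0)-|x\cdot e_+^\perp|}\chi_+(x)$. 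On $\supp\chi_+$ we have $x\cdot e_0\ge -1$. If $x\cdot e_+$ is negative or small there, then $|x|$ is controlled by $|x\cdot e_+^\perp|$ plus a constant. Indeed, on the half-plane $x\cdot e_0\ge -1$ the direction $-e_+$ makes an angle at least $\pi/2-|\theta_0|$ with that half-plane's interior, so $\gamma\max(x\cdot e_+,0)+|x\cdot e_+^\perp|\ge c\gamma|x| - C$, with $c$ comparable to $\sin\big((\pi-|\theta_+|)/4\big)$. The constant $\nu$ has the same flavor, which supports this route.

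\textbf{The incoming term.} For $\chi_-A_-(\lambda)g$, the integral $\int_{-\infty}^{x\cdot e_-}$ is bounded by $Ce^{-\gamma\max(-x\cdot e_-,0)}$, again by the weighted Cauchy--Schwarz bound. The term is therefore $O\big(e^{-\gamma\max(-x_1,0)-|x_2|}\big)$ on $\supp\chi_-=\{x\cdot e_0\le 1\}$. On this half-plane the direction $e_-$ is where decay could fail, but $e_-$ points into $\{x\cdot e_0>0\}$ since $\cos\theta_0>0$. Large positive $x_1$ with $|x_2|$ small forces $x\cdot e_0 = x_1\cos\theta_0 + x_2\sin\theta_0 > 1$, so this region is excluded up to a bounded set. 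The same elementary cone geometry as in the previous step then gives $\gamma\max(-x_1,0)+|x_2|\ge c'\gamma|x|-C$ on $\supp\chi_-$. It is this piece that encodes the absence of reflection: the incoming-model edge term only propagates forward along $e_-$, into the region where $\chi_-$ vanishes.

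\textbf{Main obstacle.} The main difficulty is the geometric lower bounds with explicit constants. One has to show that on $\{\pm x\cdot e_0\ge \mp1\}$ the quantity $\gamma\max(\pm x\cdot e_\pm,0)+|x\cdot e_\pm^\perp|$ dominates $2\nu|x|$ minus a constant, tracking the angles $\pi/2\pm\theta_0$ down to the stated $\sin((\pi-|\theta_+|)/8)$. I would first polar-parametrize $x=r(\cos\alpha,\sin\alpha)$. Then I would split into the sector where $|\sin(\alpha-\theta_\pm)|\ge\sin((\pi-|\theta_+|)/8)$, where the transverse decay alone suffices, and its complement. In the complement $x$ is nearly parallel to $\pm e_\pm$, and membership in the half-plane forces the sign of $x\cdot e_\pm$ that yields the longitudinal decay $\gamma|x\cdot e_\pm|\ge\gamma r\cos(\cdot)$. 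Finally, multiplying by $e^{\nu|x|}$ and integrating gives the $L^2$ bounds, since each pointwise bound carries an extra factor $e^{-(\text{margin})|x|}$; any deficiency in integrability is absorbed by halving the rate, which accounts for the factor $\tfrac12$ in $\nu$.
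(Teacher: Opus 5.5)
Your proposal is correct and follows essentially the paper's proof. You expand $Q(\lambda)g$ via the model asymptotics \eqref{eq:2z} and split the outgoing integral into $\widehat{a_+}(\lambda)$ minus a tail. You then bound both tails by the weighted Cauchy--Schwarz estimate and close with cone geometry. The paper does that last step through an explicit covering of directions by the sectors $S_1,S_2,S_3$ (and the analogous covering in $\alpha=\arg x-\theta_+$). Your case split using the half-plane supports of $\chi_\pm$, together with the full transverse decay $e^{-|x\cdot e_\pm^\perp|}$ of $\phi_\pm$, is only a cosmetic variant and yields rates at least as good as needed.
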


\begin{proof} \textbf{1.}
By \eqref{eq:parametrix} and \eqref{eq:2z}, $u = v_++v_-+\EE_{\gamma/2}$, where
\begin{equation}\label{eq:v-pm-out}
    v_\pm(x) \defeq i\,\chi_\pm(x)\,e^{i\lambda\,x\cdot e_\pm}\left(\int_{-\infty}^{x\cdot e_\pm}e^{-i\lambda s}\,a_\pm(s)\,ds\right)\phi_\pm\left(x\cdot e_\pm^\perp\right) .
\end{equation}
By \eqref{eq:edge-data} and the Cauchy--Schwarz inequality, we have
\begin{equation}\label{eq:tail-bounds}
    \left|\int_{-\infty}^{t}e^{-i\lambda s}\,a_\pm(s)\,ds\right| \, \le \, C e^{\gamma\min(t,0)} ,
    \qquad
    \left|\int_{t}^{+\infty}e^{-i\lambda s}\,a_\pm(s)\,ds\right| \, \le \, C e^{-\gamma\max(t,0)};
\end{equation}
in particular $\widehat{a_+}(\lambda)$ is well defined. 

Let     $\epsi = \tfrac{\pi-|\te_+|}{8}$,
so that $\nu \leq \tfrac{\gamma}{2} \sin \epsi$, and define the following sets:
\begin{equations}\label{eq:covering}
    S_1 \defeq \left(-\pi+\epsi,-\epsi\right) \cup \left(\epsi,\pi-\epsi\right), \qquad
    S_2 \defeq \left(-\pi,-\tfrac\pi2-\epsi\right) \cup \left(\tfrac\pi2+\epsi,\pi\right] ,
    \\
    S_3 \defeq \left(\te_0-\tfrac\pi2+\epsi,\ \te_0+\tfrac\pi2-\epsi\right).
\end{equations}

We have the following:
\begin{equations}\label{eq:covering-bounds}
    \arg x \in S_1 \Rightarrow \left|x\cdot e_-^\perp\right| = |x|\left|\sin\arg x\right| \ge |x| \sin\epsi ,
    \\
    \arg x \in S_2 \Rightarrow x\cdot e_- = |x| \cos\arg x \le -|x| \sin\epsi ,
    \\
    \arg x \in S_3 \Rightarrow x\cdot e_0 = |x| \cos\left(\arg x - \te_0\right) \ge |x| \sin\epsi .
\end{equations}
These three sets cover $\left(-\pi,\pi\right]$. Indeed $S_1 \cup S_2$ covers $\left(-\pi,\pi\right] \setminus \left[-\epsi,\epsi\right]$, and $\left[-\epsi,\epsi\right] \subset S_3$ because $\te_0-\tfrac\pi2+\epsi<-\epsi$ and $\epsi<\te_0+\tfrac\pi2-\epsi$. In particular, at least one of the implications \eqref{eq:covering-bounds} applies to every $x \neq 0$.

\begin{figure}[t]
\centering
\begin{minipage}[c]{0.40\textwidth}
    \centering
    \includegraphics[width=\linewidth]{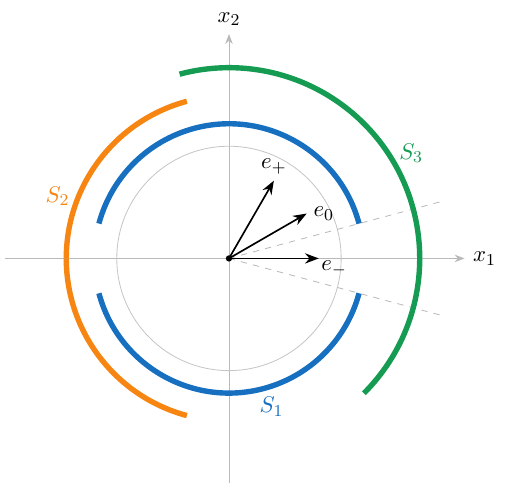}
\end{minipage}\hfill
\begin{minipage}[c]{0.55\textwidth}
\caption{The sets $S_1, S_2$ and $S_3$.}
\label{fig:covering}
\end{minipage}
\end{figure}

\textbf{2.} The function $\chi_-$ vanishes on $\left\{x\cdot e_0\ge1\right\}$, so that $\chi_-(x) \le C e^{\min\left(-x\cdot e_0,\,0\right)}$; together with the first bound of \eqref{eq:tail-bounds} and with $\left|\phi_-(\tau)\right|\le Ce^{-|\tau|}$, this gives
\begin{equation}\label{eq:v-minus-bound}
    \left|v_-(x)\right| \; \le \; C\exp\left(\gamma\min\left(-x\cdot e_0,0\right) + \gamma\min\left(x\cdot e_-,0\right) - \gamma\left|x\cdot e_-^\perp\right|\right) .
\end{equation}
The three terms of that exponent are nonpositive, and by \eqref{eq:covering-bounds} at least one of them is less than $-\gamma\sin\epsi\,|x|$. Hence $\left|v_-(x)\right| \le Ce^{-\gamma\sin\epsi\,|x|}$, and $v_- \in \EE_\nu$ since $\nu<\gamma\sin\epsi$.

\textbf{3.} We write
\begin{equations}\label{eq:v-plus-split}
    v_+(x) = i\,\widehat{a_+}(\lambda)\cdot\chi_+(x)\,e^{i\lambda\,x\cdot e_+}\phi_+\left(x\cdot e_+^\perp\right) \; - \; T_+(x),
    \\
    T_+(x) \defeq i\,\chi_+\,e^{i\lambda\,x\cdot e_+}\left(\int_{x\cdot e_+}^{+\infty}e^{-i\lambda s}a_+(s)\,ds\right)\phi_+\left(x\cdot e_+^\perp\right) .
\end{equations}
The second bound of \eqref{eq:tail-bounds} gives
\begin{equation}\label{eq:T-bound}
    \left|T_+(x)\right| \; \le \; C\,\chi_+(x)\exp\left(-\gamma\max\left(x\cdot e_+,0\right) - \gamma\left|x\cdot e_+^\perp\right|\right) .
\end{equation}
Set $\alpha \defeq \arg x - \te_+$ and recall that $\te_+ = 2\te_0$. We have the following:
\begin{equations}\label{eq:covering-bounds-plus}
    \alpha \in S_1 \Rightarrow \left|x\cdot e_+^\perp\right| = |x|\left|\sin\alpha\right| \ge |x|\sin\epsi ,
    \\
    \left|\alpha\right| \le \epsi \Rightarrow x\cdot e_+ = |x|\cos\alpha \ge |x|\cos\epsi ,
    \\
    \left|\alpha\right| \ge \pi-\epsi \Rightarrow x\cdot e_0 = |x|\cos\left(\alpha+\te_0\right) \le -\cos\left(\left|\te_0\right|+\epsi\right)|x| .
\end{equations}
Since $S_1$, $\left[-\epsi,\epsi\right]$ and $\left\{\left|\alpha\right| \ge \pi-\epsi\right\}$ cover all possible arguments, at least one of the implications \eqref{eq:covering-bounds-plus} applies to every $x \neq 0$. In the third case $x\cdot e_0 \le -1$ as soon as $|x| \ge \cos\left(\left|\te_0\right|+\epsi\right)^{-1}$, so that $\chi_+(x) = 0$ there. Since $\epsi<\tfrac\pi4$, each region therefore contributes $O\!\left(e^{-\gamma\sin\epsi|x|}\right)$ to \eqref{eq:T-bound}; hence $T_+ \in \EE_\nu$, again because $\nu<\gamma\sin\epsi$. Going back to 
\eqref{eq:v-plus-split}, we conclude that 
\begin{equations}
    v_+(x) = i\,\widehat{a_+}(\lambda)\cdot\chi_+(x)\,e^{i\lambda\,x\cdot e_+}\phi_+\left(x\cdot e_+^\perp\right) + \EE_\nu.
\end{equations}
This completes the proof.
\end{proof}

\begin{proof}[Proof of \Cref{thm:scattering}] \textbf{1.} Fix $\delta > 0$. It suffices to show that the conclusions (a) and (b) of \Cref{thm:scattering} hold for all but finitely many values of $\lambda$ in $[-E+\delta,E-\delta]$.

Define
\begin{equation}\label{eq:1d}
    \gamma \defeq \frac{1}{2}\min\left(\cos\te_0, \ \left(1-\left|\sin\te_0\right|\right)\gamma\left(E-\frac{\delta}{2}\right)\right).
\end{equation}
Note that $0 < \gamma < \left(1-\left|\sin\te_0\right|\right)\gamma(0)$. In particular, the conclusion of Step 2 in the proof of \Cref{thm:mero} applies. Specifically, for
\begin{equation}
    \lambda \in U_\gamma  = \left\{ \lambda \in \C \ : \ |\lambda| < E, \ \ \left|\Im\lambda\right| < \gamma < \left(1-\left|\sin\te_0\right|\right)\gamma(\lambda) \right\},
\end{equation}
the operators 
$\left(\Id+K(\lambda)\right)^{-1} : \EE_\gamma \rightarrow \EE_\gamma$ depend meromorphically on $\lambda$. 

Moreover, $[-E+\delta,E-\delta]$ is a compact subset of $U_\gamma$: indeed, $U_\gamma$ is an open set which contains $(-E+\delta/2,E-\delta/2)$. Therefore, $\left(\Id+K(\lambda)\right)^{-1}$ has at most finitely many poles in $[-E+\delta,E-\delta]$. In other words: for all but finitely many values of $\lambda $ in $[-E+\delta,E-\delta]$,
\begin{itemize}
    \item[(i)] $\Id + K(\lambda) : \EE_\gamma \rightarrow \EE_\gamma$ is invertible; 
    \item[(ii)] and by \eqref{eq:mero-formula}, $R(\lambda) : \EE_\gamma \rightarrow L^2_\loc$ is well-defined;
    \item[(iii)] hence conclusion (a) in \Cref{thm:scattering} holds.
\end{itemize}
In the rest of the proof, we assume that $\lambda \in [-E+\delta,E-\delta]$ is away from this finite subset of values, and we prove that conclusion (b) in \Cref{thm:scattering} holds.

    \textbf{2.} Let $u_-(x) = \rho(x_1) \psi_{-,\lambda}(x)$. Because of \eqref{eq:scattering-state} and of \eqref{eq:mero-formula}, $u - u_- = R(\lambda)f$ where $f \in \EE_\gamma$ is given by \eqref{eq:source}. Because $\Id + K(\lambda)$ is invertible by our choices of $\gamma$ and $\lambda$, we have $u - u_-= Q(\lambda)g$ with
    \begin{equation}
 g \defeq \big(\Id+K(\lambda)\big)^{-1}f \in \EE_\gamma.
    \end{equation}
     In particular, $u-u_-$ satisfies the asymptotics \eqref{eq:outgoing-far-field}. By \Cref{lem:outgoing-far-field} -- note that $\gamma \leq \gamma(\lambda)$ by \eqref{eq:1d} -- we have:
    \begin{equation}
        u = u_-+u_+ + \EE_\nu, \qquad u_+(x) \defeq i\,\widehat{a_+}(\lambda)\cdot \chi_+(x)\,e^{i\lambda\,x\cdot e_+}\,\phi_+\left(x\cdot e_+^\perp\right) ,
        \qquad \nu = \frac{\gamma}{2}\sin \left( \dfrac{\pi-|\te_+|}{8} \right).
    \end{equation}
This proves \eqref{eq:scattering-asymptotics}. All that remains to show is the identity $\left|\widehat{a_+}(\lambda)\right| = 1$.

\textbf{3.} By the Cauchy--Schwarz inequality, $\EE_\nu \subset L^1$, $\EE_\nu \cdot \EE_\nu \subset L^1$, and $L^\infty \cdot \EE_\nu \subset L^1$. Therefore, expanding $J(u,u)$ according to $u = u_-+u_++\EE_\nu$ and using $u_\pm \in L^\infty$ produces:
\begin{equation}\label{eq:1v}
    J(u,u) = J(u_-,u_-) + 2\Re J(u_-,u_+) + J(u_+,u_+) +L^1.
\end{equation}

\textbf{4.} Since $u$ solves $\left(\Di-\lambda\right)u = 0$ with $\lambda \in (-E,E)$ real, \Cref{lem:current} gives $\operatorname{div}J(u,u) = 0$, and hence by the divergence theorem, we have for each $r > 0$:
\begin{equation}\label{eq:1a}
   0 = \int_{B_r(0) } \operatorname{div} J(u,u) dx = \int_{\partial B_r(0) } J(u,u) \cdot n \ d\ell.
\end{equation}
We now exploit this identity by evaluating each boundary integral corresponding to the decomposition \eqref{eq:1v} of $J(u,u)$. 

After adapting \Cref{lem:edge-flux} to the swapped asymptotics $\rho(x_1) = 1$ near $-\infty$ and $0$ near $+\infty$, the divergence theorem applied to $J(u_-,u_-)(x) = \rho(x_1)^2\left|\phi_-(x_2)\right|^2\,e_1$ gives
\begin{equation}\label{eq:B-v-v-flux}
    \int_{\partial B_r(0)} J(u_-,u_-)\cdot n \ d\ell
    \; = \; \int_{B_r(0)}\left(\rho^2\right)'(x_1)\left|\phi_-(x_2)\right|^2 dx
    \; \xrightarrow[\ r \rightarrow +\infty\ ]{} \; \int_\R \left(\rho^2\right)'(x_1)\,dx_1 \; = \; -1 .
\end{equation}
Likewise,
\Cref{lem:edge-flux-tilted} shows that
\begin{equation}\label{eq:Juplus}
    \int_{\partial B_r(0)} J(u_+,u_+)\cdot n \ d\ell
    \; \xrightarrow[\ r \rightarrow +\infty\ ]{} \; \big| \widehat{a_+}(\lambda) \big|^2.
\end{equation}

\textbf{5.} We now study the cross term $J(u_-,u_+)$. The function $u_-$ has support in $x_1 \leq 0$ and the function $u_+$ has support in $x \cdot e_0 \geq -1$. Therefore, on the support of $J(u_-,u_+)$, we have
\begin{equation}
    |x_1|\cos\te_0 = -x_1\cos\te_0 = -x \cdot e_0 + x_2\sin\te_0 \leq 1 + |x_2|.
\end{equation}
Since moreover $u_-$ decays like $e^{-|x_2|}$, we deduce that 
\begin{equation}
    \big|J(u_-,u_+)(x)\big| \leq C e^{-|x_2|} \leq C \exp\left( {-\tfrac{|x_2|}{2} - \tfrac{|x_1|\cos\te_0}{2}} \right).
\end{equation}
In particular, $J(u_-,u_+) \in L^1$. Going back to \eqref{eq:1v}, we obtain:
\begin{equation}\label{eq:1e-split}
    J(u,u) = J(u_+,u_+) + J(u_-,u_-) + h, \qquad h \in L^1.
\end{equation}

\textbf{6.} Note that
\begin{equation}
   +\infty > \int_{\R^2} |h| = \int_{\R^+} \left(\int_{|x| = r} |h| d\ell\right) dr.
\end{equation}
Therefore there exists a sequence $r_n \rightarrow + \infty$ such that
\begin{equation}\label{eq:h-flux}
    \int_{\partial B_{r_n}(0)} \left|h\right| \, d\ell \; \xrightarrow[\ n\to+\infty\ ]{} \; 0 .
\end{equation}

We deduce that as $n \rightarrow \infty$,
\begin{align}
   0 = \int_{\partial B_{r_n}(0)} J(u,u)\cdot n \ d\ell & = \int_{\partial B_{r_n}(0)} J(u_-,u_-)\cdot n \ d\ell + \int_{\partial B_{r_n}(0)} J(u_+,u_+)\cdot n \ d\ell   + o(1)
   \\
   & 
   \rightarrow -1+\big|\widehat{a_+}(\lambda)\big|^2.
\end{align}
In the first identity, we used \eqref{eq:1a}; in the second, we used the decomposition \eqref{eq:1e-split} as well as the asymptotics \eqref{eq:h-flux} for $h$; and in the third, we used the asymptotics \eqref{eq:B-v-v-flux} and \eqref{eq:Juplus} for $J(u_-,u_-)$ and $J(u_+,u_+)$, respectively. We therefore conclude that $| \widehat{a_+}(\lambda) |^2=1$. This completes the proof.
\end{proof}

\section{Numerics}\label{sec:numerics2}

 We illustrate the results of \Cref{sec:bent} and \Cref{sec:scattering} dynamically, by solving numerically the Dirac evolution $\left(D_t + \Di\right)u = 0$ for four domain walls.

\subsection{The domain walls}\label{subsec:numerics-walls2}

All four functions $\kappa$ are bent domain walls in the sense of \Cref{def:bent}. They are step functions, taking the values $\pm1$:
\begin{equation}\label{eq:numerics-walls2}
\begin{aligned}
    \text{(a)} \quad & \kappa = 1 - 2\cdot\1_{\{x_1<0,\ x_2>0\}} , && \text{a right-angle corner};
    \\
    \text{(b)} \quad & \kappa = 1 - 2\cdot\1_{\{x_2>0,\ x_1+2x_2<0\}} , && \text{a hairpin};
    \\
    \text{(c)} \quad & \kappa = 1 - 2\cdot\1_{\{x_2>0,\ x_1<0\} \cup \{x_2>3,\ x_1>0\}} , && \text{a staircase};
    \\
    \text{(d)} \quad & \kappa = 1 - 2\cdot\1_{\{x_2>0\}} + 2\cdot\1_{\{0<x_1<1,\ 0<x_2<3\}} , && \text{a tooth}.
\end{aligned}
\end{equation}
The interfaces, that is the boundaries between $\{\kappa = 1\}$ and $\{\kappa = -1\}$, are drawn in the first row of \Cref{fig:numerics-evolution2}. All four contain the horizontal branch $\{x_2=0,\ x_1 \le -1\}$. They all satisfy \Cref{def:bent}. Indeed, (a), (b) and (d) are examples of the form \eqref{eq:bent-examples}; as for (c), we have 
\begin{equation}
    1 - 2\cdot\1_{\{x_2>0,\ x_1<0\} \cup \{x_2>3,\ x_1>0\}} = \begin{cases} \1_{x_2 <0} - \1_{x_2>0} \quad \text{ when $x_1 < 0$} \\  
        \1_{x_2 < 3} - \1_{x_2>3} \quad \text{ when $x_1 > 0$}
    \end{cases},
\end{equation}
so it corresponds to the junction of two horizontal (yet distinct) domain walls.

Along the incoming branch, that is for $x_1 \le -1$, all four walls agree with a horizontal domain wall $\kappa_*(x) = k_*(x_2)$ in the sense of \Cref{def:domain-wall}. We take for $u_0$ the corresponding edge state \eqref{eq:phi0} modulated by a Gaussian centered at $x_1 = -10$:
\begin{equation}\label{eq:u0-numerics2}
    u_0(x) \defeq \frac{1}{\sqrt{2\pi}}\,e^{-\frac{(x_1+10)^2}{2}}\,\phi_*(x_2) .
\end{equation}
It is the same for the four walls \eqref{eq:numerics-walls2}, since all four agree with $\kappa_*$ near $x_1 = -10$. The function $u_0$ is a superposition of the edge states \eqref{eq:psi-lambda}:
\begin{equation}\label{eq:u0-spectral2}
    u_0(x) = \frac{1}{2\pi}\int_\R e^{-\frac{\lambda^2}{2}}\,e^{10\,i\lambda}\,\psi_{\lambda,*}(x)\,d\lambda .
\end{equation}
It is not perfectly spectrally supported in $(-1,1)$: the spectral weight along the edge branch is proportional to $e^{-\lambda^2}$, so the fraction of $u_0$ carried by energies in the gap is $\operatorname{erf}(1) \simeq 84\%$. Around $16\%$ of $u_0$ is carried outside the gap.

\subsection{The scheme}\label{subsec:numerics-scheme2}

Space is confined to $[-24,24)^2$, with $512$ points in each direction. We perform the simulation using the splitting
\begin{equation}
    \Di = T + M, \qquad T \defeq D_1\sigma_1+D_2\sigma_2, \qquad M \defeq \kappa(x)\sigma_3.
\end{equation}
The flows of both $M$ and $T$ are available in closed form:
\begin{equation}\label{eq:split-flows2}
    e^{itM} = \begin{bmatrix} e^{it\kappa(x)} & 0 \\ 0 & e^{-it\kappa(x)} \end{bmatrix},
    \qquad
    e^{it\,\sigma\cdot\xi} = \cos\left(|\xi|t\right)\Id + i\,\frac{\sin\left(|\xi|t\right)}{|\xi|}\,\sigma\cdot\xi,
\end{equation}
where the second flow is expressed on the Fourier side.
To compute $e^{-it\Di}$, we iterate
\begin{equation}\label{eq:strang2}
    e^{-it\Di} \; = \; \left( e^{-i\frac{\delta}{2}M}\,e^{-i\delta T}\,e^{-i\frac{\delta}{2}M} \right)^{t/\delta} \; + \; O\left(\delta^2\right).
\end{equation}
Each factor is unitary, so the scheme is norm-preserving and unconditionally stable.

\subsection{Results}\label{subsec:numerics-results2}

\Cref{fig:numerics-evolution2} shows $\left|u(t,x)\right|$ at $t=6,12,18,24$ for the four walls \eqref{eq:numerics-walls2}. The packet travels along the incoming branch at speed one, reaches the defect at $t \simeq 10$, and in every case it follows the interface around the defect and leaves along the outgoing branch, again at speed one.

\begin{figure}[!ht]
\centering
\includegraphics[width=0.94\textwidth]{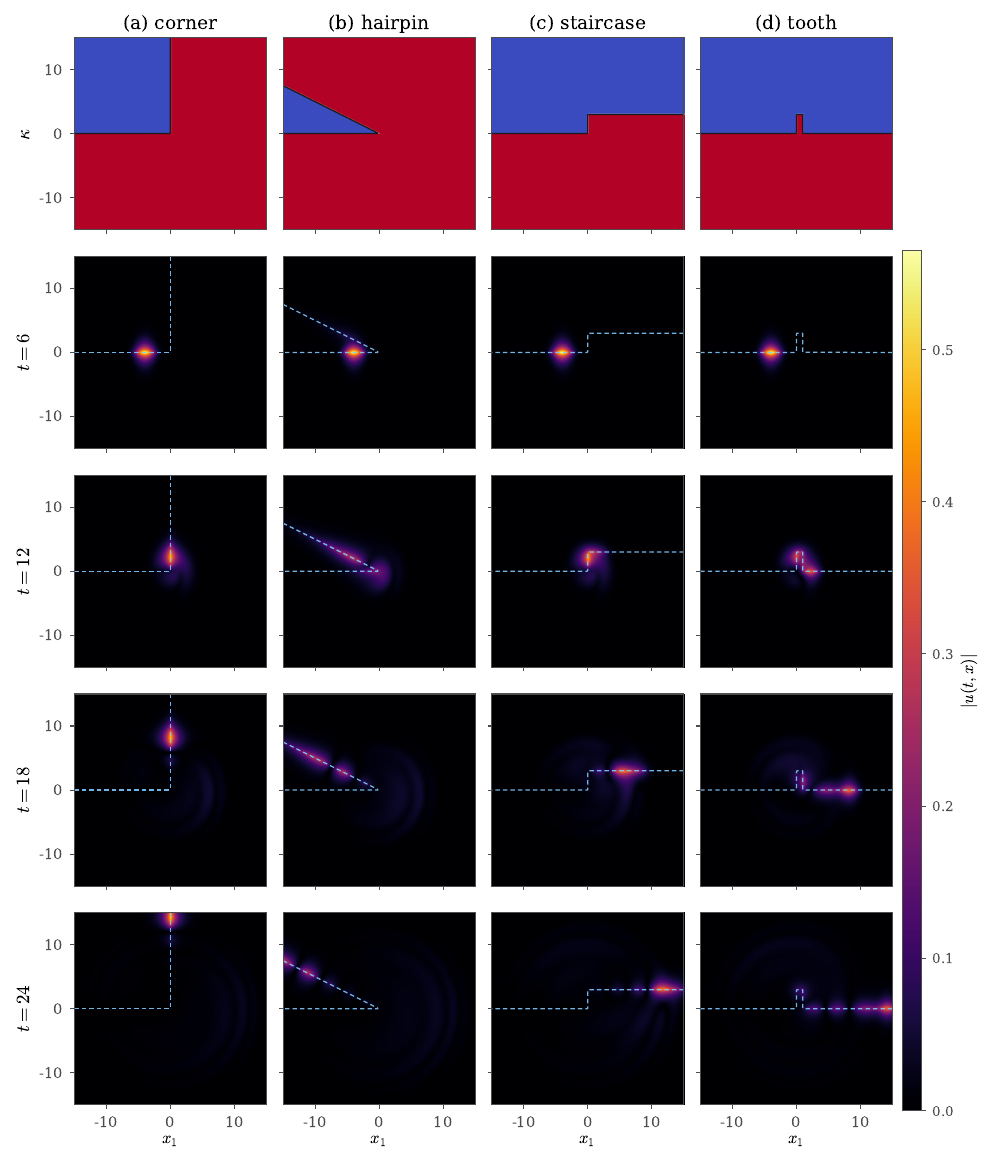}
\caption{The Dirac evolution $\left(D_t + \Di\right)u = 0$ for the four domain walls of \eqref{eq:numerics-walls2}. Columns: (a) right-angle corner, (b) hairpin, (c) staircase, (d) tooth. First row: the mass $\kappa$, from $-1$ (blue) to $+1$ (red), with the interface in black. Remaining rows: $\left|u(t,x)\right|$ at $t=6,12,18,24$, with the interface dashed. The packet reaches the defect at $t \simeq 10$ and follows the interface around it. A faint circular wave visible for $t \geq 12$ gives the bulk radiation as the solution passes the defect.}
\label{fig:numerics-evolution2}
\end{figure}

Two features are common to the four runs. First, nothing measurable comes back. The fraction of $\|u(t)\|^2_{L^2}$ that has returned to the incoming branch at $t=24$, measured on $\{x_1 \le -8,\ |x_2| \le 2.5\}$, is:
\begin{equation}
    \text{$3\cdot10^{-4} \ $ for (a), $\quad$ $5\cdot10^{-4} \ $ for (b), $\quad$ $2\cdot10^{-4} \ $ for (c), $\quad$ $5\cdot10^{-4} \ $ for (d).}
\end{equation}
 This is the dynamical counterpart of $\left|\widehat{a_+}(\lambda)\right| = 1$ in \Cref{thm:scattering}: at (most) energies in the gap the defect fully transmits.

Second, the lost mass is lost to the bulk, not to reflection. Each defect emits a circular wave at the moment the packet passes it (heavier when the turn is sharper). The fraction of $\|u(t)\|^2_{L^2}$ lying outside a strip of width $3$ around the interface at $t=24$ is
\begin{equation}\label{eq:numerics-bound2}
    0.078 \ \text{ for (a)}, \qquad 0.133 \ \text{ for (b)}, \qquad 0.104 \ \text{ for (c)}, \qquad 0.077 \ \text{ for (d)} .
\end{equation}
None of these four losses exceeds $\operatorname{erfc}(1) \simeq 0.157$, the fraction of the packet that sits outside the gap.

\end{document}